\newcommand{\Bsf}{\mathbf{B}}  
\newcommand{\Vsf}{\mathbf{T}^{\le 5}} 
\newcommand{\Wsf}{\mathbf{T}^{\ge 6}}
\newcommand{\Gsf}{\mathbf{G}^{\le 5}} 
\newcommand{\Msf}{\mathbf{H}^{\le 5}} 
\newcommand{\Hsf}{\mathbf{G}^{\ge 6}}
\renewcommand{\le}{\leqslant}
\renewcommand{\ge}{\geqslant}
\newcommand{\ol}{\overline}
\newcommand{\eps}{\varepsilon}
\newcommand{\emp}{\emptyset}
\newcommand{\Sig}{\Sigma}
\newcommand{\noin}{\noindent}
\newcommand{\bi}{\begin{itemize}}
\newcommand{\ei}{\end{itemize}}
\newcommand{\be}{\begin{enumerate}}
\newcommand{\ee}{\end{enumerate}}
\newcommand{\bd}{\begin{description}}
\newcommand{\ed}{\end{description}}
\newcommand{\bq}{\begin{quote}}
\newcommand{\eq}{\end{quote}}
\newcommand{\txt}[1]{\mbox{ #1 }}
\newcommand{\cA}{{\mathcal A}}
\newcommand{\cD}{{\mathcal D}}
\newcommand{\cN}{{\mathcal N}}
\newcommand{\cP}{{\mathcal P}}
\newcommand{\cR}{{\mathcal R}}
\newcommand{\cT}{{\mathcal T}}
\newcommand{\lraL}{{\mathbin{\approx_L}}}
\newcommand{\timg}{\mbox{rng}}
\newcommand{\qedb}{\hfill$\blacksquare$}
\title{Complexity of Suffix-Free Regular Languages}
\author{Janusz~Brzozowski\inst{1} \and Marek Szyku{\l}a \inst{2}}
\titlerunning{Complexity of Suffix-Free Languages}
\authorrunning{J. Brzozowski and M. Szyku{\l}a}   
\institute{David R. Cheriton School of Computer Science, University of Waterloo, \\
Waterloo, ON, Canada N2L 3G1\\
\{{\tt brzozo@uwaterloo.ca}\}
\and
Institute of Computer Science, University of Wroc{\l}aw,\\
Joliot-Curie 15, PL-50-383 Wroc{\l}aw, Poland\\
\{{\tt msz@cs.uni.wroc.pl}\}
}
\begin{document}
\maketitle
\begin{abstract}
We study various complexity properties of suffix-free regular languages.
The \emph{quotient complexity} of a regular language $L$ is the number of left quotients of $L$; this is the same as the \emph{state complexity} of $L$, which is the number of states in a minimal deterministic finite automaton (DFA) accepting $L$.
A regular language $L'$ is a \emph{dialect} of a regular language $L$ if it differs only slightly from $L$ (for example, the roles of the letters of $L'$ are a permutation of the roles of the letters of $L$).
The \emph{quotient complexity of an operation} on regular languages is the maximal quotient complexity of the result of the operation expressed as a function of the quotient complexities of the operands. 
A sequence $(L_k,L_{k+1},\dots)$ of regular languages in some class ${\mathcal C}$, where $n$ is the quotient complexity of $L_n$, is called a \emph{stream}.
A stream is \emph{most complex} in class ${\mathcal C}$ if its languages $L_n$ meet the complexity upper bounds for all basic measures, namely, they meet the quotient complexity upper bounds for star and reversal;  they have largest syntactic semigroups;  they  have the maximal numbers of atoms, each of which has maximal quotient complexity; and (possibly together with their dialects $L'_m$)  they  meet the quotient complexity upper bounds for boolean operations and product (concatenation).
It is known that there exist such most complex streams in the class of regular languages, in the class of prefix-free languages, and also in the classes of right, left, and two-sided ideals.
In contrast to this, we prove that there does not exist a most complex stream in the class of suffix-free regular languages.
However, we do exhibit one ternary suffix-free stream that meets the bound for product and whose restrictions to binary alphabets meet the bounds for star and boolean operations. 
We also exhibit a quinary stream that meets the bounds for boolean operations, reversal, size of syntactic semigroup, and atom complexities. 
Moreover, we solve an open problem about the bound for the product of two languages of quotient complexities $m$ and $n$ in the binary case by showing that it can be met for infinitely many $m$ and $n$.
Two transition semigroups play an important role for suffix-free languages:
semigroup $\Vsf(n)$ is a suffix-free semigroup that has maximal cardinality for $2\le n\le 5$, while semigroup
$\Wsf(n)$ has maximal cardinality for $n=2,3$, and $n\ge 6$.
We prove that all witnesses meeting the bounds for the star and the second witness in a product must have transition semigroups in $\Vsf(n)$. 
On the other hand, witnesses meeting the bounds for reversal, size of syntactic semigroup, and the complexity of atoms must have semigroups in $\Wsf(n)$. 
\medskip

\noin
{\bf Keywords:}
most complex, regular language, state complexity, suffix-free, syntactic complexity, transition semigroup
\end{abstract}

\section{Introduction}

We study complexity properties of suffix-free regular languages.
A much shorter preliminary version of these results appeared in~\cite{BrSz15b} without any proofs.
\smallskip

\noin{\bf Motivation}
The \emph{state complexity} $\kappa(L)$ of a regular language $L$ over an alphabet $\Sig$ is the number of states in a minimal deterministic finite automaton (DFA) with input alphabet $\Sig$ recognizing $L$. 
The state complexity of a regularity preserving unary operation $\circ$ on regular languages is the maximal value of $\kappa(L_n^\circ)$ as a function of $n$, where $L_n$ varies over all regular languages $L_n$ with state complexity $n$.
Similarly, the state complexity of a regularity preserving binary operation $\circ$ on regular languages is the maximal value of $\kappa(K_m\circ L_n)$ as a function of $m$ and $n$, where $K_m$ and $L_n$ vary over all regular languages of state complexities $m$ and $n$, respectively.
Of special interest are the state complexities of common operations on regular languages.
The state complexities of union, product (concatenation), (Kleene) star and reversal were studied by Maslov~\cite{Mas70} in 1970, but this work was not well known for many years. 
The paper by Yu, Zhuang, and Salomaa~\cite{YZS94} inspired considerable interest in these problems and much research has been done on this topic in the past 22 years.
The state complexity of an operation gives a worst-case lower bound on the time and space complexities of the operation. For this reason it has been studied extensively; see~\cite{Brz10a,HoKu11,Yu01} for additional references.

Consider, for example, the product of two regular languages. To find its state complexity we need to establish an upper bound for it, and find two languages of state complexities $m$ and $n$, respectively that meet this bound. It is known that $(m-2)2^n+2^{n-1}$ is a tight upper bound on product. The languages that meet this bound are called \emph{witnesses}.
In general, different witnesses have been used for the two arguments and for different operations. However, Brzozowski~\cite{Brz13} has shown that one  witness and its slightly modified version called a \emph{dialect} is sufficient to meet the bounds on all binary boolean operations, product, star, and reversal. The DFA of this witness is shown in Figure~\ref{fig:regular}.
Let the language recognized by this DFA be $L_n(a,b,c)$, and let $L_n(b,a,c)$ be the language of the DFA obtained from that of Figure~\ref{fig:regular} by interchanging the roles of $a$ and $b$. 
Then $L_m(a,b,c)$ and $L_n(b,a,c)$ meet the bound $mn$ for all binary boolean operations, $L_m(a,b,c)$ and $L_n(a,b,c)$ meet the bound $(m-1)2^n+2^{n-1}$ for product, and $L_n(a,b,c)$ meets the bounds $2^{n-1} +2^{n-2}$ for star and $2^n$ for reversal.

\begin{figure}[ht]
\unitlength 8.5pt\footnotesize
\begin{center}\begin{picture}(37,7)(0,4)
\gasset{Nh=1.8,Nw=3.7,Nmr=1.25,ELdist=0.2,loopdiam=1.5}
\node(0)(1,7){0}\imark(0)
\node(1)(8,7){1}
\node(2)(15,7){2}
\node[Nframe=n](3dots)(22,7){$\dots$}
\node(n-2)(29,7){$n-2$}
\node(n-1)(36,7){$n-1$}\rmark(n-1)
\drawloop(0){$c$}
\drawedge[curvedepth=.8,ELdist=.4](0,1){$a,b$}
\drawedge[curvedepth=.8,ELside=r](1,0){$b$}
\drawedge(1,2){$a$}
\drawloop(1){$c$}
\drawloop(2){$b,c$}
\drawedge(2,3dots){$a$}
\drawedge(3dots,n-2){$a$}
\drawloop(n-2){$b$}
\drawedge(n-2,n-1){$a$}
\drawedge[curvedepth=4.0,ELside=r,sxo=1,exo=-1](n-1,0){$a,c$}
\drawloop(n-1){$b,c$}
\end{picture}\end{center}
\caption{Minimal DFA of a most complex regular language.}
\label{fig:regular}
\end{figure}
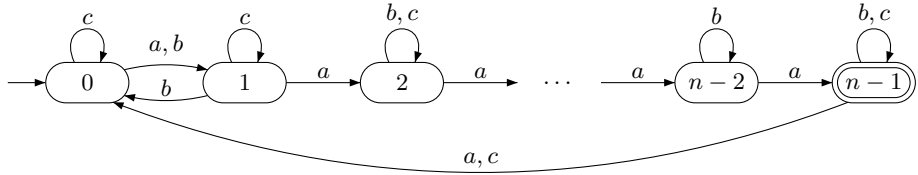

Although state complexity is a useful measure, it has some deficiencies. The language $L_n(a,b,c)$ and the language $\{a,b,c\}^{n-1}$ both have state complexity $n$, but the second language is intuitively much simpler. Secondly, all the bounds for common operations -- except reversal, which misses the bound by one state -- are also met by star-free languages~\cite{BrLi12}, where the class of star-free languages is the smallest class containing the finite languages and closed under boolean operations and product, but not star.
For these reasons Brzozowski~\cite{Brz13} suggested additional complexity measures: the maximal size of the syntactic semigroup of the language and the state complexities of atoms (discussed later). It turns out that the language defined in Figure~\ref{fig:regular} also meets these bounds. For these reasons this language has been called a \emph{most complex regular language}.
The added measure of the size of the syntactic semigroup distinguishes well between $L_{n}(a,b,c)$ and $\{a,b,c\}^{n-1}$ since the semigroup of the first language has $n^n$ elements, while the second has $n-1$ elements.

In previous work on state complexity it was always assumed that the two arguments in product and binary boolean operations are restricted to be over the same alphabet. But Brzozowski pointed out in 2016~\cite{Brz16} that operations on languages over different alphabets, \emph{unrestricted operations}, have higher tight upper bounds. It was shown that witnesses very similar to those of Figure~\ref{fig:regular} also meet these bounds~\cite{Brz16,BrSiJALC}, and thus are most complex for both restricted and unrestricted operations. For suffix-free languages the restricted and unrestricted complexities are the same.

Most complex languages are useful for testing the efficiency of systems.
Hence to check the maximal size of the objects that a system can handle, we can use most complex languages. It is certainly simpler to have just one or two worst-case examples.

The question now arises whether most complex languages also exist in subclasses of regular languages. A natural source of subclasses is obtained from the notion of convexity.
Convex languages were introduced by Thierrin~\cite{Thi73} and studied later in~\cite{AnBr09}. They can be defined with respect to arbitrary binary relations on an alphabet $\Sig^*$, but the relations ``is a prefix of'', ``is a suffix of'' and ``is a factor of'' turned out to be of considerable interest, where if $w=xyz\in \Sig^*$, then $x$ is a \emph{prefix} of $w$, $y$ is a \emph{factor}, and $z$ is a \emph{suffix}. 
A language $L$ is \emph{prefix-convex} if, whenever $w=xyz$ and $w$ and $x$ are in $L$, then so is $xy$. \emph{Factor-convex} and \emph{suffix-convex} languages are defined in a similar way.
Brzozowski began studying the complexity properties of these languages in~\cite{Brz10}.

The class of prefix-convex languages has four natural subclasses: \emph{right ideals} (languages $L$ satisfying  $L=L\Sig^*$), \emph{prefix-closed} languages (where $w$ in $L$ implies that every prefix of $w$ is in $L$), 
\emph{prefix-free} languages (where $w$ in $L$ implies that no prefix of $w$ other than $w$ is in $L$), and \emph{prefix-proper} languages that are prefix-convex but do not belong to any one of the three special subclasses.
Similarly, there are four subclasses of suffix-free languages: \emph{left ideals} (languages $L$ satisfying  $L=\Sig^*L$), \emph{suffix-closed} languages, 
\emph{suffix-free} languages, and \emph{suffix-proper} languages.
Finally, there are four subclasses of factor-free languages: \emph{two-sided ideals} (languages $L$ satisfying  $L=\Sig^*L\Sig^*$), \emph{factor-closed} languages, 
\emph{factor-free} languages, and \emph{factor-proper} languages.
Ideals appear in pattern matching~\cite{CrHa90}: if we are looking for all the words in a given text that begin with words in a pattern language $L$, then we are dealing with the right ideal $L\Sig^*$, and similar statements apply to left and two-sided ideals. Prefix-closed (respectively, suffix-closed, factor closed) languages are complements of right ideals (respectively, left ideals, two-sided ideals). Prefix-free (respectively, suffix-free, factor-free) languages, other than the language consisting of the empty word, are codes~\cite{BPR09}, and have many applications, particularly in cryptography, data compression and error correction.

Most complex left, right and two-sided ideals for restricted operations were found in~\cite{BDL17}, and unrestricted operations were added in~\cite{BrSiJALC}.
Most complex prefix-closed, and prefix-free languages were exhibited in~\cite{BrSi16AC} and proper prefix-convex languages in~\cite{BrSi16D}.
Most complex suffix-closed languages were found in~\cite{BrSi16L}.
In contrast to these results, the first example of a subclass of the class of regular languages that does not have a most complex language is that of suffix-free languages.
This result was reported in~\cite{BrSz15}, and is the subject of the present paper.
It is also known (unpublished result) that most complex proper suffix-convex languages do not exist.
\smallskip

\noin
{\bf Quotient Complexity} A basic complexity measure of a regular language $L$ over an alphabet $\Sig$ is the number $n$ of distinct left quotients of $L$, where a \emph{(left) quotient} of $L$ by a word $w\in\Sig^*$ is $w^{-1}L=\{x\mid wx\in L\}$. 
We denote the set of quotients of $L$ by $K=\{K_0,\dots,K_{n-1}\}$, where $K_0=L=\eps^{-1}L$ by convention.
Each quotient $K_i$ can be represented also as $w_i^{-1}L$, where $w_i\in\Sig^*$ is such that
$w_i^{-1}L=K_i$.
The number of quotients of $L$ is its \emph{quotient complexity}~\cite{Brz10a}.
Since the set of quotients of $L$ is the same as the number of states in a minimal DFA recognizing $L$, the quotient complexity of $L$ is the same as its state complexity; however, quotient complexity suggests language-theoretic methods, whereas state complexity deals with automata.
The quotient complexities of unary operation and binary operations are defined analogously to state complexities.

To establish the state/quotient complexity of a regularity preserving unary operation $\circ$ we need a sequence $(L_n, n\ge k)=(L_k,L_{k+1},\dots)$, called a \emph{stream}, of witness languages that meet this bound; here $k$ is usually some small integer because the bound may not apply for $n<k$.
The languages in a stream are normally defined in the same way, differing only in the parameter $n$. 
For example, the languages of the DFAs of Figure~\ref{fig:regular} define the stream
$(L_n(a,b,c)\mid n\ge 3)$.
Similarly, to establish the state/quotient complexity of a regularity preserving binary operation $\circ$ on regular languages we have to find two streams $(K_m, m \ge h)$ and $(L_n, n\ge k)$ of languages meeting this bound.

The state/quotient complexity of suffix-free languages was examined in~\cite{CmJi12,HaSa09,JiOl09}.

We also extend the notions of \emph{maximal complexity}, \emph{stream}, and \emph{witness} to DFAs. 

\noin
{\bf Syntactic Complexity}
A second measure of complexity of a regular language is its syntactic complexity.
Let $\Sig^+$ be the set of non-empty words of $\Sig^*$.
The \emph{syntactic semigroup} of $L$ is the set of equivalence classes of the Myhill congruence $\lraL$ on $\Sig^+$  defined by
$x~\lraL~y$ if and only if $uxv\in L  \Leftrightarrow uyv\in L\mbox { for all } u,v\in\Sig^*.$
The syntactic semigroup of $L$ is isomorphic to the \emph{transition semigroup} of a minimal DFA $\cD$ recognizing $L$~\cite{Pin97}, which is the semigroup of transformations of the state set of $\cD$ induced by non-empty words.
The \emph{syntactic complexity} of $L$ is the cardinality of its syntactic/transition semigroup.

Holzer and K\"onig~\cite{HoKo04}, and independently Krawetz, Lawrence and Shallit~\cite{KLS05} studied the syntactic complexity in the classes of unary and binary regular languages.
This problem was also solved for the classes of right ideals~\cite{BrSiJALC,BrYe11}, 
left ideals~\cite{BrSiJALC,BrSz14a,BrYe11}, two-sided ideals~\cite{BrSiJALC,BrSz14a,BrYe11}, 
prefix-free languages~\cite{BLY12}, and suffix-free languages~\cite{BLY12,BrSz15b}.

\noin
{\bf Complexities of Atoms}
A possible third measure of complexity of a regular language $L$ is the number and quotient complexities, which we call simply \emph{complexities}, of certain languages, called atoms, uniquely defined by $L$.
Atoms arise from an equivalence on $\Sig^*$ which is a left congruence refined by the Myhill congruence, where two words $x$ and $y$ are equivalent if 
 $ux\in L$ if and only if  $uy\in L$ for all $u\in \Sig^*$~\cite{Iva15}. 
 Thus $x$ and $y$ are equivalent if
 $x\in u^{-1}L  \Leftrightarrow y\in u^{-1}L$.
 An equivalence class of this relation is called an \emph{atom}~\cite{BrTa14} of $L$.
It follows that an atom is a non-empty intersection of complemented and uncomplemented quotients of $L$.
 The quotients of a language are unions of its atoms. 

\noin{\bf Terminology and Notation}
A \emph{deterministic finite automaton (DFA)} is defined as a quintuple
$\cD=(Q, \Sigma, \delta, q_0,F)$, where
$Q$ is a finite non-empty set of \emph{states},
$\Sig$ is a finite non-empty \emph{alphabet},
$\delta\colon Q\times \Sig\to Q$ is the \emph{transition function},
$q_0\in Q$ is the \emph{initial} state, and
$F\subseteq Q$ is the set of \emph{final} states.
We extend $\delta$ to a function $\delta\colon Q\times \Sig^*\to Q$ as usual.
A~DFA $\cD$ \emph{accepts} a word $w \in \Sigma^*$ if ${\delta}(q_0,w)\in F$. The language accepted by $\cD$ is denoted by $L(\cD)$. If $q$ is a state of $\cD$, then the language $L^q$ of $q$ is the language accepted by the DFA $(Q,\Sigma,\delta,q,F)$. 
A state is \emph{empty} if its language is empty. Two states $p$ and $q$ of $\cD$ are \emph{equivalent} if $L^p = L^q$. 
A state $q$ is \emph{reachable} if there exists $w\in\Sig^*$ such that $\delta(q_0,w)=q$.
A DFA is \emph{minimal} if all of its states are reachable and no two states are equivalent.
Usually DFAs are used to establish upper bounds on the quotient complexity of operations and also as witnesses that meet these bounds.

A \emph{nondeterministic finite automaton (NFA)} is a quintuple
$\cD=(Q, \Sigma, \delta, I,F)$, where
$Q$,
$\Sig$ and $F$ are defined as in a DFA, 
$\delta\colon Q\times \Sig\to 2^Q$ is the \emph{transition function}, and
$I\subseteq Q$ is the \emph{set of initial states}. 
An \emph{$\eps$-NFA} is an NFA in which transitions under the empty word $\eps$ are also permitted.

The \emph{quotient DFA} of a regular language $L$ with $n$ quotients is defined by
$\cD=(K, \Sigma, \delta_\cD, K_0,F_\cD)$, where 
$\delta_\cD(K_i,w)=K_j$ if and only if $w^{-1}K_i=K_j$, 
and $F_\cD=\{K_i\mid \eps \in K_i\}$.
To simplify the notation, without loss of generality we use the set $Q_n=\{0,\dots,n-1\}$ of subscripts of quotients as the set of states of $\cD$; then $\cD$ is denoted by
$\cD=(Q_n, \Sigma, \delta, 0,F)$, where $\delta(p,w)=q$  if $\delta_\cD(K_p,w)=K_q$, and $F$ is the set of subscripts of quotients in $F_\cD$. 
The quotient DFA of $L$ is unique and it is isomorphic to each complete minimal DFA of $L$.

A \emph{transformation} of $Q_n$ is a mapping $t\colon Q_n\to Q_n$.
The \emph{image} of $q\in Q_n$ under $t$ is denoted by $qt$.
The \emph{range} of  $t$ is $\timg(t)=\{q\in Q_n\mid pt=q \text{ for some } p\in Q_n\}$.
In any DFA, each letter $a\in \Sig$ induces a transformation $\delta_a$ of the set $Q_n$ defined by $q\delta_a=\delta(q,a)$.
By a slight abuse of notation we use the letter $a$ to denote the transformation it induces; thus we write $qa$ instead of $q\delta_a$.
We also extend the notation to sets of states: if $P\subseteq Q_n$, then $Pa=\{pa\mid p\in P\}$.
If $s,t$ are transformations of $Q$, their composition is denoted by $s \ast t$ and defined by
$q(s \ast t)=(qs)t$; the $\ast$ is usually omitted.
Then also for a word $w = a_1 \cdots a_k$, $\delta_w$ denotes the transformation $\delta_{a_1} \cdots \delta_{a_k}$ induced by $w$.
Let $\cT_{Q_n}$ be the set of all $n^n$ transformations of $Q_n$; then $\cT_{Q_n}$ is a monoid under composition. 

For $k\ge 2$, a transformation (permutation) $t$ of a set $P=\{q_0,q_1,\ldots,q_{k-1}\} \subseteq Q$ is a \emph{$k$-cycle}
if $q_0t=q_1, q_1t=q_2,\ldots,q_{k-2}t=q_{k-1},q_{k-1}t=q_0$.
This $k$-cycle is denoted by $(q_0,q_1,\ldots,q_{k-1})$.
A~2-cycle $(q_0,q_1)$ is called a \emph{transposition}.
 A transformation that changes only one state $p$ to a state $q\neq p$ is denoted by $(p\to q)$.
A transformation mapping a subset $P$ of $Q$ to a single state and acting as the identity on $Q\setminus P$ is denoted by $(P\to q)$.
We also denote by $[q_0,\dots,q_{n-1}]$ the transformation that maps $p\in \{0,\dots,n-1\}$ to $q_p$.

We now define dialects of languages and DFAs following~\cite{BDL17}.
Let $\Sig=\{a_1,\dots,a_k\}$ be an alphabet; we assume that its elements are ordered as shown.
Let $\pi$ be a \emph{partial permutation} of $\Sig$, that is, a partial function $\pi \colon \Sig \rightarrow \Gamma$ where $\Gamma \subseteq \Sig$, for which there exists $\Delta \subseteq \Sig$ such that $\pi$ is bijective when restricted to $\Delta$ and  undefined on $\Sig \setminus \Delta$. We denote undefined values of $\pi$ by the symbol ``$-$''.

If $L$ is a language over $\Sig$, we denote it by $L(a_1,\dots,a_k)$ to stress its dependence on $\Sig$.
If $\pi$ is a partial permutation, let $s_\pi$ be the language substitution  defined as follows: 
 for $a\in \Sig$, 
$a \mapsto \{\pi(a)\}$ when $\pi(a)$ is defined, and $a \mapsto \emp$ when $\pi(a)$ is not defined.
For example, if $\Sig=\{a,b,c\}$, $L(a,b,c)=\{a,b,c\}^*\{ab, acc\}$, and $\pi(a)=c$, $\pi(b)=-$, and $\pi(c)=b$, then $s_\pi(L)= \{b,c\}^*\{cbb\}$.
In other words, the letter $c$ plays the role of $a$, and $b$ plays the role of $c$.
A \emph{permutational dialect} of $L(a_1,\dots,a_k)$ is a language of the form 
$s_\pi(L(a_1,\dots,a_k))$, where $\pi$ is a partial permutation of $\Sig$; this dialect is denoted by
$L(\pi(a_1),\dotsc,\pi(a_k))$.
If the order on $\Sig$ is understood, we use $L(\Sig)$ for $L(a_1,\dots,a_k)$ and $L(\pi(\Sig))$
for $L(\pi(a_1),\dotsc,\pi(a_k))$.
Undefined values appearing at the end of the alphabet are omitted. For example, if $\Sigma=\{a,b,c,d\}$ then we write $L(a,b)$ instead of $L(a,b,-,-)$.

Let $\Sig=\{a_1,\dots,a_k\}$,  and 
let $\cD = (Q,\Sig,\delta,q_1,F)$ be a DFA; we denote it by
$\cD(a_1,\dots,a_k)$ to stress its dependence on $\Sig$.
If $\pi$ is a partial permutation, then the \emph{permutational dialect} 
$$\cD(\pi(a_1),\dotsc,\pi(a_k))$$ of
$\cD(a_1,\dots,a_k)$ is obtained by changing the alphabet of $\cD$ from $\Sig$ to $\pi(\Sig)$, and modifying $\delta$ so that in the modified DFA 
$\pi(a_i)$ induces the transformation induced by $a_i$  in the original DFA; thus $\pi(a_i)$ plays the role of $a_i$.
One verifies that if the language $L(a_1,\dots,a_k)$ is accepted by DFA $\cD(a_1,\dots,a_k)$, then
$L(\pi(a_1),\dotsc,\pi(a_k))$ is accepted by $\cD(\pi(a_1),\dotsc,\pi(a_k))$.

In the sequel we refer to permutational dialects simply as \emph{dialects}.

\smallskip

\noin{\bf Contributions}\\         
\hglue 15 pt 1. We prove that a  most complex stream of suffix-free languages does not exist. This is in contrast with the existence of streams of most complex regular languages, right, left, and two-sided ideals, prefix-free and proper prefix-convex languages. \\
\hglue 15 pt 2.
We exhibit a single ternary witness that meets the bounds for star, product, and boolean operations.\\
\hglue 15 pt 3. We exhibit a single quinary witness that meets the bounds for boolean operations, reversal, number of atoms, syntactic complexity, and quotient complexities of atoms.\\
\hglue 15 pt 4. We show that when $m,n\ge 6$ and $m-2$ and $n-2$ are relatively prime, there are binary witnesses that meet the bound $(m-1)2^{n-2}+1$ for  product. \\
\hglue 15 pt 5. We prove that any witness DFA for star and any second witness DFA for product must have transition semigroups  that are subsemigroups of the suffix-free semigroup of transformations $\Vsf(n)$ which has maximal cardinality for $2\le n\le 5$; that the witness DFAs for reversal, syntactic complexity and quotient complexities of atoms must have transition semigroups  that  are subsemigroups of the suffix-free semigroup of transformations 
$\Wsf(n)$ which has maximal cardinality for $n=2,3$ and $n\ge 6$; and that the witness DFAs for boolean operations can have transition semigroups  that  are subsemigroups of $\Vsf \cap \Wsf$.

\section{Suffix-Free Transformations}

In this section we discuss some properties of suffix-free languages with emphasis on their syntactic semigroups as represented by the transition semigroups of their quotient DFAs.
We assume that our basic set is always $Q_n=\{0,\dots,n-1\}$.

\subsection{Suffix-Free Languages}

Let $\cD_n=(Q_n, \Sigma, \delta, 0,F)$ be the quotient DFA of a suffix-free language $L$, and let $T_n$ be its transition semigroup.
For any transformation $t$ of $Q_n$, the sequence $(0,0t,0t^2,\dots)$ is called the \emph{$0$-path} of $t$.
Since $Q_n$ is finite, there exist $i,j$ such that $0,0t,\dots,0t^i,0t^{i+1},\dots,0t^{j-1}$ are distinct but $0t^j=0t^i$.
The integer $j-i$ is the \emph{period} of $t$ and if $j-i=1$, $t$ is \emph{initially aperiodic}.
The following properties of suffix-free languages are known~\cite{BLY12,HaSa09}:

\begin{lemma}
\label{lem:sf}
If $L$ is a regular suffix-free language, then
\be
\item
There exists $w \in \Sig^*$ such that $w^{-1}L=\emp$; hence $\cD_n$ has an empty state, which is state $n-1$ by convention.
\item
For $w,x\in \Sig^+$, if $w^{-1}L\neq \emp$, then $w^{-1}L\neq (xw)^{-1}L$.
\item
If $L\neq \emp$ and $w^{-1}L=L$, then $w=\eps$. 
\item
For any $t\in T_n$, the $0$-path of $t$ in $\cD_n$ is aperiodic and ends in $n-1$. 
\ee
\end{lemma}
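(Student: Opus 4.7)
My plan is to dispose of items (1)--(3) first by short direct arguments straight from the definition of suffix-freeness, and then exploit (2) and (3) to establish (4). The common thread for the first three parts is the same: extract a witness word $y$ belonging to the claimed quotient, and show that it produces a word in $L$ whose proper suffix is also in $L$, contradicting suffix-freeness.

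For (1), if $L=\emp$ then $\eps^{-1}L=\emp$; otherwise I would pick any $w\in L$ and any $a\in\Sig$ and argue that $(wa)^{-1}L=\emp$, since otherwise a $y\in(wa)^{-1}L$ would yield $way\in L$ whose proper suffix $w$ already lies in $L$. Labelling this sink state as $n-1$ is then just a naming convention. For (2), assume $y\in w^{-1}L$ witnesses $w^{-1}L\neq\emp$ and that $w^{-1}L=(xw)^{-1}L$ with $x\in\Sig^+$; then $wy\in L$ and $xwy\in L$, but $wy$ is a proper suffix of $xwy$, a contradiction. Part (3) is the same idea: if $L\neq\emp$, pick $y\in L$; the hypothesis $w^{-1}L=L$ gives $wy\in L$, and $y\in L$ is a proper suffix of $wy$ whenever $w\neq\eps$.

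For (4), let $t\in T_n$ be the transformation induced by some $u\in\Sig^+$, so that $0t^k$ is the state whose quotient is $(u^k)^{-1}L$. Let $i<j$ realize the first repeat on the $0$-path, so that $0,0t,\dots,0t^{j-1}$ are distinct and $0t^j=0t^i$; the goal is to force $j=i+1$ and $0t^i=n-1$. When $i\ge 1$ I would apply (2) with $w=u^i$ and $x=u^{j-i}$, both nonempty, to obtain $(u^i)^{-1}L=\emp$; hence $0t^i=n-1$, and since any transformation fixes the empty sink, $0t^{i+1}=n-1=0t^i$, which by distinctness forces $j=i+1$. The boundary case $i=0$ is where one has to switch tools: now $u^0=\eps$ so (2) is inapplicable, but (3) applies to $u^j\in\Sig^+$ and yields $L=\emp$, i.e., $n=1$ with $0=n-1$, so the path is trivially $(0,0,\dots)$ of period $1$ ending in $n-1$. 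I expect this $i=0$ boundary — the choice between (2) and (3) depending on whether the preperiod is empty — to be the only subtle point of the argument; everything else is a mechanical unpacking of the suffix-free definition.
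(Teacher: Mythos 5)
Your arguments for parts (2), (3), and (4) are correct (the paper itself only cites these facts from earlier work, so there is no in-paper proof to compare against), but your proof of part (1) contains a genuine error. You claim that for $w\in L$ and $a\in\Sig$ one has $(wa)^{-1}L=\emp$, justified by saying that an element $y\in(wa)^{-1}L$ would make $w$ a \emph{proper suffix} of $way\in L$. But $w$ is a \emph{prefix} of $way$, not a suffix; this is the argument that works for prefix-free languages. The claim itself is false for suffix-free languages: take $L=\{a,aab\}$, which is suffix-free, and $w=a\in L$; then $(aa)^{-1}L=\{b\}\neq\emp$.

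Fortunately the machinery you set up for part (4) already repairs part (1). Prove (2) and (3) first, exactly as you do. Then fix any letter $a\in\Sig$ and consider the quotients $(a^k)^{-1}L$ for $k=0,1,2,\dots$; since $L$ is regular there is a first repetition $(a^i)^{-1}L=(a^j)^{-1}L$ with $i<j$. If $L\neq\emp$, part (3) rules out $i=0$ (it would give $(a^j)^{-1}L=L$ with $a^j\neq\eps$), so $i\ge 1$, and part (2) applied with $w=a^i$ and $x=a^{j-i}$ forces $(a^i)^{-1}L=\emp$. This is precisely the $i\ge 1$ case of your own part-(4) argument specialized to $t=\delta_a$, so no new idea is needed; you just cannot shortcut it with the $(wa)^{-1}L$ claim. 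With (1) established this way, the rest of your part (4) --- including the careful treatment of the $i=0$ boundary via (3) --- goes through as written.
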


Property~3 is  known as the \emph{non-returning} property~\cite{HaSa09} and also as \emph{unique reachability}~\cite{Brz10b}.

An (unordered) pair $\{p,q\}$ of distinct states in $Q_n \setminus \{0,n-1\}$ is \emph{colliding} (or $p$ \emph{collides} with $q$) in $T_n$ if there is a transformation $t \in T_n$ such that $0t = p$ and $rt = q$ for some $r \in Q_n \setminus \{0,n-1\}$. 
A pair of states is \emph{focused} by a transformation $u$ of $Q_n$ if $u$ maps both states of the pair to a single state $r \not\in \{0,n-1\}$. We then say that $\{p,q\}$ is \emph{focused to state $r$}.
If $L$ is a suffix-free language, then from Lemma~\ref{lem:sf}~(2) it follows that if $\{p,q\}$ is colliding in $T_n$, there is no transformation $t' \in T_n$ that focuses $\{p,q\}$.
So colliding states can be mapped to a single state by a transformation in $T_n$ only if that state is the empty state $n-1$.

Following~\cite{BLY12}, for $n \ge 2$, we let
$$\Bsf(n) = \{ t \in \cT_{Q} \mid 0 \not\in \timg(t), \; (n-1)t = n-1, \txt{and for all} j\ge 1,
 \hspace{2.5cm}$$ 
$$0t^j = n-1 \txt{or} 0t^j \neq qt^j,~\forall q \txt{such that} 0 < q < n-1\}.$$ 
\begin{example}
We have $\Bsf(2)=\{[1,1]\}$ and $\Bsf(3)= \{ [1,2,2], [2,1,2], [2,2,2]  \}$.
For $n=4$, there are 17 transformations satisfying $0t\neq qt$.
However,
if $t=[1,2,2,3]$ or $t=[2,1,1,3]$, then $0t^2=1t^2$, 
which violates the third condition for $\Bsf$;
hence $\Bsf(4)$
has 15 elements.
The cardinality of $\Bsf(5)$ is 115.
\qedb
\end{example}

\begin{proposition}[\cite{BLY12}]
\label{prop:sf}
If $L$ is a regular language with $\cD_n = (Q_n, \Sig, \delta, 0, F)$ as its minimal DFA and syntactic semigroup $T_L$, then the following hold:
\be
\item If $L$ is suffix-free, then $T_L$ is a subset of $\Bsf(n)$.
\item If $L$ has the empty quotient, only one final quotient, and $T_L \subseteq \Bsf(n)$, then $L$ is suffix-free.
\ee
\end{proposition}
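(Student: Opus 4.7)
The proof splits naturally into the two implications, both of which I would handle by direct translation between the language properties in Lemma~\ref{lem:sf} and the three defining conditions of $\Bsf(n)$.

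For part (1), I would pick an arbitrary $t \in T_L$, realized as $\delta_w$ for some $w \in \Sig^+$, and verify each clause in the definition of $\Bsf(n)$. The condition $(n-1)t = n-1$ is immediate because the empty quotient is a sink. For $0 \notin \timg(t)$, I would assume for contradiction that $qt = 0$ for some reachable $q$, write $q = u^{-1}L$ (so $u \in \Sig^*$ witnesses reachability), and conclude $(uw)^{-1}L = L$; since $uw \neq \eps$ (as $w \in \Sig^+$), this contradicts the non-returning property, Lemma~\ref{lem:sf}(3). For the third clause, suppose $0t^j = qt^j$ for some $j \ge 1$ and some $q$ with $0 < q < n-1$, and that $0t^j \neq n-1$. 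Choose $u \in \Sig^+$ with $\delta(0,u) = q$ (possible because $q \neq 0$). Then $(w^j)^{-1}L = (u w^j)^{-1} L$ and this quotient is non-empty, directly contradicting Lemma~\ref{lem:sf}(2) applied with $x = u$ and $w$ replaced by $w^j$.

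For part (2), I would argue the contrapositive: assume $L$ is not suffix-free, so there exist $x \in L$ and $y \in \Sig^+$ with $yx \in L$, and derive $T_L \not\subseteq \Bsf(n)$. Let $f$ be the unique final state. Split on whether $x$ is empty. If $x = \eps$, then $0 \in F$, so $f = 0$; but then $\delta(0,y) = 0$ places $0$ in $\timg(\delta_y)$, violating the first clause of $\Bsf(n)$. If $x \in \Sig^+$, set $q = \delta(0,y)$ and $t = \delta_x \in T_L$. Since $\delta(q,x) = f \neq n-1$, we get $q \neq n-1$, and since $T_L \subseteq \Bsf(n)$ would force $0 \notin \timg(\delta_y)$, we also get $q \neq 0$. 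But now $0t = f = qt$ with $0 < q < n-1$ and $0t \neq n-1$, violating the third clause of $\Bsf(n)$ with $j = 1$. Either way, $t \notin \Bsf(n)$, giving the required contradiction.

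I do not expect any real obstacle here; the proposition is essentially a bookkeeping exercise that repackages Lemma~\ref{lem:sf} in the language of transition semigroups. The only subtlety worth being careful about is making sure that every time one invokes Lemma~\ref{lem:sf}(2) or~(3) the relevant word is genuinely non-empty, which is why the case $x = \eps$ in part (2) must be treated separately and why the reachability word $u$ in part (1) must be verified to be non-empty whenever $q \neq 0$. Everything else is immediate from the assumption that $\cD_n$ is the quotient DFA, i.e.\ minimal and reachable, which guarantees the existence of representative words for the states used in the argument.
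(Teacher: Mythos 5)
Your argument is correct. Note that the paper itself gives no proof of this proposition --- it is imported from~\cite{BLY12} --- so there is nothing to compare against; your direct translation of Lemma~\ref{lem:sf} into the three clauses of $\Bsf(n)$ is the natural argument, and you correctly use minimality (so that $n-1$ is the unique empty state and every other quotient is non-empty), the single-final-quotient hypothesis to force $0t=qt$ in part~(2), and the separate treatment of $x=\eps$ to keep all invoked words in $\Sig^+$.
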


Since the transition semigroup of a minimal DFA of a suffix-free language must be a subsemigroup of $\Bsf(n)$, 
the cardinality of $\Bsf(n)$ is an upper bound on the syntactic complexity of suffix-free regular languages with quotient complexity $n$.
This upper bound, however, cannot be reached since $\Bsf$ is not a semigroup for $n\ge 4$: We have $s=[1,2,n-1,\dots,n-1]$ and $t=[n-1,2,2,\dots,2,n-1]$ in $\Bsf(n)$, but
$st=[2,2,n-1,\dots,n-1]$ is not in $\Bsf(n)$.

\subsection{Semigroups $\Vsf(n)$ with Maximal Cardinality when $2\le n\le 5$}

For $n \ge 2$, let
$$\Vsf(n) = \{t \in \Bsf(n) \mid \txt{for all} p, q \in Q_n \txt{where} p \neq q, 
 \txt{either} pt = qt = n-1 \txt{or} pt \neq qt\}.$$

\begin{proposition}\label{prop:W5}
For $n\ge 4$, the semigroup $\Vsf(n)$ is generated by the following set $\Msf(n)$ of transformations of $Q$:
\bi
\item
${a}\colon (0\to n-1)(1,\dots,n-2)$,
\item
${b}\colon (0\to n-1)(1,2)$, 
\item
for $1\le p \le n-2$, $c_p\colon (p\to n-1)(0\to p)$.
\ei
For $n=4$, $a$ and $b$ coincide, and so $\Msf(4)=\{a, {c_1},{c_2}\}$. Also,
$\Msf(3)=\{a,{c_1}\}=\{[2,1,2], [1,2,2]\}$ and $\Msf(2)=\{{c_1}\}=\{[1,1]\}$. 

\end{proposition}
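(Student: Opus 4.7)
The plan is to prove the proposition in three phases. First, verifying that $\Msf(n) \subseteq \Vsf(n)$ is a routine check from the definitions: each of $a$, $b$, $c_p$ satisfies $0 \notin \timg$, fixes $n-1$, and has the injectivity-except-at-$(n-1)$ property. Second, I will confirm that $\Vsf(n)$ is closed under composition, so that every product of generators lies in $\Vsf(n)$. The key is to observe that on the subset of $\cT_{Q_n}$ with $0 \notin \timg(t)$ and $(n-1)t = n-1$, the third condition defining $\Bsf(n)$ (aperiodicity of the $0$-path) follows by induction from the single local property ``for all distinct $p, q \in Q_n$, either $pt = qt = n-1$ or $pt \neq qt$,'' which is exactly the extra condition defining $\Vsf(n)$. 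All three local properties are preserved by composition by a direct check.

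The heart of the argument is expressing an arbitrary $t \in \Vsf(n)$ as a product of generators for $n \ge 4$. Write $U_t = \{q \in \{1, \dots, n-2\} \mid qt = n-1\}$ and $V_t = \{1, \dots, n-2\} \setminus U_t$; the $\Vsf$ condition ensures $t|_{V_t}$ is an injection into $\{1, \dots, n-2\}$, and, when $0t \neq n-1$, that $0t$ lies in $\{1, \dots, n-2\} \setminus t(V_t)$. I split into two cases.

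\emph{Case B} ($0t = n-1$): Extend $t|_{V_t}$ to any permutation $\sigma$ of $\{1, \dots, n-2\}$. Since $a$ induces the $(n-2)$-cycle $(1, 2, \dots, n-2)$ and $b$ induces the transposition $(1, 2)$, and these generate the symmetric group $S_{n-2}$ for $n \ge 5$ (for $n=4$, $a = b$ realizes $S_2 = \{e, (1,2)\}$ via $a$ and $a^2$), we obtain a word $\pi_\sigma \in \{a, b\}^+$ inducing $\sigma$ on $\{1, \dots, n-2\}$ while sending $0 \to n-1$ and fixing $n-1$. Enumerating $U_t = \{p_1, \dots, p_k\}$ in any order, a state-by-state verification yields
\[
    t \;=\; a^{n-2}\, c_{p_1}\, c_{p_2}\, \cdots\, c_{p_k}\, \pi_\sigma.
\]
\emph{Case A} ($y = 0t \in \{1, \dots, n-2\}$): Define an auxiliary $s$ that agrees with $t$ on $\{1, \dots, n-2\} \setminus \{y\}$ and satisfies $ys = y$, $0s = n-1$, $(n-1)s = n-1$. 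The $\Vsf$ condition applied to $t$ and the pair $(0, r)$ with $r \in V_t$ forces $y = 0t \neq rt$, and this is exactly what is needed to verify $s \in \Vsf(n)$. Since $0s = n-1$, Case B expresses $s$, and a direct check gives $t = c_y\, s$.

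The main obstacle is designing the normal form in Case B, specifically the role of the leading $a^{n-2}$: each $c_p$ has a dual action, sending $p \to n-1$ but also $0 \to p$, which would contaminate the $0$-image if several $c_p$'s were composed naively. Prepending $a^{n-2}$ collapses $0$ to $n-1$ in advance, so each subsequent $c_{p_i}$ contributes only the collapse $p_i \to n-1$ and the final $\pi_\sigma$ supplies the permutation of $V_t$. Once this is set up, all verifications are mechanical; the small-$n$ cases ($n = 2, 3$ and the coincidence $a = b$ at $n = 4$) are handled by direct inspection of the explicitly listed elements of $\Vsf(2)$, $\Vsf(3)$, and $\Msf(4)$.
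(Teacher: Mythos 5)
Your overall strategy (a self-contained direct factorization, rather than the paper's reduction to the generating set $\Gsf(n)$ of~\cite{BLY12}) is sound in outline, and your preliminary steps -- $\Msf(n)\subseteq\Vsf(n)$, the observation that the injectivity-except-into-$(n-1)$ condition together with $0\notin\timg(t)$ and $(n-1)t=n-1$ implies the $\Bsf$ conditions and is preserved under composition, and all of Case~B -- are correct. However, Case~A has a genuine gap. With $y=0t$ and your $s$ (which fixes $y$ and agrees with $t$ on $\{1,\dots,n-2\}\setminus\{y\}$), the composition $c_y s$ sends $0\mapsto y\mapsto y$, which is right, but it also sends $y\mapsto n-1\mapsto n-1$. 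So $t=c_y s$ forces $yt=n-1$, i.e., it requires the $0$-path of $t$ to be $0\to y\to n-1$. Nothing in the definition of $\Vsf(n)$ guarantees this: the $0$-path may be longer. Concretely, $t=[1,2,4,4,4]\in\Vsf(5)$ (this is $c'_2c_3$ in the notation of the paper's proof) has $0t=1$ but $1t=2\neq 4$; your construction gives $s=[4,1,4,4,4]$ and $c_1s=[1,4,4,4,4]\neq t$.

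The repair is to route $0$ through a state that $t$ genuinely kills, rather than through $0t$. Since $t$ maps the $n$-element set $Q_n$ into the $(n-1)$-element set $Q_n\setminus\{0\}$, two distinct states share an image, which by the $\Vsf$ condition must be $n-1$; at least one of them lies in $\{0,\dots,n-2\}$, and it cannot be $0$ because $0t=y\neq n-1$. Hence there exists $p\in\{1,\dots,n-2\}$ with $pt=n-1$. Define $s$ by $0s=(n-1)s=n-1$, $ps=y$, and $qs=qt$ for $q\in\{1,\dots,n-2\}\setminus\{p\}$; the same collision analysis you carry out (now applied to the pair $(0,q)$ to see $ps=0t\neq qt=qs$) shows $s\in\Vsf(n)$, and a state-by-state check gives $t=c_p s$ with $s$ covered by your Case~B. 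With that substitution your argument goes through; note that it remains a different route from the paper, which instead invokes the known generating set $\{a,b,c'_1,\dots,c'_{n-2}\}$ from~\cite{BLY12} and only has to express each $c'_p$ as $c_p$ followed by a permutation of $\{1,\dots,n-2\}$.
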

\begin{proof}
It was proved in~\cite{BLY12} that for $n \ge 4$, the semigroup $\Vsf(n)$ is generated by the following set ${\Gsf}(n)$ of  $n$  transformations of $Q_n$: $a$ and $b$ as above, and $c'_p$ 
for $1\le p\le n-2$, defined by
	$q{c'_p}=q+1$ for $q=0,\dots, p-1$, 
	$p{c'_p}=n-1$, 
and	$q{c'_p}=q$ for $q=p+1\dots, n-1$.
Since $\Msf(n)\subseteq \Vsf$, the semigroup generated by $\Msf(n)$ is a subsemigroup of $\Vsf(n)$.
So it is sufficient to show that every transformation in $\Gsf(n)$ can be generated by $\Msf(n)$.
Transformation $c'_p$ changes $\{0,1,\dots,p-1\}$ to $\{1,2,\dots,p\}$, $p$ is mapped to $n-1$, and $\{p+1,p+2,\dots,n-2)$ is mapped to itself. 
The image of $Q_n\setminus\{p,n-1\}$ is thus $\{1,2,\dots,p,p+1,\dots n-2\}$.
The image of $Q_n\setminus\{p,n-1\}$ under $c_p$ is $\{p,1,2,\dots,p-1,p+1,\dots,n-2\}$. However,
since transformations $a$ and $b$ restricted to $Q_n\setminus\{0,n-1\}$ generate all permutations of $Q_n\setminus\{0,n-1\}$,
$\{p,1,2,\dots,p-1,p+1,\dots,n-2\}$ can be transformed to $\{1,2,\dots,p,p+1,\dots n-2\}$.
Hence the claim holds.
\end{proof}

From now on we use the transformations of Proposition~\ref{prop:W5} for $\Vsf(n)$. A DFA using these transformations  is illustrated in Figure~\ref{fig:witness5}.

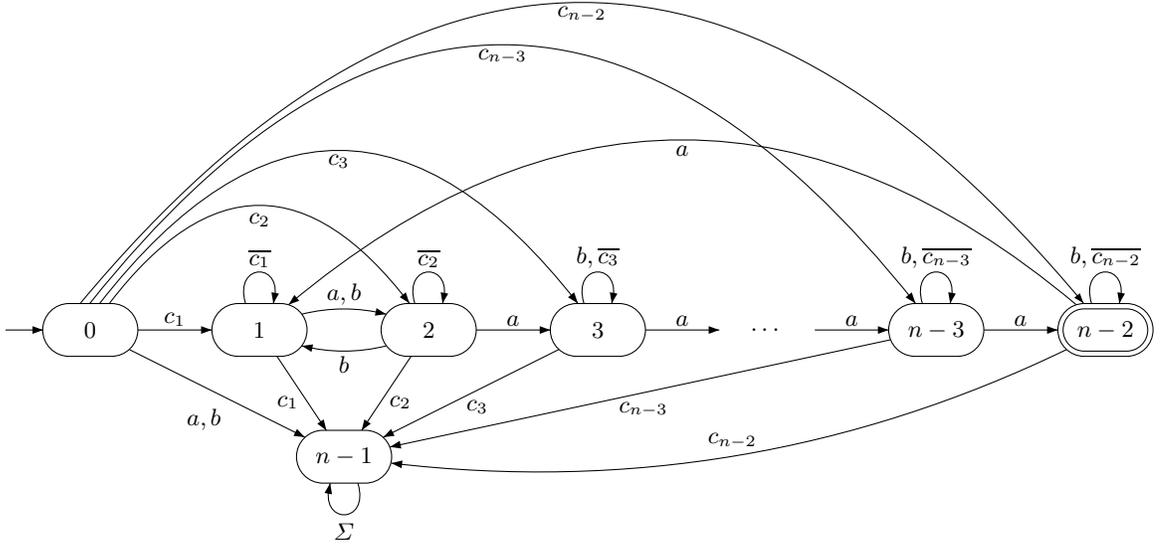
\begin{figure}[ht]
\unitlength 8pt
\begin{center}\begin{picture}(40,24)(4,-3)
\gasset{Nh=2.5,Nw=4.5,Nmr=1.25,ELdist=0.2,loopdiam=1.5}
{\small
\node(0)(1,7){0}\imark(0)
\node(1)(9,7){1}
\node(2)(17,7){2}
\node(3)(25,7){3}
\node[Nframe=n](3dots)(33,7){$\dots$}
\node(n-3)(41,7){$n-3$}
\node(n-2)(49,7){$n-2$}\rmark(n-2)
\node(n-1)(13,1){$n-1$}

\drawedge(0,1){$c_1$}
\drawedge[curvedepth=1](1,2){$a,b$}
\drawloop(1){$\ol{c_1}$}
\drawedge[curvedepth=1,ELdist=.25](2,1){$b$}
\drawloop(2){$\ol{c_2}$}
\drawedge(2,3){$a$}
\drawloop(3){$b,\ol{c_3}$}
\drawedge(3,3dots){$a$}
\drawedge(3dots,n-3){$a$}
\drawedge(n-3,n-2){$a$}
\drawloop(n-3){$b,\ol{c_{n-3}}$}
\drawloop(n-2){$b,\ol{c_{n-2}}$}
\drawedge[curvedepth=5.9,ELdist=-1](0,2){$c_2$}
\drawedge[curvedepth=-9,ELdist=.3](n-2,1){$a$}
\drawedge[curvedepth=8.5,sxo=-.5,ELside=r](0,3){$c_3$}
\drawedge[curvedepth=13.5,sxo=-1.,ELside=r](0,n-3){$c_{n-3}$}
\drawedge[curvedepth=15.5,sxo=-1.5,ELside=r](0,n-2){$c_{n-2}$}
\drawedge[ELdist= -2.2](0,n-1){$a,b$}
\drawedge[ELside=r](1,n-1){$c_1$}
\drawedge(2,n-1){$c_2$}
\drawedge(3,n-1){$c_3$}
\drawedge(n-3,n-1){$c_{n-3}$}
\drawedge[curvedepth=3,ELside=r](n-2,n-1){$c_{n-2}$}
\drawloop[loopangle=270,ELdist=0.4](n-1){$\Sigma$}
}
\end{picture}\end{center}
\caption{DFA with $\Vsf(n)$ as its transformation semigroup;
$\ol{c_p} = \{c_1,\dots,c_{n-2}\} \setminus \{c_p\}$.
}
\label{fig:witness5}
\end{figure}

\begin{proposition}
For $n\ge 2$, $\Vsf(n)$ is the unique maximal semigroup of a suffix-free language in which all possible pairs of states are colliding.
\end{proposition}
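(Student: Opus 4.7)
The plan is to prove two facts from which the statement follows immediately: (a) $\Vsf(n)$ itself is the transition semigroup of a minimal DFA of a suffix-free language in which every pair of states in $Q_n \setminus \{0,n-1\}$ is colliding; and (b) every such semigroup $S$ satisfies $S \subseteq \Vsf(n)$. Together these identify $\Vsf(n)$ as the maximum, and therefore the unique maximal, element of the class.

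For (a), the DFA of Figure~\ref{fig:witness5} has $\Vsf(n)$ as its transition semigroup by Proposition~\ref{prop:W5}; it has exactly one final state $n-2$ and the empty state $n-1$, and $\Vsf(n) \subseteq \Bsf(n)$ by definition, so Proposition~\ref{prop:sf}~(2) guarantees that its language is suffix-free. To see that every pair $\{p,q\} \subseteq \{1,\dots,n-2\}$ with $p \neq q$ is colliding, I would apply the generator $c_p = (p \to n-1)(0 \to p)$: it sends $0 \mapsto p$ and fixes every state in $\{1,\dots,n-2\} \setminus \{p\}$, so $0 c_p = p$ and $q c_p = q$, witnessing the collision with the choice $r = q$. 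The edge cases $n = 2$ and $n = 3$ are trivial: there are no pairs to check and $\Vsf(n) = \Bsf(n)$.

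For (b), let $S \subseteq \Bsf(n)$ be the transition semigroup of a suffix-free language in which every pair in $\{1,\dots,n-2\}$ is colliding. Take any $t \in S$ and suppose for contradiction that $pt = qt = r$ for distinct $p,q \in Q_n$ with $r \neq n-1$. A three-case analysis rules this out: if $n-1 \in \{p,q\}$, then $r = (n-1)t = n-1$, a contradiction; if $0 \in \{p,q\}$, the $j = 1$ clause in the definition of $\Bsf(n)$ is violated; and if $\{p,q\} \subseteq \{1,\dots,n-2\}$, then the colliding pair $\{p,q\}$ is focused by $t$ to a state outside $\{0,n-1\}$ (since $0 \notin \timg(t)$), contradicting Lemma~\ref{lem:sf}~(2) via the observation made right after the definition of ``focused''. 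Hence every $t \in S$ satisfies the defining condition of $\Vsf(n)$, so $S \subseteq \Vsf(n)$.

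The only nontrivial ingredient is the bridge used in the third case of (b): in a suffix-free transition semigroup, ``every pair in $\{1,\dots,n-2\}$ is colliding'' is equivalent to ``no transformation focuses any such pair to a non-empty state,'' which is precisely the extra condition distinguishing $\Vsf(n)$ from $\Bsf(n)$. Everything else is a direct appeal to previous propositions or a routine check of boundary cases.
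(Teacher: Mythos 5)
Your proposal is correct and follows essentially the same route as the paper's proof: the generators $c_p$ witness that every pair is colliding in $\Vsf(n)$, and the incompatibility of colliding and focusing (Lemma~\ref{lem:sf}~(2)) forces any such semigroup to satisfy the defining condition of $\Vsf(n)$, making it the maximum and hence the unique maximal one. Your version merely spells out details the paper leaves implicit, namely the case analysis over $\{0,n-1\}$ and the verification via Proposition~\ref{prop:sf}~(2) that $\Vsf(n)$ is actually realized by a suffix-free language.
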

\begin{proof}
For each pair $p,q \in Q \setminus \{0,n-1\}$, $p\neq q$, there is a transformation $c_p \in \Vsf(n)$ with $0c_p = p$ and $qc_p = q$.
Thus all pairs are colliding.
If all pairs are colliding, then for each $p,q \in Q \setminus \{n-1\}$, there is no transformation $t$ with $pt = qt \neq n-1$,  for this would violate suffix-freeness.
By definition, $\Vsf(n)$ has all other transformations that are possible for a suffix-free language, and hence is unique.
\end{proof}

\begin{proposition}
For $n\ge 5$, the number $n$ of generators of $\Vsf(n)$ cannot be reduced.
\end{proposition}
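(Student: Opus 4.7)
The plan is to exhibit $n-1$ pairwise disjoint subsets $C_0, C_1, \dots, C_{n-2}$ of $\Vsf(n)$ such that any generating set must meet each of them, and moreover must hit $C_0$ at least twice. The key invariant is the \emph{collision pair} of a maximum-rank element: since $0 \notin \timg(t)$ and $(n-1)t = n-1$ for every $t \in \Vsf(n)$, one has $|\timg(t)| \le n-1$, and when equality holds the unique size-$2$ kernel class of $t$ must map to $n-1$ (by the defining condition of $\Vsf(n)$) and must contain $n-1$ itself. I would define $C_r$ to consist of those top-rank $t$ whose size-$2$ kernel class is $\{r, n-1\}$; these sets are pairwise disjoint, and the generators of Proposition~\ref{prop:W5} are distributed as $a, b \in C_0$ and $c_p \in C_p$ for $1 \le p \le n-2$.

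First I would prove a left-factor localization lemma: if $x = st$ in $\Vsf(n)$ with $x \in C_r$, then $s \in C_r$. The rank inequality $|\timg(x)| \le \min(|\timg(s)|, |\timg(t)|)$ forces both $s$ and $t$ to be top-rank, and the inclusion of binary relations $\ker(s) \subseteq \ker(st) = \ker(x)$ forces the size-$2$ kernel class of $s$, which contains $n-1$, to be contained in the class $\{r,n-1\}$ of $\ker(x)$; hence these classes coincide. Applied to any factorization $x = g_1 \cdot (g_2 \cdots g_k)$ of an arbitrary $x \in C_r$ into a generator times an element of $\Vsf(n)$, this yields that $G \cap C_r$ is non-empty for every $r \in \{1, \dots, n-2\}$.

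The main obstacle is a sharper right-factor localization for $r = 0$: if $x = st \in C_0$ with $s, t \in \Vsf(n)$, then $t \in C_0$ as well. The first lemma already gives $s \in C_0$, so $s$ and $x$ both send $0$ and $n-1$ to $n-1$ and act as permutations $\sigma_s, \sigma_x$ of $\{1, \dots, n-2\}$; evaluating $st = x$ shows that $t$ restricted to $\{1, \dots, n-2\}$ equals $\sigma_s^{-1} \sigma_x$ and $(n-1)t = n-1$. The value $0t$ is not pinned down by the equation (because $0s = n-1$ absorbs it), so the delicate step is to rule out $0t \in \{1, \dots, n-2\}$: such a value $p$ would already occur as $qt$ for the unique $q \in \{1, \dots, n-2\}$ with $t(q) = p$, producing a forbidden collision of $\{0, q\}$ to a target different from $n-1$. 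Hence $0t = n-1$ and $t \in C_0$. Iterating this lemma along any generating expression $x = g_1 g_2 \cdots g_k$ with $x \in C_0$ shows every $g_i$ lies in $C_0$, so $G \cap C_0$ generates $C_0$ as a semigroup. Since $C_0$ is isomorphic as a group to the symmetric group $S_{n-2}$, and for $n-2 \ge 3$ (i.e., $n \ge 5$) this group is non-abelian and thus not singly generated, I conclude $|G \cap C_0| \ge 2$. Combined with the previous paragraph and the disjointness of the $C_r$, this yields $|G| \ge 2 + (n-2) = n$.
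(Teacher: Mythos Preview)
Your proof is correct and follows essentially the same approach as the paper. Both arguments partition the maximum-rank elements by their size-$2$ kernel class, observe that the first factor of any factorization of an element in $C_r$ must lie in $C_r$ (your left-factor lemma; the paper's argument that a generator for $c_p$ must send $p$ to $n-1$), and that $C_0$ is closed under taking factors so that $G\cap C_0$ must generate the copy of $S_{n-2}$ (your right-factor lemma; the paper's claim that only generators sending $0\to n-1$ can ``induce the permutations''). Your version is more carefully stated: you make the kernel-class invariant explicit and actually prove both localization lemmas, whereas the paper asserts the corresponding facts somewhat informally. The underlying mechanism---restriction to the invariant set $\{1,\dots,n-1\}$ separates permutations from non-permutations, and in a finite transformation monoid a product is a permutation only if every factor is---is the same in both.
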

\begin{proof}
If a generator $t$ maps $0$ to $p \in \{1,\ldots,n-2\}$, then it must also map a state $q \in \{1,\ldots,n-2\}$ to $n-1$, since the $0$-path of $t$ is aperiodic and ends in $n-1$.
Thus $t$ cannot generate a permutation of $\{1,\ldots,n-2\}$.
Since $\Vsf(n)$ has all transformations that map $0$ to $n-1$, permute  $\{1,\ldots,n-2\}$ and fix $n-1$, we need two generators, say $a$ and $b$, with $0 a=0b= n-1$ to induce all the permutations.
For $p \in \{1,\ldots,n-2\}$, consider the transformation ${c_p}$ of Proposition~\ref{prop:W5}.
We need a generator, say $a_p$, which maps $0$ to a state $p$ in $\{1,\ldots,n-2\}$, and a state $q$ from $\{1,\ldots,n-2\}$ to $n-1$.
But since $c_p$ maps only two states, $q$ and $n-1$, to $n-1$ and $\{1,\ldots,n-2\} \subset \timg({c_p})$, the other generators involved in the composition of generators that induces ${c_p}$ do not map any state from $\{1,\ldots,n-2\}$ to $n-1$.
Since there are $n-2$ distinct transformations $c_p \in \Vsf(n)$, one for each $p \in \{1,\ldots,n-2\}$, we need at least $n-2$ generators $a_p$.
This gives $n$ generators in total.
\end{proof}

\begin{example}
$\Vsf(4)$ has 13 elements.
All transitions of $\Bsf(4)$ are present except $[3,1,1,3]$ and $[3,2,2,3]$.
The semigroup $\Vsf(5)$ has 73 elements.
\qedb
\end{example}

Semigroups $\Vsf(n)$ are suffix-free semigroups that have maximal cardinality when $2\le n\le 5$~\cite{BLY12}.

\subsection{Semigroups $\Wsf(n)$ with Maximal Cardinality when $n=2,3$ and $n\ge 6$}

For $n \ge 2$, let 
$$\Wsf(n) = \{ t \in \Bsf(n) \mid 0t = n-1, \txt{or} qt = n-1 ~\forall~q \txt{such that} 1 \le q \le n-2 \}.$$ 
\begin{proposition}[\cite{BrSz15}]
\label{prop:Wsf} 
For $n \ge 4$, $\Wsf(n)$ is a semigroup contained in $\Bsf(n)$, its cardinality is 
$(n-1)^{n-2} + (n-2),$
and it is generated by the set $\Hsf(n)$ of the following transformations:
\bi
\item
$a\colon (0\to n-1)(1,\dots,n-2)$;
\item
$b\colon (0\to n-1) (1,2)$;
\item
$c\colon (0\to n-1) (n-2\to 1)$;
\item
$d\colon (\{0,1\}\to n-1)$;
\item
$e\colon (Q\setminus\{0\}\to n-1)(0\to 1)$.
\ei
For $n=4$, $a$ and $b$ coincide, and so $\Hsf(4)=\{a,c,d,e\}$.
Also $\Hsf(3)=\{a,e\}=\{[2,1,2],[1,2,2]\}$ and
$\Hsf(2)=\{e\}=\{[1,1]\}$.
\end{proposition}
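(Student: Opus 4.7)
The plan has four parts matching the four claims of the proposition. The inclusion $\Wsf(n)\subseteq\Bsf(n)$ is by definition. For the semigroup property, given $s,t\in\Wsf(n)$, if $0s=n-1$ then $0(st)=(n-1)t=n-1$ because $(n-1)t=n-1$ for every $t\in\Bsf(n)$; otherwise $qs=n-1$ for every $q\in\{1,\ldots,n-2\}$, and then $q(st)=(qs)t=(n-1)t=n-1$ for all such $q$. In either case $st\in\Wsf(n)$.

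For the cardinality I partition $\Wsf(n)=A\cup(B\setminus A)$, where $A=\{t\in\Bsf(n):0t=n-1\}$ and $B=\{t\in\Bsf(n):qt=n-1\text{ for all }q\in\{1,\ldots,n-2\}\}$. Every $t\in A$ fixes $n-1$, maps $0$ to $n-1$, and maps each $q\in\{1,\ldots,n-2\}$ to an arbitrary element of $\{1,\ldots,n-1\}$; the conditions $0\notin\timg(t)$ and aperiodicity of the $0$-path hold automatically because $0t^j=n-1$ for every $j\ge 1$. Hence $|A|=(n-1)^{n-2}$. A transformation in $B\setminus A$ is uniquely determined by $0t\in\{1,\ldots,n-2\}$, so $|B\setminus A|=n-2$ and therefore $|\Wsf(n)|=(n-1)^{n-2}+(n-2)$.

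For the generation claim I proceed in stages. First, $a$ and $b$ restricted to $\{1,\ldots,n-2\}$ are an $(n-2)$-cycle and an adjacent transposition, so they generate the symmetric group $S_{n-2}$ on $\{1,\ldots,n-2\}$ while all their products uniformly send $0\mapsto n-1$. Second, $c$ restricted to $\{1,\ldots,n-2\}$ is a rank-$(n-3)$ idempotent, so by the classical result that $S_k$ together with any rank-$(k-1)$ singular transformation generates $T_k$, the set $\{a,b,c\}$ generates all elements of $A$ whose image is contained in $\{1,\ldots,n-2\}$. Third, for each $p\in\{1,\ldots,n-2\}$ a conjugate of the form $\pi^{-1}d\pi$ with $\pi\in\langle a,b\rangle$ satisfying $1\pi=p$ sends $p$ to $n-1$ and fixes $\{1,\ldots,n-2\}\setminus\{p\}$; products of such conjugates realise, for any $D\subseteq\{1,\ldots,n-2\}$, the ``delete-$D$'' idempotent $\delta_D$. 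Then an arbitrary $f\in A$ can be written as $\delta_D\cdot\tau$, where $D=f^{-1}(n-1)\cap\{1,\ldots,n-2\}$ and $\tau\in\langle a,b,c\rangle$ is any total transformation on $\{1,\ldots,n-2\}$ agreeing with $f$ off $D$. Finally, for each $p\in\{1,\ldots,n-2\}$, choosing $\pi\in\langle a,b\rangle$ with $1\pi=p$ makes $e\pi$ send $0\mapsto 1\mapsto p$ and every $q\in\{1,\ldots,n-2\}$ to $n-1$, which is the unique element of $B\setminus A$ with $0t=p$.

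The main obstacle is the third stage, which relies on the classical semigroup-theoretic fact that the singular part of $T_k$ is generated by its rank-$(k-1)$ idempotents, and requires careful book-keeping to combine the ``delete'' and ``merge/permute'' components in the right order (here $\delta_D$ on the left of $\tau$, since our composition acts on the right). The small cases $n=2,3,4$, where $a$ and $b$ coincide or degenerate, will be verified by direct inspection using the listed generators.
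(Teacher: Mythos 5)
The paper does not actually prove this proposition; it is imported verbatim from~\cite{BrSz15}, so there is no in-paper argument to compare against. Judged on its own, your proof is essentially correct: the counting argument (the $(n-1)^{n-2}$ transformations with $0t=n-1$ plus the $n-2$ transformations determined by $0t=p$) is exactly right, and the generation argument --- $\langle a,b\rangle$ giving the symmetric group on $\{1,\dots,n-2\}$, $c$ supplying a rank-$(n-3)$ element so that $\langle a,b,c\rangle$ yields all of $T_{n-2}$, conjugates of $d$ giving the ``delete-$D$'' idempotents, and $e\pi$ giving the $n-2$ elements with $0t\ne n-1$ --- is sound, including the order $\delta_D\cdot\tau$ under right action.

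One step needs a sentence more. In the closure argument you verify only that $st$ satisfies the disjunction ``$0(st)=n-1$ or $q(st)=n-1$ for all $q\in\{1,\dots,n-2\}$'' and conclude $st\in\Wsf(n)$. But $\Wsf(n)$ is by definition a subset of $\Bsf(n)$, and $\Bsf(n)$ is \emph{not} closed under composition for $n\ge 4$ (the paper exhibits $s,t\in\Bsf(n)$ with $st\notin\Bsf(n)$), so membership of $st$ in $\Bsf(n)$ is not automatic and must be checked. The fix is short: $0\notin\timg(st)$ since $\timg(st)\subseteq\timg(t)$, and $(n-1)(st)=n-1$; and the aperiodicity condition on the $0$-path follows from the disjunction itself --- if $0(st)=n-1$ then $0(st)^j=n-1$ for all $j$, and if instead every $q\in\{1,\dots,n-2\}$ is sent to $n-1$ by $st$, then $0(st)\ne q(st)$ for $j=1$ and $0(st)^j=n-1$ for $j\ge 2$. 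With that line added, the closure argument is complete. (You should also record the trivial converse inclusion, namely that $a,b,c,d,e$ all lie in $\Wsf(n)$, so that $\langle\Hsf(n)\rangle\subseteq\Wsf(n)$ once closure is known.)
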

 
 \begin{example}
$\Wsf(4)$ has 11 elements.
 All transitions of $\Bsf(4)$ are present except $[1,2,3,3]$, $[1,3,2,3]$, $[2,1,3,3]$ and $[2,3,1,3]$. Semigroup $\Wsf(5)$ has size 67.
 \qedb
\end{example}
A DFA using the transformations of Proposition~\ref{prop:Wsf} is shown in Figure~\ref{fig:witness}.

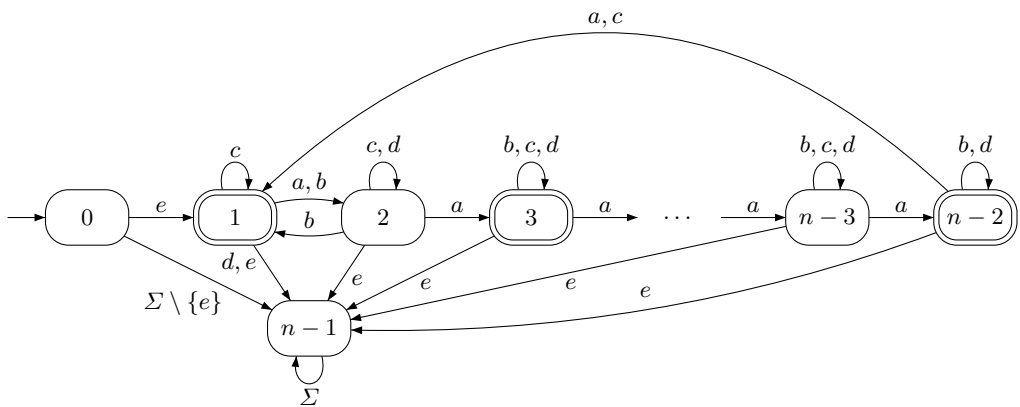
\begin{figure}[ht]
\unitlength 7pt\small
\begin{center}\begin{picture}(40,18)(5,-2)
\gasset{Nh=3.0,Nw=4.5,Nmr=1.25,ELdist=0.3,loopdiam=1.5}
\node(0)(1,7){0}\imark(0)
\node(1)(9,7){1}\rmark(1)
\node(2)(17,7){2}
\node(3)(25,7){3}\rmark(3)
\node[Nframe=n](3dots)(33,7){$\dots$}
\node(n-3)(41,7){$n-3$}
\node(n-2)(49,7){$n-2$}\rmark(n-2)
\node(n-1)(13,1){$n-1$}
\drawedge(0,1){$e$}
\drawedge[curvedepth=1](1,2){$a,b$}
\drawloop(1){$c$}
\drawedge[curvedepth=1,ELdist=-1.3](2,1){$b$}
\drawloop(2){$c,d$}
\drawedge(2,3){$a$}
\drawedge(3,3dots){$a$}
\drawedge(3dots,n-3){$a$}
\drawedge(n-3,n-2){$a$}
\drawedge[curvedepth=-10,ELside=r](n-2,1){$a,c$}
\drawloop(3){$b,c,d$}
\drawloop(n-3){$b,c,d$}
\drawloop(n-2){$b,d$}
\drawedge[ELside=r](0,n-1){$\Sigma\setminus\{e\}$}
\drawedge[ELside=r,ELdist=0.1,ELpos=30](1,n-1){$d,e$}
\drawedge(2,n-1){$e$}
\drawedge(3,n-1){$e$}
\drawedge(n-3,n-1){$e$}
\drawedge[curvedepth=2.0,ELdist=-1.3](n-2,n-1){$e$}
\drawloop[loopangle=270](n-1){$\Sigma$}
\end{picture}\end{center}
\caption{DFA with $\Wsf(n)$ as its transformation semigroup for the case when $n$ is odd.}
\label{fig:witness}
\end{figure}

Semigroups $\Wsf(n)$ are suffix-free semigroups that have maximal cardinality when $n\ge 6$~\cite{BrSz15}.

\section{Witnesses with Transition Semigroups in $\Vsf(n)$}

In this section we consider DFA witnesses whose transition semigroups are subsemigroups of $\Vsf(n)$. 
We show that there is one witness that satisfies the bounds for star, product and boolean operations.
\begin{definition}
\label{def:D5}
For $n\ge 6$, 
we define the DFA 
$\cD_n =(Q_n,\Sig,\delta,0,\{1\}),$
where $Q_n=\{0,\ldots,n-1\}$, $\Sig=\{a,b,c\}$, 
and $\delta$ is defined by the transformations
\bi
\item
$ a \colon (0 \to n-1) (1,2,3) (4,\dots,n-2) $,
\item
$ b \colon (2 \to n-1) (1 \to 2) (0 \to 1) (3,4) $, 
\item
$ c \colon (0 \to n-1) (1,\dots, n-2)  $.
\ei
\end{definition}

\begin{theorem}[Star, Product, Boolean Operations]
\label{thm:witness5}
Let $\cD_n(a,b,c)$ be the DFA of Definition~\ref{def:D5}, and let the language it accepts be
$L_n(a,b,c)$. 
For $n\ge 6$, $L_n$ and its permutational dialects meet the bounds for star, product and boolean operations as follows:
\be
\item
$L_n^*(a,b,-)$ meets the bound $2^{n-2}+1$.
{\rm [Cmorik and Jir\'askov\'a~\cite{CmJi12}]}
\item
$L_m(a,b,c)\cdot L_n(b,c,a)$ meets the bound $(m-1)2^{n-2}+1$.
\item
$L_m(a,b,-)$ and $L_n(-,b,a)$ meet the bounds $mn - (m+n-2)$  for union and symmetric difference, $mn - 2(m+n-3)$ for intersection, and $mn - (m+2n-4)$ for difference.
\ee
\end{theorem}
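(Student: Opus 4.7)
The plan is to verify that the witness DFA $\cD_n$ and its dialects attain three already-known upper bounds for suffix-free languages. In each case the routine is standard: construct the natural NFA (for star and product) or product DFA (for boolean operations), identify the reachable states, count them, and exhibit distinguishing words for each pair. Part 1 is cited as Cmorik--Jir\'askov\'a, so the only new content is checking that our particular witness achieves the bound.

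For part 2 (product), I would build the $\eps$-NFA for $L_m(a,b,c) \cdot L_n(b,c,a)$ by fusing the unique final state $1$ of $\cD_m$ with the initial state $0$ of $\cD_n(b,c,a)$ via an $\eps$-transition, and then determinize. The reachable subset-construction states have the form $(p,S)$ with $p\in Q_m$ and $S\subseteq Q_n$. Suffix-freeness of $\cD_m$ forces $p\ne 0$ after any non-empty word, and suffix-freeness of $\cD_n(b,c,a)$ (combined with the fact that $0$ is added to $S$ only when $p=1$ and then immediately moved out) restricts the informative $S$ to subsets of $\{1,\dots,n-2\}$, after identifying $S$ with $S\cup\{n-1\}$. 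That gives the target count $(m-1)\cdot 2^{n-2}+1$: one state for the initial $(0,\emp)$, and $(m-1)\cdot 2^{n-2}$ states for $p\in\{1,\dots,m-1\}$ and $S\subseteq\{1,\dots,n-2\}$. Reachability is then shown inductively: use the $b$-transition of $\cD_m$ to inject $0$ into the $\cD_n$-component, the $c$ of $\cD_n(b,c,a)$ (which is the ``$b$'' of $\cD_n$) to move $0$ to $1$, the $a$ and $b$ (the ``$c$'' and ``$a$'' transformations of $\cD_n$) to permute the non-trivial part of $S$, and the fact that $c_p$-like transformations in $\Vsf(n)$ allow focusing and shifting to build arbitrary subsets. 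Distinguishability follows by showing that for any two distinct $(p,S)\neq(p',S')$, one can independently exhibit a word that drives the $\cD_m$-component into state $1$ with a single designated element in $S$, producing an acceptance mismatch.

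For part 3 (boolean operations), I would form the direct product DFA $\cD_m(a,b,-)\times\cD_n(-,b,a)$ over the common alphabet $\{a,b\}$. Suffix-freeness makes most of the boundary unreachable: states $(0,j)$ with $j\ne 0$ and $(i,0)$ with $i\ne 0$ are all unreachable because neither component can return to its initial state after any non-empty word. This leaves the initial state $(0,0)$ together with the $(m-1)(n-1)$ internal pairs, and after accounting for the pairs involving the two dead states, the four boolean operations have different savings depending on which product states are final. For union and symmetric difference (where finality is ``either component final''), the pairs on the two unreachable axes account for exactly $m+n-2$ unreachable or equivalent states, yielding $mn-(m+n-2)$. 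For intersection (``both final''), additional states with a dead component are equivalent to a single dead pair, giving the extra savings $mn-2(m+n-3)$. For difference (``first final, second not''), the asymmetry yields $mn-(m+2n-4)$. The reachability analysis for the surviving states follows the same permutation arguments as in part 2 restricted to the two-letter alphabet, and distinguishability uses the singleton final state $1$ in each DFA to isolate components.

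The principal technical obstacle is the reachability argument in part 2: building every subset $S\subseteq\{1,\dots,n-2\}$ while also controlling the value of $p\in\{1,\dots,m-1\}$ requires a careful induction, because the only way to enlarge $S$ is to route back through $p=1$ in $\cD_m$, which constrains when and how $0$ can be injected into the $\cD_n$-component before being propagated. A convenient route is to first fix an inductive construction producing all $(1,S)$-states with $0\in S$, then propagate to $(p,S)$ for every $p\in\{2,\dots,m-1\}$ by reading short suffixes in $\cD_m$ whose action on the corresponding set in $\cD_n(b,c,a)$ can be undone using the generators of the permutation group of $\{1,\dots,n-2\}$ induced by $a$ and $b$. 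The boolean counts in part 3, though fiddly, reduce to elementary bookkeeping once the unreachability of the axes is established.
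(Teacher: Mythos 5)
Your overall strategy coincides with the paper's: determinize the standard $\eps$-NFA for the product, form the direct product for the boolean operations, and count reachable and distinguishable states; your bookkeeping for the four boolean bounds ($mn-(m+n-2)$, $mn-2(m+n-3)$, $mn-(m+2n-4)$) is consistent with the paper's. However, the proposal has genuine gaps precisely where the mathematical content of this theorem lives. For a witness-verification result the whole burden is the explicit reachability and distinguishability argument, and you defer both. In part 2 you yourself identify the inductive construction of all sets $\{p'\}\cup S$ as ``the principal technical obstacle'' and then only describe ``a convenient route'' without executing it; the paper needs a dedicated lemma (for each target $q\in Q_n$ a word $w_q\in c\{a,b\}^*$ with $1'w_q=3'$, $0w_q=q$, and unique $w_q$-predecessors on $Q_n\setminus\{0,q,n-1\}$) followed by a six-case induction on $|S|$ handling each value of $p'$ separately, and there is no reason to believe the induction closes without exhibiting these words. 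In part 3 the claim that reachability ``follows the same permutation arguments as in part 2 restricted to the two-letter alphabet'' is not justified: the dialects involved are different ($\cD'_m(a,b,-)$ paired with $\cD_n(-,b,a)$, so the two components see non-conjugate transformations), and in the paper the reachability of the single pair $((m-2)',n-2)$ already requires a case analysis on $m$ versus $n$ modulo $3$ plus five exceptional small cases; this is the longest part of the proof and cannot be waved through.

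There is also a conceptual error in your part-2 sketch: you invoke ``the fact that $c_p$-like transformations in $\Vsf(n)$ allow focusing \dots to build arbitrary subsets.'' By definition no transformation in $\Vsf(n)$ focuses a pair of states to a state other than $n-1$; the transformations act injectively on the non-empty states, and that injectivity (unique predecessors) is exactly what makes the induction work --- the \emph{only} mechanism for enlarging $S$ is the $\eps$-transition injecting $0$ when the first component passes through its final state. Moreover, the witness's transition semigroup is generated by just the three transformations of Definition~\ref{def:D5}, not by all of $\Vsf(n)$, so you cannot freely appeal to arbitrary $c_p$'s; every auxiliary transformation you use must be exhibited as a word over $\{a,b,c\}$ (respectively $\{a,b\}$ in part 3). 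Until the reachability words are written down, the theorem is not proved.
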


The claim about the star operation was proved in~\cite{CmJi12}. 
We add a result about the transition semigroup of the star witness and prove the remaining two claims in this section.

\subsection{Star}

In 2009 Han and Salomaa~\cite{HaSa09} showed that the language of a DFA over a four-letter alphabet meets the bound $2^{n-2}+1$ for the star operation for $n\ge 4$.
The transition semigroup of this DFA is a subsemigroup of $\Vsf(n)$.
In 2012 Cmorik and Jir\'askov\'a~\cite{CmJi12} showed that for $n\ge 6$ a binary alphabet $\{a,b\}$ suffices. 
The transition semigroup of this DFA is again a subsemigroup of $\Vsf(n)$.
We prove that these are special cases of the following general result:

\begin{theorem}\label{thm:star_witness_subsemigroup}
For $n\ge 4$, the transition semigroup of a minimal DFA $\cD(Q_n,\Sig,\delta,0,F)$ of a suffix-free language $L$ that meets the bound $2^{n-2}+1$ for the star operation is a subsemigroup of $\Vsf(n)$ and is not a subsemigroup of $\Wsf(n)$.
\end{theorem}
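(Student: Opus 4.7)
The theorem has two parts, which I address in turn. For the first, $T_L\subseteq\Vsf(n)$, I plan to take the standard NFA for $L^*$ built on $\cD$ (state $0$ made final; on each letter $a$, each original final state $f$ acquires an additional transition to $0\delta_a$) and to determinize by the subset construction. By unique reachability of $0$ and the sink property of $n-1$, non-initial reachable subsets effectively lie in $\{1,\dots,n-2\}$, giving the upper bound $2^{n-2}+1$. For this bound to be met, every pair $\{p,q\}\subseteq\{1,\dots,n-2\}$ with $p\ne q$ must be reachable as such a subset.

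The central step is to show that the reachability of such a pair forces it to be colliding. A state $p$ lies in the reached subset $S_w$ iff $p=0\delta_u$ for some suffix $u$ of $w$ admitting a decomposition $w=xu$ with $x\in L^*\cup\{\eps\}$. Hence if $\{p,q\}\subseteq S_w$, I obtain two distinct non-empty suffixes $u_1,u_2$ of $w$ with $u_2=\beta u_1$, $\beta\ne\eps$, $0\delta_{u_1}=p$, and $0\delta_{u_2}=q$. Setting $r=0\delta_\beta$, unique reachability of $0$ forces $r\ne 0$, and $q=r\delta_{u_1}\ne n-1$ together with the sink property of $n-1$ gives $r\ne n-1$. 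Hence $r\in\{1,\dots,n-2\}$, and $t=\delta_{u_1}$ witnesses that $\{p,q\}$ is colliding via $0t=p$, $rt=q$. Since, by the observation following Lemma~\ref{lem:sf}, no transformation in $T_L$ can focus a colliding pair to any state other than $n-1$, every $t\in T_L$ satisfies the $\Vsf$-condition, so $T_L\subseteq\Vsf(n)$.

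For the second assertion, $T_L\not\subseteq\Wsf(n)$, I argue by contradiction. If every letter induces a transformation in $\Wsf(n)$, then each $a$ is of one of two types: (i) $0\delta_a=n-1$, or (ii) $q\delta_a=n-1$ for every $q\in\{1,\dots,n-2\}$. A short case analysis of the star DFA transitions (including the feedback from finals back to $0\delta_a$) shows that from $\{0\}$, and from every singleton $\{p\}$ with $p\in\{1,\dots,n-2\}$, reading a single letter of either type yields again a singleton or the sink subset $\{n-1\}$. The reachable subsets are thus confined to $\{0\}$, the $n-2$ singletons inside $\{1,\dots,n-2\}$, and $\{n-1\}$, giving at most $n$ states. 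Since $2^{n-2}+1>n$ for $n\ge 4$, this contradicts the star bound being met.

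The main obstacle is the reachability-to-collision implication in the first part. The characterization of $S_w$ in terms of suffixes $u$ with corresponding prefix $x\in L^*\cup\{\eps\}$ must be handled carefully: the word $\beta$ arising from $u_2=\beta u_1$ need not itself lie in $L^*$, so the argument cannot exploit any structural property of $\beta$ beyond $\beta\ne\eps$, and must instead rely only on the weaker conclusion $r=0\delta_\beta\in\{1,\dots,n-2\}$, derived from $\beta\ne\eps$ combined with $q\ne n-1$.
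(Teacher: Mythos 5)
Your proof is correct. The first half is essentially the paper's argument: both proofs determinize the star NFA, observe that meeting the bound $2^{n-2}+1$ forces every subset of $\{1,\dots,n-2\}$ to occur as the ``core'' of a reachable subset, and then trace two states $p,q$ of such a subset back to the unique branch point created by the restart transitions; your explicit suffix decomposition $u_2=\beta u_1$ with $r=0\delta_\beta\in\{1,\dots,n-2\}$ is exactly the paper's ``$S$ is reached from $S'$ containing $0$ and $q'$ by a transformation $t$ with $0t=p$, $q't=q$,'' just written out more carefully. Where you genuinely diverge is the second assertion. The paper gets $T_L\not\subseteq\Wsf(n)$ for free from the first half: a transformation $t$ witnessing a collision has $0t\in\{1,\dots,n-2\}$ and $rt\in\{1,\dots,n-2\}$ for some $r\in\{1,\dots,n-2\}$, which violates both defining clauses of $\Wsf(n)$, and such a $t$ exists since $n-2\ge 2$ guarantees at least one (colliding) pair. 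You instead run a separate contradiction argument, showing that if every letter lies in $\Wsf(n)$ then the reachable subsets of the star construction all have core of size at most one, capping the state count at $n<2^{n-2}+1$. Your argument is sound (the resulting sets are singletons only up to the standard identifications $S\sim S\cup\{n-1\}$ and the forced membership of $0$, but that is exactly the equivalence under which states are counted), and it has the virtue of being self-contained; the paper's route is shorter and makes visible the structural point that a single colliding transformation is already incompatible with $\Wsf(n)$, which is the fact reused later to prove that no most complex suffix-free stream exists.
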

\begin{proof} 
To show that the transition semigroup of $\cD$ is a subsemigroup of $\Vsf(n)$, it suffices to show that every pair of states is colliding.

We construct an NFA $\cN$ for $L_n^*$ by making 0 a final state in $\cD$ -- this is possible since $0$ is uniquely reachable -- and adding an empty-word transition from every final state to 0. We then determinize $\cN$ using the subset construction to get a DFA $\cD^*$ for $L_n^*$. The states of $\cD^*$ are sets of states of $\cD$.

Consider a subset $S \subseteq Q_n$. We can assume that $n-1 \in S$, since $S$ and $S \cup \{n-1\}$ cannot be distinguished.
If $S \neq \{0,n-1\}$ and $S \neq \{n-1\}$, then $S$ can be reached only if $0 \in S$ and $S \cap F \neq \emptyset$, or $0 \not\in S$ and $S \cap F = \emptyset$, because by the construction for star in $\cN$ there is an $\varepsilon$-transition from every final state to the initial state $0$ and no transformation fixes $0$.
Thus, to meet the bound $2^{n-2}+1$, for each possible subset of $\{1,\ldots,n-2\}$ there must be a reachable subset $S$ containing that subset.

Suppose that $p,q \in \{1,\ldots,n-2\}$ are not colliding, that is, there is no transformation $t$ with $0t=p$ and $q \in \timg(t)$.
Consider $S$ that contains both $p$ and $q$.  Since the  $\eps$-transitions from final states to initial state 0 are the only sources of nondeterminism, $S$ must be reached from a subset $S'$ containing $0$ and $q' \in \{1,\ldots,n-2\}$ by a transformation $t$ with $0t = p$ and $q't = q$ (or $0t = q$ and $q't = p$), which contradicts that $p,q$ are not colliding.

Since transformations causing colliding pairs are not in $\Wsf(n)$, the transition semigroup of $\cD$ cannot be a subsemigroup of $\Wsf(n)$.
\end{proof}

\subsection{Product}
\label{subsec:product}

To avoid confusing the states of the two DFAs in a product, we label the states of the first DFA differently.
Let $\cD'_m=\cD'_m(a,b,c)=(Q'_m, \Sig,\delta',0',\{ 1' \})$, where 
$Q'_m=\{0',\ldots,(m-1)'\}$, and $\delta'(q',x)=p'$ if $\delta(q,x)=p$, and
let $\cD_n=\cD_n(b,c,a)$.
We use the standard  construction of the $\eps$-NFA $\cN$ for the product: the final state of $\cD'_m$ becomes non-final, and  an $\eps$-transition is added from that state to the initial state of $\cD_n$.
This is illustrated in Figure~\ref{fig:product3} for $m=9, n=8$.

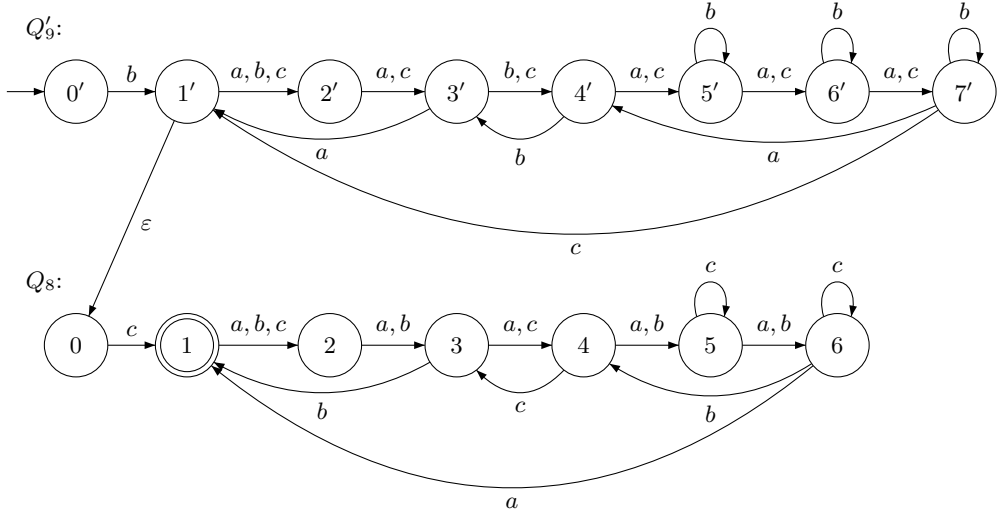
\begin{figure}[bht]
\unitlength 12pt
\begin{center}\begin{picture}(31,18)(0,-6)
\gasset{Nh=2.,Nw=2,Nmr=1.5,ELdist=0.3,loopdiam=1.0}
\node[Nframe=n](Qm)(1,10){$Q'_{9} $:}
\node(0')(2,8){$0'$}\imark(0')
\node(1')(5.5,8){$1'$}
\node(2')(10,8){$2'$}
\node(3')(14,8){$3'$}
\node(4')(18,8){$4'$}
\node(5')(22,8){$5'$}
\node(6')(26,8){$6'$}
\node(7')(30,8){$7'$}
\drawloop(5'){$b$}
\drawloop(6'){$b$}
\drawloop(7'){$b$}
\drawedge(0',1'){$b$}
\drawedge(1',2'){$a,b,c$}
\drawedge(2',3'){$a,c$}
\drawedge(3',4'){$b,c$}
\drawedge(4',5'){$a,c$}
\drawedge(5',6'){$a,c$}
\drawedge(6',7'){$a,c$}

\drawedge[curvedepth=1.5](3',1'){$a$}
\drawedge[curvedepth=1.5](4',3'){$b$}
\drawedge[curvedepth=1.6](7',4'){$a$}
\drawedge[curvedepth=4.5](7',1'){$c$}

\node[Nframe=n](Qn)(1,2){$Q_{8}$:}
\node(0)(2,0){$0$}
\node(1)(5.5,0){$1$}\rmark(1)
\node(2)(10,0){$2$}
\node(3)(14,0){$3$}
\node(4)(18,0){$4$}
\node(5)(22,0){$5$}
\node(6)(26,0){$6$}
\drawedge(0,1){$c$}
\drawedge(1,2){$a,b,c$}
\drawedge(2,3){$a,b$}
\drawedge(3,4){$a,c$}
\drawedge(4,5){$a,b$}
\drawedge(5,6){$a,b$}

\drawedge[curvedepth=1.5](3,1){$b$}
\drawedge[curvedepth=1.5](4,3){$c$}
\drawedge[curvedepth=1.6](6,4){$b$}
\drawedge[curvedepth=4.5](6,1){$a$}

\drawloop(5){$c$}
\drawloop(6){$c$}

\drawedge(1',0){$\varepsilon$}

\end{picture}\end{center}
\caption{The NFA $\cN$ for product $ L'_{9}(a,b,c) \cdot L_{8}(b,c,a)$. The empty states $8'$ and $7$ and the transitions to them are omitted.}
\label{fig:product3}
\end{figure}

We use the subset construction to determinize $\cN$ to get a DFA $\cP$ for the product. The states of $\cP$ are subsets of $Q_m'\cup Q_n$ and have one of three forms: 
$\{0'\}$, $\{1',0\}\cup  S$  or $\{ p'\}\cup  S$, where $p'=2',\dots,(m-1)'$ and   
$S\subseteq \{1,\dots, n-1\}$.

Note that for each $x\in \Sig$ every state $q\in Q_n \setminus \{0,n-1\}$  has a unique predecessor state $p\in Q_n\setminus \{n-1\}$  such that $px=q$.  For $w\in \Sig^*$, the $w$-predecessor of $S \subseteq Q_n \setminus \{0,n-1\}$  is denoted by $Sw^{-1}$.

\begin{lemma}
\label{lem:pred}
For each $n\ge 6$ and each $q\in Q_n$ there exists a word $w_q\in c\{a,b\}^*$ such that $1'w_q=3'$, $0w_q= q$, and each state of $Q_n\setminus \{0,q,n-1\}$ has a unique $w_q$-predecessor in $Q_n\setminus \{0,n-1\}$. In fact, the following $w_q$ satisfy these requirements:

\begin{equation}
	w_q = 
	\begin{cases}
		cab^2,  			&\text{if $q=1$;}\\
		ca,				&\text{if $q=2$;}\\
		cab^4, 			&\text{if $q=3$;}\\
		cab^2a^3b^{q-4},	&\text{if $q\ge 4$ and $q$ is even;}\\
		ca^4b^{q-5},		&\text{if $q\ge 5$ and $q$ is odd.}
	\end{cases}
\end{equation}
			
\end{lemma}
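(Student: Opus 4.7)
The plan is to split the lemma's three conclusions into a uniform structural step that handles the uniqueness-of-predecessors claim, and a case-by-case computational step that handles the two path equalities $1' w_q = 3'$ and $0 w_q = q$.

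For the structural step, I exploit the fact that every listed $w_q$ has the form $c u$ with $u \in \{a, b\}^*$. In $\cD_n(b, c, a)$, letters $a$ and $b$ induce permutations of $\{1, \dots, n-2\}$ (sending $0 \mapsto n-1$ and fixing $n-1$), whereas $c$ acts on $\{1, \dots, n-2\}$ as the transposition $(3, 4)$ together with $1 \mapsto 2$, $2 \mapsto n-1$, and the identity on $\{5, \dots, n-2\}$. In particular $c$ restricts to a bijection from $\{1, 3, 4, 5, \dots, n-2\}$ onto $\{2, 3, 4, \dots, n-2\}$, and $u$ restricts to a permutation of $\{1, \dots, n-2\}$. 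Hence $w_q = c u$ restricts to a bijection from $\{1, 3, 4, \dots, n-2\}$ onto $u(\{2, 3, \dots, n-2\}) = \{1, \dots, n-2\} \setminus \{1u\}$, while $2 w_q = (n-1) u = n-1$. The equality $0 w_q = q$ forces $1 u = q$, so the image is precisely $Q_n \setminus \{0, q, n-1\}$, and every state of $Q_n \setminus \{0, q, n-1\}$ has a unique $w_q$-predecessor inside $\{1, 3, 4, \dots, n-2\} \subseteq Q_n \setminus \{0, n-1\}$. The uniqueness clause is thus automatic once the equality $0 w_q = q$ is known and $w_q$ has the form $c u$.

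For the two path equalities, I verify each of the five cases by direct tracing. In $\cD'_m$, $1' \cdot c \cdot a = 3'$, so after the common prefix $ca$ we always land at $3'$. It then suffices to observe that each respective suffix $\varepsilon$, $b^2$, $b^4$, $b^2 a^3 b^{q-4}$, or $a^3 b^{q-5}$ fixes $3'$: in $\cD'_m$ letter $b$ swaps $3' \leftrightarrow 4'$ and letter $a$ acts on $\{1', 2', 3'\}$ as the $3$-cycle $(1', 2', 3')$, so $b^{2k}$ and $a^{3\ell}$ both stabilize $3'$, and the exponents $q-4$ (when $q$ is even) and $q-5$ (when $q$ is odd) are even. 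In $\cD_n(b, c, a)$, starting from $0$, letter $c$ yields $1$ and one further $a$ takes us to $2$; then, using the cycles $b = (1, 2, 3)(4, \dots, n-2)$ and $a = (1, 2, \dots, n-2)$: for $q \le 3$ the trace is at most four letters long; for $q \ge 4$ even, $b^2 a^3$ drives us to $4$ and $b^{q-4}$ advances by $q-4$ positions in the $b$-cycle $(4, 5, \dots, n-2)$ to $q$; for $q \ge 5$ odd, $a^4$ drives us to $5$ and $b^{q-5}$ advances to $q$. No wrap-around occurs because $q \le n-2$.

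The most delicate point is this dual compatibility: the trailing exponent of $b$ must simultaneously produce the correct destination in $\cD_n(b, c, a)$ and be even in order to preserve $3'$ in $\cD'_m$. This dual constraint is precisely why the formulas branch on the parity of $q$ and use the offsets $4$ or $5$; the degenerate case $n = 6$ (where the cycle $(4, \dots, n-2)$ collapses to the singleton $\{4\}$) must be checked separately but is immediate. Everything else in the proof reduces to short, direct traces once the structural observation of the second paragraph is in place.
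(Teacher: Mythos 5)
Your proposal is correct and follows essentially the same route as the paper's proof: the path equalities $1'w_q=3'$ and $0w_q=q$ are checked by direct tracing, and the unique-predecessor claim is derived from the facts that $a,b$ permute $Q_n\setminus\{0,n-1\}$ while $c$ is injective on $Q_n\setminus\{n-1\}$ with $0c=1$ and $2c=n-1$. Your structural paragraph merely makes explicit (via the bijection of $w_q=cu$ from $\{1,3,4,\dots,n-2\}$ onto $Q_n\setminus\{0,q,n-1\}$ and the forced equality $1u=q$) what the paper leaves as a one-line remark.
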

\begin{proof}
It is easily verified that in each case $1'w_q=3'$ and  $0w_q=q$.\
Note that $a$ and $b$ induce permutations on $Q_n\setminus \{0,n-1\}$ 
and $c$ is a one-to-one mapping from $Q_n\setminus \{0,n-1\}$ to $Q_n \setminus 
\{0,1\}$. Thus every state in $Q_n\setminus \{0,n-1\}$ that is mapped by $w_q$ to $Q_n\setminus \{0,n-1\}$ has a $w_q$-predecessor in $Q_n\setminus \{0,n-1\}$, and state $q$ has $0$ as its $w_q$-predecessor.  
\end{proof}

\begin{theorem}[Product: Ternary Case]
\label{prop:product}
For $m,n\ge 6$, the product $ L'_m(a,b,c)\cdot L_n(b,c,a)$ meets the bound $(m-1)2^{n-2}+1$.
\end{theorem}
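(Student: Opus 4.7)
The plan is to show that the subset construction applied to the $\eps$-NFA $\cN$ of Figure~\ref{fig:product3} produces exactly $(m-1)2^{n-2}+1$ reachable and pairwise distinguishable states. Since $n-1$ is a sink in $\cD_n$ that is never final, every reachable subset can be normalized by deleting $n-1$, giving the enumeration $\{0'\}$ (one state), $\{1',0\}\cup S$ with $S\subseteq\{1,\dots,n-2\}$ ($2^{n-2}$ states), and $\{p'\}\cup S$ with $p'\in\{2',\dots,(m-1)'\}$ and $S\subseteq\{1,\dots,n-2\}$ ($(m-2)\cdot 2^{n-2}$ states), for a total of $(m-1)2^{n-2}+1$.

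For reachability I would proceed in two layers. Reading $b$ from $\{0'\}$ yields $\{1'\}$, whose $\eps$-closure is $\{1',0\}$. I then prove by induction that every $\{1',0\}\cup S$ with $S\subseteq\{1,\dots,n-2\}$ is reachable, using Lemma~\ref{lem:pred}. From $\{1',0\}\cup T$, reading $w_q$ produces $\{3'\}\cup(\{q\}\cup Tw_q)$, because $1'w_q=3'$ does not retrigger the $\eps$-transition while $0w_q=q$; reading a subsequent $a$ gives $\{1',0\}\cup((\{q\}\cup Tw_q)a)$, since $3'a=1'$ reattaches $0$ through the $\eps$-closure. The unique-predecessor property of $w_q$ supplied by Lemma~\ref{lem:pred} lets me invert this combined action on $\{1,\dots,n-2\}$: working backwards from any target $S$, I can identify $q\in S$ and a predecessor $\{1',0\}\cup T$ of strictly smaller complexity, closing the induction at the base $\{1',0\}$. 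Once every $\{1',0\}\cup S$ is reached, a single letter sends it to some $\{2'\}\cup S''$ (since $1'a=1'b=1'c=2'$), and from $\{2'\}\cup S''$ I would reach all $\{p'\}\cup S'$ by threading paths through $\cD'_m$ from $2'$ to $p'$ while exploiting the permutational action of $a$ and $b$ on $\{1,\dots,n-2\}$ in $\cD_n$ to tune $S'$.

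For distinguishability, a state is accepting iff it contains $1\in Q_n$. Given distinct reachable states $X,Y$, I separate them by cases. If their $\cD'_m$-components differ, the non-returning property of $\cD'_m$ (Lemma~\ref{lem:sf}(3)) supplies a word that routes one component to $1'$ and the other into $\{2',\dots,(m-1)'\}$, while the $\cD_n$-components can be made to collapse to $\{n-1\}$. If the $\cD'_m$-components agree but the $\cD_n$-parts $S_1,S_2$ differ, pick $q\in S_1\triangle S_2$ and use the permutations induced by $a,b$ on $\{1,\dots,n-2\}$ together with $c$ (which sends $2\mapsto n-1$ and $0\mapsto 1$, so that a single preliminary $c$ can be used to steer $0$ into the dead region) to exhibit a word $u$ with $qu=1$ while every other relevant element is routed away from $1$.

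The principal technical obstacle is the reachability induction: the map $T\mapsto(\{q\}\cup Tw_q)a$ does not simply append $q$ to $T$; it permutes existing elements of $T$ and may erase some of them, since $w_q$ can send a state to $n-1$. Controlling this joint action so that every subset of $\{1,\dots,n-2\}$ is eventually attained requires delicately coordinating the choice of $q$ across the whole family $\{w_q\}_{q\in Q_n}$ from Lemma~\ref{lem:pred}, and this is where the bulk of the combinatorial work will lie. Distinguishability is, by comparison, a routine bookkeeping exercise once the permutational structure of $a,b,c$ on $\cD_n$ has been laid out.
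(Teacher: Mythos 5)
Your skeleton is the same as the paper's: the same $\eps$-NFA, the same enumeration of the $(m-1)2^{n-2}+1$ candidate subsets, Lemma~\ref{lem:pred} as the engine for adding one element of $\{1,\dots,n-2\}$ at a time, and distinguishability via the accepting condition ``contains state $1$''. Your first reachability layer is in fact sound: to reach $\{1',0\}\cup S$ with $|S|=k+1$ one takes $S'=Sa^{-1}$, picks $q\in S'$, and applies $w_qa$ to $\{1',0\}\cup(S'\setminus\{q\})w_q^{-1}$, which has size $k$ because Lemma~\ref{lem:pred} guarantees unique $w_q$-predecessors in $Q_n\setminus\{0,n-1\}$ for every state other than $q$. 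The ``erasure'' problem you flag as the principal obstacle evaporates precisely because one works backward from the target, as you half-observe; this is essentially the paper's Cases~1 and~5 combined, so that part is recoverable. But declaring that ``the bulk of the combinatorial work'' still lies there, and leaving it there, means the induction is not actually closed in your write-up.

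The genuine gap is your second layer. You propose to reach $\{p'\}\cup S'$ for $p\neq 1$ by ``threading paths through $\cD'_m$ from $2'$ to $p'$ while exploiting the permutational action of $a$ and $b$ on $\{1,\dots,n-2\}$ in $\cD_n$'', as if the two coordinates could be tuned independently. They cannot. In $\cD'_m(a,b,c)$ the only letter that cycles all of $\{1',\dots,(m-2)'\}$ is $c$, but in $\cD_n(b,c,a)$ the letter $c$ induces $(2\to n-1)(1\to 2)(3,4)$, which \emph{erases} state $2$ from $S$; conversely $a$ and $b$ do permute $\{1,\dots,n-2\}$ in $\cD_n$, but in $\cD'_m$ the letter $a$ only cycles $(1',2',3')$ and $(4',\dots,(m-2)')$ separately, and $b$ sends $2'$ to the empty state. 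So every word that advances the primed component perturbs, and possibly destroys part of, the $Q_n$-component. This is exactly why the paper's proof is a six-case analysis inside a single induction on $|S|$ carried out over all primed components simultaneously: for each target primed state it exhibits a specific word ($b$, $b^2ab$, $bab$, $c$, powers of $a$, \dots) and pulls the target set back through that word's controlled action on $\{1,\dots,n-2\}$. Your distinguishability sketch is also looser than it needs to be (for instance, $c$ sends $0$ to $1$ in $\cD_n(b,c,a)$, not into the dead region; the useful fact is rather that a word ending in $c$ applied to a subset of $Q_n\setminus\{0\}$ cannot produce state $1$), but that part is vagueness rather than error. As it stands, the proof is an outline with the hardest step both unproved and resting on a false premise about independence of the two coordinates.
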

\begin{proof}
Let $P$ consist of the following states
$\{0'\}$, $2^{n-2}$ sets of the form $\{1',0, n-1\}\cup S$ and $(m-2)2^{n-2}$   sets of the form $\{p',n-1\}\cup S$, 
where $S\subseteq \{1,\dots, n-2\}$, and $p'=2',\dots,(m-1)'$ -- a total of $(m-1)2^{n-2}+1$ sets.
We shall prove that all these states of $\cP$ are reachable and pairwise distinguishable. 
This together with the known upper bound will prove that the witnesses indeed meet the bound for product.

Consider the distinguishability of two states in $P$.
If we apply any word ending in $c$ to any subset of $Q_n\setminus \{0\}$, the resulting set does not contain the final state $1$.
Suppose that one of the states in our pair is $\{0'\}$; this is the only state accepting $bc$.
Next, if one of the states has the form $\{1',0,n-1\} \cup S$ and the other is $\{p',n-1\}\cup R$, then the former state accepts $c$, whereas the latter does not.
If $p < q$, then $\{p',n-1\}\cup R$ is distinguished from $\{q',n-1\}\cup S$ by $c^{m-q}$.
This leaves the case where the state in $Q'_m\setminus \{0'\}$ is the same in both sets in our pair. If the sets in $Q_n$ are $R$ and $S$, $R\neq S$, and $q\in R \oplus S$, then 
$a^{m-1-q}$ distinguishes these states.

Now we turn to reachability.
Since state $0'$ is initial in $\cN$, $\{0'\}$ is reachable. First we show that the sets in $P$ are reachable if $S=\emp$.
The set $\{1',0,n-1\}$ is reached by $ba^3$, 
$\{2',n-1\}$  by $ba$, $\{p',n-1\}$ by $bac^{p-2}$ for $p=3,\dots, (m-2)$, and $\{(m-1)',n-1\}$ by $b^3$.
\medskip

Now suppose all sets of the form
$\{1',0, n-1\}\cup S$ and $\{p',n-1\}\cup S$, $p=2,\dots,m-1$, with  $S\subseteq \{1,\dots, n-2\}$ and $|S|=k$, are reachable.
We show that if $k<n-2$, then every set with $S$ of size $k+1$ can be reached.
In each case we assume that $q\notin S$.
\be
\item
Sets with $3'$. We add $q$ to $S$ by applying $w_q$ to $(\{1', 0, n-1\}\cup Sw_q^{-1})$.
By Lemma~\ref{lem:pred}, every state except $q$ in $Q_n \setminus \{0,n-1\}$ has a unique $w_q$-predecessor in $Q_n \setminus \{0,n-1\}$.
		Hence, assuming we have $\{3', n-1\}\cup Sw_q^{-1}$ we can add $q$  since
	$$(\{1', 0, n-1\}\cup Sw_q^{-1}) w_q = \{3', n-1\}\cup S \cup \{q\}.$$ 
\item
Sets with $4'$:
	\be
	\item
	$(\{3', n-1\}\cup Sb^{-1} \cup \{3\}) b = \{ 4', n-1\}\cup S \cup \{1\}$,
	\item
	$(\{3', n-1\}\cup Sb^{-1} \cup \{1\}) b = \{ 4', n-1\}\cup S \cup \{2\}$, 
	\item
	$(\{3', n-1\}\cup Sb^{-1} \cup \{2\}) b = \{4', n-1\}\cup S \cup \{3\}$, 
	\item
	$(\{3', n-1\}\cup Sb^{-1} \cup \{m-2\}) b = \{4', n-1\}\cup S \cup \{4\}$, 
	\item
	$(\{3', n-1\}\cup Sb^{-1} \cup \{q\}) b = \{4', n-1\}\cup S \cup \{q+1\}$, for $q=4,\dots, 	n-2$.
	\ee
\item
Sets with $p'$, $p=5,\dots, m-2$:
	$$(\{(p-1)', n-1\}\cup S(b^2ab)^{-1} \cup \{3\}) b^2ab = \{p', n-1\}\cup S \cup \{1\},$$ 
	$$(\{(p-1)', n-1\}\cup Sa^{-1} \cup \{q-1\}) a = \{p', n-1\} \cup S \cup \{q\} \text{ for $q=2,\dots, n-2$.}$$
\item
Sets with $(m-1)'$:
If $1\notin S$, then $S$ has a $c$-predecessor in $Q_n\setminus \{0,n-1\}$ and $(\{1',0, n-1\}\cup Sc^{-1})c = \{2', n-1\}\cup S \cup \{1\}$. Thus we have 

	$$(\{2', n-1\}\cup S(bab)^{-1} \cup \{1\}) bab = \{(m-1)', n-1\}\cup S \cup \{1\},$$
	$$(\{2', n-1\}\cup Sb^{-1} \cup \{1\}) b = \{(m-1)', n-1\}\cup S \cup \{2\},$$ 
	$$(\{(m-1)', n-1\}\cup Sa^{-1} \cup \{q-1\}) a = \{(m-1)', n-1\}\cup S \cup \{q\}, $$
	for $q=3,\dots, n-2$.

\item
Sets with $1'$:
	$$(\{3', n-1\}\cup Sa^{-1} \cup \{n-2\}) a = \{1',0, n-1\}\cup S \cup \{1\},$$ 
	$$(\{3', n-1\}\cup Sa^{-1} \cup \{1\}) a = \{1',0, n-1\}\cup S \cup \{2\},$$ 
	$$(\{2', n-1\}\cup S(a^2)^{-1} \cup \{1\}) a^2 = \{1',0, n-1\}\cup S \cup \{3\},$$
	$$(\{1',0, n-1\}\cup S(a^3)^{-1} \cup \{q-3\}) a^3 = \{1',0, n-1\}\cup S \cup \{q\}, $$
	for $q=4,\dots, n-2$.
\item
Sets with $2'$:
	$$(\{1', 0,n-1\}\cup Sc^{-1}) c = \{2',n-1\}\cup S \cup \{1\}, \text{ as in Case~4},$$ 
	$$(\{1', 0,n-1\}\cup Sa^{-1} \cup \{q-1\}) a = \{2',n-1\}\cup S \cup \{q\},  $$			for $q=2,\dots, n-2.$ 
\ee
Thus the induction step goes through and all required sets are reachable.
\end{proof}

Cmorik and Jir\'askov\'a~\cite[Theorem 5]{CmJi12} also found binary witnesses that meet the bound $(m-1)2^{n-2}$ in the case where $m-2$ and $n-2$ are relatively prime.
It remained unknown whether the bound $(m-1)2^{n-2}+1$ is reachable with a binary alphabet.
We show in the Appendix that a slight modification of the first witness of~\cite{CmJi12} meets the upper bound exactly.
 
For $m\ge 6,n\ge 3$, let the first DFA be that of~\cite{CmJi12}, except that the set of final states is changed to $\{2',4'\}$; thus let $\Sig=\{a,b\}$,
$\cD'_m(a,b)=(Q'_m,\Sig,\delta',0',\{2',4'\})$, and
let $\cD_n(a,b)=(Q_n,\Sig,\delta,0,\{1\})$, with the transitions shown in Figure~\ref{fig:product2}.
Let $L'_m(a,b)$ and $L_n(a,b)$ be the corresponding languages.

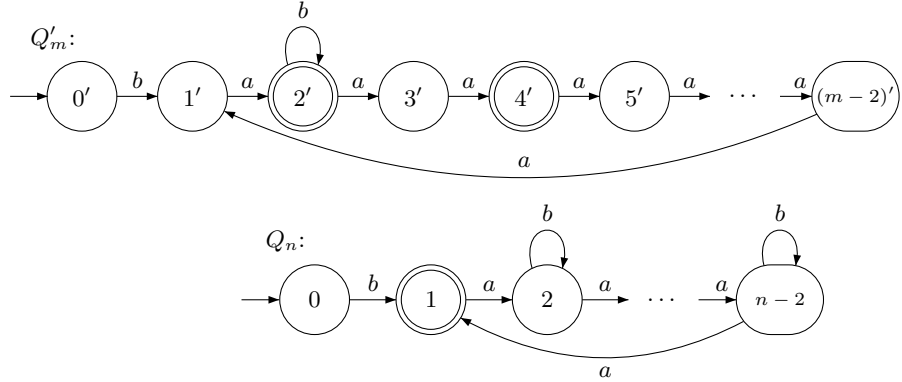
\begin{figure}[ht]
\unitlength 11pt
\begin{center}\begin{picture}(31,11)(0,0)
\gasset{Nh=2.4,Nw=2.4,Nmr=1.2,ELdist=0.3,loopdiam=1.2}
\node[Nframe=n](Qm)(1,10){$Q'_m$:}
\node(0')(2,8){$0'$}\imark(0')
\node(1')(5.8,8){$1'$}
\node(2')(9.6,8){$2'$}\rmark(2')
\node(3')(13.4,8){$3'$}
\node(4')(17.2,8){$4'$}\rmark(4')
\node(5')(21,8){$5'$}
\node[Nframe=n](qdots)(24.8,8){$\dots$}
\node[Nw=3](m-2')(28.6,8){{\scriptsize $(m-2)'$}}
\drawedge(0',1'){$b$}
\drawedge(1',2'){$a$}
\drawedge(2',3'){$a$}
\drawedge(3',4'){$a$}
\drawedge(4',5'){$a$}
\drawedge(5',qdots){$a$}
\drawedge(qdots,m-2'){$a$}
\drawedge[curvedepth=2.8,ELdist=-0.7](m-2',1'){$a$}
\drawloop(2'){$b$}
\node[Nframe=n](Qn)(9,3){$Q_n$:}
\node(0)(10,1){$0$}\imark(0)
\node(1)(14,1){$1$}\rmark(1)
\node(2)(18,1){$2$}
\node[Nframe=n](ndots)(22,1){$\dots$}
\node[Nw=3](n-2)(26,1){\scriptsize$n-2$}
\drawedge(0,1){$b$}
\drawedge(1,2){$a$}
\drawedge(2,ndots){$a$}
\drawedge(ndots,n-2){$a$}
\drawedge[curvedepth=2](n-2,1){$a$}
\drawloop(2){$b$}
\drawloop(n-2){$b$}
\end{picture}\end{center}
\caption{The witness DFA $\cD_m'$ and $\cD_n$ for product. The transitions to the empty states $(m-1)'$ and $n-1$ are omitted.}
\label{fig:product2}
\end{figure}

\begin{theorem}[Product: Binary Case]
\label{thm:product_binary}
For $m,n \ge 6$, $L'_m(a,b)$ is suffix-free and $L'_m(a,b)\cdot L_n(a,b)$ meets the bound $(m-1)2^{n-2}+1$ when $m-2$ and $n-2$ are relatively prime.
\end{theorem}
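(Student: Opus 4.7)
The plan has two parts. For suffix-freeness of $L'_m(a,b)$, I would first observe that $a$ at $0'$ and $b$ at $1'$ both land in the empty state, so every word in $L'_m$ begins with $ba$, i.e.\ $L'_m \subseteq ba\,\Sigma^*$. For each non-initial reachable state $q'$ the aim is $L^{q'}\cap L'_m=\emp$. For $q'\in\{1',3',4',\ldots,(m-2)'\}$ the language $L^{q'}$ is either empty, is $\{\eps\}$, or begins with $a$, so is disjoint from $L'_m\subseteq b\,\Sigma^+$. The key case is $q'=2'$: the decomposition $L'_m=ba\cdot L^{2'}$ turns any $w\in L^{2'}\cap L'_m$ into $w=ba\,v$ with $v\in L^{2'}\cap L^{3'}$. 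Tracking the cycle position under $a$, a surviving $v$ with $k$ letters $a$ from starting position $i$ ends at $((i-1+k)\bmod(m-2))+1$; landing in $\{2,4\}$ forces $k\in\{0,2\}\pmod{m-2}$ from position $2$ and $k\in\{1,m-3\}\pmod{m-2}$ from position $3$. For $m\ge 6$ these residue sets are disjoint, and the analogous analysis of where the $b$'s may legally occur (only when at position $2$ in the cycle) gives the same incompatibility; hence $L^{2'}\cap L^{3'}=\emp$ and $L'_m$ is suffix-free.

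For the product bound I would enumerate $(m-1)2^{n-2}+1$ target subsets of $Q'_m\cup Q_n$ in the subset construction of the product NFA: the initial $\{0'\}$ together with $\{p',n-1\}\cup S$ for each $p'\in\{1',2',\ldots,(m-1)'\}$ and each $S\subseteq\{1,\ldots,n-2\}$, augmented by $\{0\}$ when $p'\in F'=\{2',4'\}$ (via the $\eps$-closure). Distinguishability is the routine part: any two such target subsets are separated by an appropriate power of $a$ that either cyclically shifts a distinguishing element of $\cD_n$ into the final state $1$ or routes exactly one of the two $\cD'_m$-components into $F'$.

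The main obstacle is reachability, and the crux is coprimality. By CRT, $\gcd(m-2,n-2)=1$ supplies an integer $k^*$ with $k^*\equiv 0\pmod{m-2}$ and $k^*\equiv 1\pmod{n-2}$, so $a^{k^*}$ fixes the $\cD'_m$-component while shifting the $\cD_n$-component by exactly one; together with the plain $a$ (which shifts both cycles along their joint cycle of length $(m-2)(n-2)$) this gives full control of the $\cD_n$-subset relative to the $\cD'_m$-state. The other ingredient is $b$ at state $2'$, which inserts $1$ into the $\cD_n$-subset $S$ while filtering to $S\cap\{2,n-2\}$; the $b$-self-loops at $\{2,n-2\}$ in $\cD_n$ let us retain a selected pair of elements of $S$ across a $b$-application. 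The plan is then induction on $|S|$: the empty and singleton cases are reached by short explicit words, and in the inductive step one new element is added by using $a^{k^*}$-powers to position two previously added elements at $\{2,n-2\}$, applying $b$ (which preserves them and inserts $1$), and then shifting back. Remaining $p'$-values, including the dead state $(m-1)'$, are reached by further $a$-shifts and by the ``$b$-kill'' transitions from non-$2'$ states that transfer the $\cD_n$-data into $(m-1)'$-centred subsets. The delicate point, which is where the coprime condition is really needed, is arguing that the inductive step can be executed without losing previously added elements of $S$ to the destructive binary $b$; this forces the $a$-positioning to be carried out simultaneously in $\cD'_m$ (returning to $2'$) and in $\cD_n$ (moving the right elements to $\{2,n-2\}$), which is exactly the flexibility CRT delivers.
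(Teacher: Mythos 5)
Your suffix-freeness argument for $L'_m(a,b)$ is correct and is essentially the paper's: both proofs reduce to the fact that $2'$ is the only non-initial state of $\cD'_m$ that survives $b$, so the word witnessing a suffix violation must be a power of $a$ after its forced $b$-prefix, and powers of $a$ preserve the cyclic distance $1$ between the two tracked states while the final states $2'$ and $4'$ sit at distance $2$; your residue computation ($\{0,2\}$ versus $\{1,m-3\}$ modulo $m-2$, disjoint for $m\ge 6$) is a fine way to say this.

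The reachability argument for the product bound has a fatal gap, rooted in a misreading of $\cD_n$. You take $b$ in $\cD_n$ to fix only the two states $2$ and $n-2$ and to kill every other state of $\{1,\ldots,n-2\}$, and your inductive step accordingly preserves only a ``selected pair'' of elements of $S$ across each application of $b$. But the $\eps$-transition is the sole source of nondeterminism, $0a=n-1$, and the first component is always a single state; hence the only way any element is ever added to the $\cD_n$-part of a reachable subset is by applying $b$ while $0$ is present, and under your reading that very $b$ erases all of $S$ except $S\cap\{2,n-2\}$. Every reachable subset would then satisfy $|S|\le 3$, your induction could never pass from $|S|=3$ to $|S|=4$, and the bound $(m-1)2^{n-2}+1$ would simply be false for this automaton. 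In the witness of Cmorik and Jir\'askov\'a, to which the paper defers for reachability, $b$ fixes \emph{all} of $2,\ldots,n-2$ in $\cD_n$ and kills only state $1$; the loops drawn at $2$ and $n-2$ in Figure~\ref{fig:product2} are the visible endpoints of that whole range, not an exhaustive list. With that correction your CRT idea works directly and without any delicacy: $a^{m-2}$ returns the first component to $2'$ while rotating $S$ by $(m-2)\bmod(n-2)$, coprimality yields every rotation, so one rotates $S$ until position $1$ is vacated (or holds the element one is willing to overwrite), applies $b$ to insert $1$ without disturbing anything else, and repeats. A smaller point: in your distinguishability sketch, routing exactly one first component into $F'$ does not by itself separate two subsets, since acceptance requires the $\cD_n$-part to contain state $1$; one must append a further $b$ to convert the injected $0$ into $1$, which is why the paper uses $a^{m-p}b$ followed by another $b$.
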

\begin{proof}
First we need to show that $L'_m(a,b)$ is minimal and suffix-free.
For minimality, it is easy to verify that every pair of states is distinguished by a word of the form $a^i b$.
Suppose that $L'_m(a,b)$ is not suffix-free; then there are some words $u,v$ such that $v \in L'_m(a,b)$ and $uv \in L'_m(a,b)$.
Since there is no transformation mapping $0'$ to itself, this means that $v$ maps both $0'$ and $q' = 0' u \in \{1',\ldots,(m-2)'\}$ to a final state.
Clearly $v=bw'$, for some $w$ and so $q$ must be $2'$. Then $w$ maps $1'$ and $2'$ to a final state. Since these two states cannot be merged to any state other than $(m-1)'$, and $b$ sends every state from $\{1',\ldots,(m-2)'\}$ except $2'$ to $(m-1)'$, we have $w = a^i$. But $a^i$ preserves the distance between the states in the cycle, and so $1'$ and $2'$ cannot be mapped simultaneously  to $2'$ and $4'$.

The proof for meeting the bound $(m-1)2^{n-2}+1$ is similar to that of~\cite[Theorem 5]{CmJi12}.
We construct the NFA for $L'_m(a,b) L_n(a,b)$ as usual.
We show that $(m-1)2^{n-2}+1$ subsets of the states are reachable and distinguishable.
Since a subset $S$ without $n-1$ cannot be distinguished from $S \cup \{n-1\}$, we consider only $S$ as the representative for both $S$ and $S \cup \{n-1\}$.
Similarly, a reachable subset $S$ contains 0 if and only if it contains either $2'$ or $4'$; we use only $S \setminus \{0\}$ as the representative for these subsets.

Let $X \subseteq \{1,\ldots,n-2\}$.
If $X = \emptyset$, then clearly each of the $m-1$ subsets $\{p'\}$ for $p \in \{1,\ldots,n-1\}$ is reachable by $ba^{p-1}$.
Let $X \neq \emptyset$.
To show that each of the $(m-2)(2^{n-2}-1)$ subsets $S = \{p'\} \cup X$ for $i \in \{1,\ldots,q_{m-2}\}$ is reachable, we follow  the proof of~\cite[Theorem 5]{CmJi12} exactly. In this proof note that $4' \in F$ makes no difference, since $b$ maps $4'$ to $(m-1)'$, and so if applied to a subset containing $4'$, it results in a set with 
$(m-1)'$.
For the $2^{n-2}-1$ subsets $S = \{(m-1)'\} \cup X$ consider $S' = \{1'\} \cup X$. Then $S = S'b$, and since $S'$ is reachable, so is $S$.

Finally, we show that all these $(m-1)2^{n-2}$ sets together with the initial state $\{0'\}$ are distinguishable.
Set $\{0'\}$ is distinguished from every other subset by $bab$.
Consider $\{p'\} \cup X$ and $\{q'\} \cup Y$ for distinct $p,q \in \{1,\ldots,m-2\}$, and some $X,Y \subseteq \{1,\ldots,n-2\}$.
Then $w={a^{m- p}b}$ maps $p'$ to $2'$, and $q'$ to $(m-1)'$. Thus $(\{p'\} \cup X)wb$ contains the final state $1$, and $(\{q'\} \cup X)wb$ does not.
Consider $\{p'\} \cup X$ and $\{q'\} \cup Y$ with $X \neq Y$; then $X$ and $Y$ differ in some $r \in \{1,\ldots,n-2\}$.
Thus $a^{n-r-1}$ distinguishes these subsets.
\end{proof} 

\begin{theorem}
Suppose $m,n \ge 4$ and $L'_m L_n$ meets the bound $2^{n-2}+1$. Then
the transition semigroup $T_n$ of a minimal DFA $\cD_n$ of $L_n$ is a subsemigroup of $\Vsf(n)$ and is not a subsemigroup of $\Wsf(n)$.
\end{theorem}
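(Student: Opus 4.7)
The plan is to adapt the reasoning of Theorem~\ref{thm:star_witness_subsemigroup} from the star operation to the product. By Proposition~8, the unique-maximality characterization of $\Vsf(n)$, it suffices to show that every pair $\{p,q\}\subseteq\{1,\ldots,n-2\}$ with $p\neq q$ is colliding in $T_n$: this gives $T_n\subseteq\Vsf(n)$, and any colliding transformation $t$ has $0t=p\neq n-1$ and $rt=q\neq n-1$ for some $r\in\{1,\ldots,n-2\}$, so $t\notin\Wsf(n)$, yielding $T_n\not\subseteq\Wsf(n)$ at once.

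First I would form the standard $\eps$-NFA $\cN$ for $L'_m L_n$ (each final state of $\cD'_m$ is made non-final and given an $\eps$-transition to $0\in Q_n$) and determinize it via the subset construction to a DFA $\cP$. Because $\cD'_m$ is deterministic and $L'_m$ is suffix-free (so $0'$ is never revisited), the state of $\cP$ reached on input $w$ has the form $\{0'w\}\cup\{0v:w=uv,\ 0'u\in F'_m\}$; in particular, its $Q_n$-part is exactly the collection of $0v$ for those suffixes $v$ of $w$ that follow a visit to a final state of $\cD'_m$.

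The core claim, mirroring the star proof, is: if $\{p,q\}\subseteq\{1,\ldots,n-2\}$ is not colliding in $T_n$, then no reachable state of $\cP$ contains both $p$ and $q$ in its $Q_n$-part. Otherwise, two distinct suffixes $v_1,v_2$ of $w$ would satisfy $0v_1=p$ and $0v_2=q$; writing (WLOG) $v_1=v''v_2$ with $v''\neq\eps$ and setting $r=0v''$, Lemma~\ref{lem:sf}(3) forces $r\neq 0$, while $r\neq n-1$ because $rv_2=p\neq n-1$. Thus $r\in\{1,\ldots,n-2\}$, and the transformation induced by $v_2$ maps $0\mapsto q$ and $r\mapsto p$, witnessing that $\{p,q\}$ is colliding---a contradiction.

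I would then close by counting equivalence classes of reachable states of $\cP$. Each non-initial reachable state is determined (up to equivalence in the minimal product DFA) by its singleton $Q'_m$-component $q'\in Q'_m\setminus\{0'\}$ together with its trace on $\{1,\ldots,n-2\}$: the state $n-1$ is a sink and contributes nothing, and the presence of $0$ is forced by whether $q'\in F'_m$. This gives at most $(m-1)2^{n-2}+1$ classes, so the bound $(m-1)2^{n-2}+1$ is met only if every pair $(q',X)$ with $q'\in Q'_m\setminus\{0'\}$ and $X\subseteq\{1,\ldots,n-2\}$ is realized by a reachable state---in particular every $X$ containing both $p$ and $q$. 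The central claim then yields the contradiction, completing the proof. The main obstacle is justifying the classification of non-initial equivalence classes precisely by $(q',X)$: one must argue that when the bound is met, these potential classes cannot be collapsed by coincidental language equivalences, which is forced by the rigidity of the count reaching $(m-1)2^{n-2}+1$.
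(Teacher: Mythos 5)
Your proof is correct and follows essentially the same route as the paper's: both reduce the claim to showing that every pair in $\{1,\ldots,n-2\}$ is colliding, and both derive this from the fact that meeting the bound forces the subset automaton of the product to reach a set containing any given pair, which can only happen via a transformation of $\cD_n$ witnessing a collision (and any such transformation lies outside $\Wsf(n)$). The only differences are cosmetic: the paper cites Han and Salomaa's Lemma~9 for the classification of reachable and distinguishable subsets where you derive it directly from suffix-freeness and unique reachability, and the ``obstacle'' you flag at the end is not one---collapses between distinct $(q',X)$-buckets could only lower the state count further, so meeting the bound already forces every bucket, hence every $X$ containing $\{p,q\}$, to be realized.
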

\begin{proof}
Let $\cD'_m = (Q'_m, \Sig, \delta', 0', F')$ with $Q'_m=\{0',\ldots,(m-1)'\}$, and let $\cD_n = (Q_n, \Sig, \delta, 0, F)$ with $Q_n = \{0,\ldots,n-1\}$.
We construct the NFA for $L'_m L_n$ as usual.

From~\cite[Lemma 9]{HaSa09} we know that the set of reachable and distinguishable subsets of
$Q'_m \cup Q_n$ can be represented by:
$\{p'\} \cup X$ for each $p \in \{1,\ldots,m-1\}$ and $X \subseteq \{1,\ldots,n-2\}$, or by $\{0'\}$.
A reachable subset $S$ contains $0$ if and only if $S \cap F' \neq \emptyset$. 
Also $S$ is not distinguishable from $S \cup \{(m-1)'\}$ and $S \cup \{n-1\}$.
To reach the bound $(m-1)2^{n-2}+1$, all these subsets or their equivalents must be reachable.

Suppose that $p,q \in \{1,\ldots,n-2\}$ are not colliding in $Q_n$.
Consider $S$ that contains both $p$ and $q$. Then $S$ is reached from some $S'$ by a transformation $t$, where $S'$ contains $0$ and a state $r \in \{1,\ldots,n-2\}$ such that $0t=p$ and $rt=q$ (or $0t=q$ and $rt=p$). But since $p$ and $q$ are not colliding, there is no such transformation in $T_n$.
Thus all pairs in $Q_n \setminus \{0,n-1\}$ must be colliding and $T_n$ is a subsemigroup of $\Vsf$.
\end{proof}

\subsection{Boolean Operations}

Let $\cD = \cD(a,b,c)$ be the DFA of Definition~\ref{def:D5}, and let $L = L(a,b,c)$ be the language of this DFA. 
Consider the partial permutation $\pi'(a) = a$, $\pi'(b) = b$, and $\pi'(c) = -$. The dialect associated with $\pi'$ is $\cD'(a,b,-)$, which is $\cD(a,b,c)$ restricted to the alphabet $\{a,b\}$.
This is our first witness for boolean operations, and we prime its states to distinguish them from those of the second witness defined below. This DFA is illustrated in Figure~\ref{fig:boolean} for $m=9$.

Now take the partial permutation $\pi$, where $\pi(a) = -$, $\pi(b) = b$, and $\pi(c) = a$;
here $a$ plays the role of $c$. 
Thus $\cD(-,b,a)$ is the DFA in which  
$a \colon (0\to n-1)(1,\dots,n-2)$ and $b \colon (2\to n-1)(1\to 2)(0 \to 1)(3,4)$, as illustrated in Figure~\ref{fig:boolean} for $n=8$.
This DFA is our second witness for boolean operations.
The language of $\cD(-,b,a)$ is the dialect $L(-,b,a)$ of $L$.

\begin{figure}[t]
\unitlength 12pt
\begin{center}\begin{picture}(31,12)(0,-4)
\gasset{Nh=2.,Nw=2,Nmr=1.5,ELdist=0.3,loopdiam=1.0}
\node[Nframe=n](Qm)(1,8){$Q'_{ 9}$:}
\node(0')(2,6){$0'$}\imark(0')
\node(1')(5.5,6){$1'$}\rmark(1')
\node(2')(10,6){$2'$}
\node(3')(14,6){$3'$}
\node(4')(18,6){$4'$}
\node(5')(22,6){$5'$}
\node(6')(26,6){$6'$}
\node(7')(30,6){$7'$}
\drawloop(5'){$b$}
\drawloop(6'){$b$}
\drawloop(7'){$b$}
\drawedge(0',1'){$b$}
\drawedge(1',2'){$a,b$}
\drawedge(2',3'){$a$}
\drawedge(3',4'){$b$}
\drawedge(4',5'){$a$}
\drawedge(5',6'){$a$}
\drawedge(6',7'){$a$}

\drawedge[curvedepth=1.5](3',1'){$a$}
\drawedge[curvedepth=1.5](4',3'){$b$}
\drawedge[curvedepth=1.6](7',4'){$a$}

\node[Nframe=n](Qn)(1,2){$Q_{ 8}$:}
\node(0)(2,0){$0$}\imark(0)
\node(1)(5.5,0){$1$}\rmark(1)
\node(2)(10,0){$2$}
\node(3)(14,0){$3$}
\node(4)(18,0){$4$}
\node(5)(22,0){$5$}
\node(6)(26,0){$6$}
\drawedge(0,1){$b$}
\drawedge(1,2){$a,b$}
\drawedge(2,3){$a$}
\drawedge(3,4){$a,b$}
\drawedge(4,5){$a$}
\drawedge(5,6){$a$}

\drawedge[curvedepth=1.5](4,3){$b$}
\drawedge[curvedepth=3.0](6,1){$a$}

\drawloop(5){$b$}
\drawloop(6){$b$}

\end{picture}\end{center}
\caption{Witnesses $\cD'_9(a,b,-)$ and $\cD_8(-,b,a)$ for boolean operations. The empty states $8'$ and $7$ and the transitions to them are omitted.}
\label{fig:boolean}
\end{figure}

\begin{theorem}
\label{thm:boolean}
For $m,n \ge 6$, $L'_m(a,b,-)$ and $L_n(-,b,a)$ meet the bounds $mn - (m+n-2)$ for union and symmetric difference, $mn - 2(m+n-3)$ for intersection, and $mn - (m+2n-4)$ for difference.
\end{theorem}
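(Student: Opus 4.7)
The plan is to build the direct-product DFA $\cA = \cD'_m(a,b,-) \times \cD_n(-,b,a)$ on alphabet $\{a,b\}$ with state set $Q'_m \times Q_n$, initial state $(0',0)$, componentwise transitions, and operation-specific final sets. Because both component DFAs are non-returning by Lemma~\ref{lem:sf}(3), every pair $(0',q)$ with $q \ne 0$ and every pair $(p',0)$ with $p' \ne 0'$ is unreachable, so exactly $m+n-2$ pairs are unreachable and at most $mn-(m+n-2)$ pairs remain to analyze.

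The first technical step is to show that all remaining $mn-(m+n-2)$ pairs are in fact reachable. Starting from $(0',0)\xrightarrow{b}(1',1)$, I would use that $a$ in $\cD_n(-,b,a)$ is the full $(n-2)$-cycle on $\{1,\ldots,n-2\}$, that $a$ in $\cD'_m(a,b,-)$ is the product of the 3-cycle $(1',2',3')$ and the $(m-5)$-cycle on $\{4',\ldots,(m-2)'\}$, and that $b$ supplies the bridging transitions $0'\to 1'\to 2'\to(m-1)'$ and $3'\leftrightarrow 4'$ in the first coordinate and $0\to 1\to 2\to(n-1)$ and $3\leftrightarrow 4$ in the second. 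A systematic case analysis that interleaves powers of $a$ with occasional $b$'s — switching between the two $a$-orbits in the first coordinate via $3'\leftrightarrow 4'$ — reaches every pair in $\{1',\ldots,(m-2)'\}\times\{1,\ldots,n-2\}$. Pairs with an empty coordinate are obtained by one more letter sending $0'$ or $2'$ to $(m-1)'$, or $0$ or $2$ to $n-1$.

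With reachability established, distinguishability for union and symmetric difference is direct: for two distinct reachable pairs, a suffix word can drive one coordinate into its unique final state while steering the other to its empty sink, so all $mn-(m+n-2)$ reachable classes are distinct, meeting that bound. For intersection, the unique accepting pair is $(1',1)$ and I would show that the dead pairs are exactly those with an empty coordinate: any non-empty pair can be driven to $(1',1)$ by a suitable combination of $a$-powers and $b$'s that aligns the three $a$-cycles (for example, $a^3 b a^2$ takes $(1',2)$ through $(1',5)\to(2',5)\to(3',6)\to(1',1)$). The $(m-1)+(n-1)-1=m+n-3$ reachable dead pairs collapse to one sink, a reduction of $m+n-4$, giving $mn-2(m+n-3)$. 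For difference $L'_m\setminus L_n$, accepting pairs are $\{(1',q):q\ne 1\}$ and a pair is dead iff $p'=(m-1)'$: for any non-empty $p'$ the length-$(n-2)$ $a$-cycle (with $n\ge 6$) provides enough flexibility to send $p'$ to $1'$ while keeping $q$ off $1$. The $n-1$ reachable dead pairs collapse to one sink, a reduction of $n-2$, giving $mn-(m+2n-4)$.

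The main obstacle is the reachability step together with the cycle-alignment verifications needed to identify the dead classes for intersection and difference; both reduce to case analyses exploiting the bridging action of $b$ between the two $a$-orbits in $\cD'_m$ and between the main cycle and the empty sink in $\cD_n$. Once reachability and the dead-class identification are in place, pairwise distinguishability of the live classes is routine.
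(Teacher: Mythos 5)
Your overall strategy is the same as the paper's: form the direct product, observe that non-returning kills the $m+n-2$ pairs with a $0'$ or $0$ coordinate other than $(0',0)$, prove the remaining $(m-1)(n-1)+1$ pairs reachable, and then obtain the three bounds by identifying which reachable pairs merge (none for union/symmetric difference; the $m+n-3$ pairs with an empty coordinate for intersection; the $n-1$ pairs with first coordinate $(m-1)'$ for difference). Your counts and your identification of the dead classes are correct.

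The gap is that the two steps you defer --- reachability of all of $\{1',\dots,(m-2)'\}\times\{1,\dots,n-2\}$, and the ``cycle-alignment'' needed to drive an arbitrary live pair to $(1',1)$ --- are precisely where the entire difficulty of this theorem lives, and your sketch gives no evidence that they go through. In the first coordinate $a$ acts as a $3$-cycle on $\{1',2',3'\}$ times an $(m-5)$-cycle on $\{4',\dots,(m-2)'\}$, while in the second it acts as a single $(n-2)$-cycle; $b$ only bridges these orbits at a few states. Synchronizing a $3$-cycle against an $(n-2)$-cycle is obstructed when $n-2\equiv 0\pmod 3$, and the reachability of a pair such as $((m-2)',n-2)$ depends on the residue of $n-m$ modulo $3$; the paper's proof splits into three congruence cases plus five exceptional small cases ($m=n-1$, $n<12$) that must be handled by explicitly exhibited words, and its Claim~1 (mapping any live pair to $(1',1)$) needs a separate argument exactly when $n-2\equiv 0\pmod 3$. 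Your one concrete computation betrays the problem: $a^3ba^2$ sends $(1',2)$ to $(1',7)$ (second coordinate $6a=7$ on the $(n-2)$-cycle), not to $(1',1)$, unless $n=8$. Finally, distinguishability for union is not quite ``drive one coordinate to final, the other to the sink'': after mapping $(p_1',q_1)$ to $(1',n-1)$ the other pair may land on a state with second coordinate $1$, which is also final for union, so a second correcting word is needed (the paper's second application of its Claim~2); and when the two pairs share their first coordinate the common image may be $1'$, again making both final. None of these obstructions is fatal --- the paper resolves all of them --- but as written your argument asserts rather than proves the parts that carry the content of the theorem.
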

\begin{proof}
As usual, we construct the direct product $\cP$ of $\cD'_m$ and $\cD_n$.
The initial state of $\cP$ is $(0',0)$ which reaches only $(1',1)$ and can never be re-entered. Thus $0'$ and $0$ do not appear in any reachable  state of $\cP$ other than $(0',0)$. We show that the remaining $(m-1)(n-1)$ states are all reachable, for a total of $1+(m-1)(n-1)= mn-(m+n-2)$ states. Let $m$ define the row and $n$ the column in the direct product.

As a preliminary step we show how $((m-2)',n-2)$ can be reached. 
First we reach $(4',4)$ by $ba^2b$.
If $m=6$, then $(4',4) a^{q-4} = (4', q)$ for $q=5,\dots, (n-2)$.
Otherwise, if $m> n$, then $(0',0) ba^3 (ba)^2= (5',4)$, $(5',4) (ba)^{m-n-1} = ((m-n+4)',4)$, and
$(m-n+4,4) a^{n-6} = ((m-2)',n-2)$. 
If $m=n$, then $(4',4)a^{n-6}= ((m-2)',n-2)$.
Let $m<n$.
Observe that if $(n-m) \equiv p \pmod 3$ for $0 \le p \le 2$, then
$((3-p)',4) a^{n-m} b a^{m-6} = ((m-2)',n-2)$.
Hence we need to reach $((3-p)',4)$.
We consider the following three cases:
\be
\item $(n-m) \equiv 0 \pmod{3}$.

We reach $(3',4)$ by $b^2a^2ba$, and then apply $a^{n-m} b a^{m-6}$.

\item $(n-m) \equiv 2 \pmod{3}$.

We reach $(1',4)$ by $b^2a^2$, and then apply $a^{n-m} b a^{m-6}$.

\item $(n-m) \equiv 1 \pmod{3}$.

If $m \le n-4$ then $(1',4) a^3 b a^{n-m-3} b a^{m-6} = ((m-2)',n-2)$.
Otherwise $m=n-1$.
If $n < 12$, then we have five cases, where $(0',0)ba^2=(3',3)$:
\be
\item $m=6$, $n=7$: $(4',4)a^2ba^2 =(2',4)$.
\item
$m=7$, $n=8$: $(3',3) (ab)^2 a^4 ba^2 =(2',4)$.
\item
$m=8$, $n=9$: $(3',3) a^8 =(2',4)$.
\item
$m=9$, $n=10$: $(3',3) aba^9 =(2',4)$.
\item
$m=10$, $n=11$: $(3',3) ab a^4b a^3b ab a^6 ba^2 =(2',4)$.
\ee

Otherwise we reach $(2',4)$ as follows:
$(4',4) a^{m-6} = (4',n-2)$,
$(4',n-2) a^{m-6} = (4',n-7)$,
$(4',n-7) bab a^2b a^3 = (2',1)$,
$(2',1) a^3 = (2',4)$.
\ee
Now, having reached $((m-2)',n-2)$, we can reach all the remaining pairs: We have
$((m-2)',n-2) (ab)^2 b = ((m-1)', 3)$, and from $((m-1)', 3)$
all pairs $((m-1)', q)$ for $1 \le q \le n-1$ can be reached, since $Q_n \setminus \{0,n-1\}$ is a strongly connected component in $\cD_n$.
Similarly, $((m-2)',n-2) a b^2 = (4',n-1)$, and all pairs $(p',n-1)$ for $1 \le p \le m-1$ can be reached.
For the remaining pairs we proceed as follows:

	\be
	\item Column 3: 
		\be 
		\item
		$((m-2)',n-2) aba = (1',3)$, 
		\item
		$(0,0) ba^3b = (2',3)$,
		\item
		$(0,0) ba^2 = (3',3)$, 
		\item
		$(2',3) ab = (4',3)$,
		\item
		$((p-1)',3) ab = (p',3)$ for $p=5,\dots,m-2$. 
		\ee
	\item Column 4: 
		\be 
		\item
		$(3',3) a = (1',4)$,
		\item
		$(3',3) b = (4',4)$, 
		\item
		$((p-1)',3) a = (p',4)$ for $p=2,3,5,\dots,m-2$. 
		\ee
	\item Column $q$ for $q=5,\dots, n-2$: 
		\be 
		\item
		$(3',q-1) a = (1',q)$, 
		\item
		$(3',q) b = (4',q)$,
		\item
		$((p-1)',q-1) a = (p',q)$ for $p=2,3,5,\dots,m-2$.
		\ee
 	\item Column $1$: 
		\be 
		\item
		$(0',0) b = (1',1)$,
		\item
		$((m-2)',n-2) a = (4',1)$.
		\item
		$((p-1)',n-2) a = (p',1)$ for $p=2,3,5,\dots,m-2$.
		\ee
	\item Column $2$: 
		\be 
		\item
		$(3',1) a = (1',2)$,
		\item
		$(0',0) a^2 = (2',2)$,
		\item
		$(3',1) b = (4',2)$,
		\item
		$((p-1)',1) a = (p',2)$ for $p=3,5,\dots,m-2$. 
		\ee
	\ee
	
As we discussed in Subsection~\ref{subsec:product}, for each $x\in \Sig$ every state $q\in Q_n \setminus \{0,n-1\}$ has a unique predecessor state $p\in Q_n\setminus \{n-1\}$. It follows that if $q w = p \in Q_n \setminus \{0,n-1\}$, for some state $q$ and word $w$, then $r w \neq p$ for $r \neq q$.
The same facts apply to $Q'_m$.
We shall need the   following two claims:
\medskip

\noin
\textit{Claim 1.} From any pair $(q',p)$ with $1 \le q \le m-2$, $1 \le p \le n-2$ we can reach $(1',1)$.

First we find a word $w$ such that $q'w \in \{1',2',3'\}$ and $pw = 1$.
Note that it is sufficient to find $u$ such that $q'u\in \{1',2',3'\}$ and $pu \in Q_n \setminus \{0,n-1\}$,
because then $w = u a^{n-1-(pu)}$ does the job.
So if $q' \in \{1',2',3'\}$, then we are done.
Otherwise use $a^{m-1-q}$, which maps $q'$ to $4'$.
If $p a^{m-1-q} \neq 2$, then we use $b$ to map $4'$ to $3'$ and keep the state of 
$\cD_n$ in $Q_n\setminus \{0,n-1\}$, and again we are done.
Suppose $p a^{m-1-q} = 2$. If $m=6$ then $4' ab = 3'$ and $p a^{m-1-q} ab \neq 2$, and we are done.
Otherwise $4' a^{m-5} = 4'$, and also $4' a b a^{m-6} = 4'$.
If $2 a^{m-5} \neq 2$ then we can apply an additional $b$ and be done.
If $2 a^{m-5} = 2$, then $2 a b a^{m-6} = 3$, and again by applying an additional $b$ we are done.

Now let $q' \in \{2',3'\}$. We show that there is a word $w$ such that $q' w = 1$ and $1 w = 1$.
If $q' a^{n-2} = 1'$ then we are done, and if $q' a^{n-2} \in \{2',3'\} \setminus \{q'\}$, then $q' (a^{n-2})^2 = 1$ and we have reached $(1',1)$.
So assume $q' a^{n-2} = q'$; thus $(n-2) \equiv 0 \mod{3}$, and so $n \ge 8$.

If $q'=3'$ then $(3',1) a^4 b a^{n-1-4} = (r',1)$ with $r' \neq 3'$, so we are done by $(a^4 b a^{n-1-4})^2$.
If $q'=2'$ then $(2',1) a^5 b a^{n-1-5} = (r',1)$ with $r' \neq 2'$, so we are done by $a^5 b (a^{n-1-5})^2$.

\noin
\textit{Claim 2.}
For any $q' \in Q'_m \setminus \{0',(m-1)'\}$ and $p \in Q_n \setminus \{0,(n-1)\}$ there is a word such that $q'w \in Q'_m \setminus \{0',(m-1)'\}$ and $pw = n-1$.
 It follows that $(q',p)$ can be mapped to $(r',n-1)$, for any state $r' \in Q'_m \setminus \{0'\}$.

If $q' \in \{1',2',3'\}$ then let $w_1 = a^2b$, $ab$, or $b$, respectively, and $\eps$ otherwise; so $w_1$ maps $q'$ to $\{4',\ldots,(m-2)'\}$.
Then let $w_2 = a^{n - (p w_1)}$; so $w_2$ maps $p w_1$ to $2$, and $q w_1$ is still in $\{4',\ldots,(m-2)'\}$.
Then $w = w_1 w_2 b$ satisfies the claim.
Since $Q_m'\setminus \{0',(m-1)'\}$ is strongly connected, we can map $(q'w,n-1)$ to any $r' \in Q'_m \setminus \{0'\}$.

Now we consider the four cases of the operations.
In all cases pair $(0',0)$ is distinguished as the only non-empty state which does not accept any word starting with $a$.

\be
\item Union and symmetric difference.

Consider $(q'_1,p_1)$ and $(q'_2,p_2)$ with $q'_1 \neq q'_2$, where $1 \le q_1,q_2 \le m-1$ and $1 \le p_1,p_2 \le n-1$.
Without loss of generality, $q'_1 \neq (m-1)'$.
The same arguments here work for both operations, since we are not mapping the pairs to $(1',1)$.
By Claim~2 we can map $(q'_1,p_1)$ to $(1,n-1)$ by some word $w$. Then $q'_2 w \neq 1$.
If also $p_2 \neq 1$, then $(q'_2,p_2)$ is not  final  and $w$ distinguishes our pairs.
Otherwise we apply Claim~2 once more for $(1,p_2)$, obtaining a word $u$ such that $(1,n-1)u = (1,n-1)$, and $(q'_2,1)u = (q'_2 u,n-1)$.
Since $1u = u$, we have $q'_2 u \neq 1$, and so $w' u$ distinguishes our pairs.

Assume now $q'_1 = q'_2$ and $p_1 \neq p_2$, where $p_1 \neq n-1$. Let $w$ be a word mapping $p_1$ to $1$; then $p_2 w \neq 1$.
If $q'_1 w \neq 1'$, $(q'_2,p_2) w = (q'_1,p_2) w$ is not  final  so $w$ distinguishes our pairs.
Otherwise $w a^3 b^2$ maps $q'_1$ to $(m-1)'$, and $p_1$ to $4$.
Thus $w a^3 b^2 a^{n-6}$ maps $p_1$ to $1$, and since $p_2$ is mapped elsewhere and $q'_1 = q'_2$ to $(m-1)'$, our pairs are distinguished.

\item Intersection.

Consider $(q'_1,p_1)$ and $(q'_2,p_2)$ with $q'_1 \neq q'_2$ or $p_1 \neq p_2$, where $1 \le q_1,q_2 \le m-2$ and $1 \le p_1,p_2 \le n-2$.
By Claim~1 we can map $(q'_1,p_1)$ to the  final state $(1',1)$ by some word $w$.
Then either $q'_2 w \neq 1'$ or $p_2 w \neq 1$, $(q'_2,p_2)$ is not  final  and our pairs are distinguished.
Together with $(0',0)$ and $((m-1)',n-1)$ these give $2+(m-2)(n-2) = mn - 2(m+n-3)$ distinguished pairs.

\item Difference.

Consider $(q'_1,p_1)$ and $(q'_2,p_2)$ with $q'_1 \neq q'_2$, where $1 \le q_1,q_2 \le m-2$ and $1 \le p_1,p_2 \le n-1$.
This follows  in  exactly the same  way as the corresponding case of union and symmetric difference.
Assume now $q'_1 = q'_2$ and $p_1 \neq p_2$.
Without loss of generality, $p_1 \neq n-1$.
By Claim~1 we can map $(q'_1,p_1)$ to the non- final state $(1',1)$ by some word $w$.
Then $(q'_2,p_2)w = (1',p_2w)$, and since $p_2 w \neq p_1 w = 1$, $(1',p_2w)$  a final state.
Thus $w$ distinguishes our pairs.

Together with $(0',0)$ and $((m-1)',n-1)$ these give $2+(m-2)(n-1) = mn-(m+2n-4)$ distinguished pairs.
\ee
\end{proof}

\section{Witnesses with Semigroups in $\Wsf(n)$}

We now turn to the operations which cannot have witnesses with transition semigroups in $\Vsf$.

\begin{definition}
\label{def:D6}
For $n\ge 4$, 
we define the DFA 
$\cD_n(a,b,c,d,e) =(Q_n,\Sig,\delta,0,F),$
where $Q_n=\{0,\ldots,n-1\}$, $\Sig=\{a,b,c,d,e\}$, 
 $\delta$ is defined by the transformations
\bi
\item
$a\colon (0\to n-1) (1,\ldots,n-2)$,
\item
$b\colon (0\to n-1) (1,2)$,
\item
$c\colon (0\to n-1) (n-2\to 1)$,
\item
$d\colon (\{0,1\}\to n-1)$,
\item
$e\colon (Q\setminus \{0\} \to n-1)(0\to 1)$,
\ei
and $F=\{q\in Q_n\setminus \{0,n-1\} \mid q \text{ is odd}\}$.
For $n=4$, $a$ and $b$ coincide, and we can use $\Sig=\{b,c,d,e\}$.
The structure of $\cD_5(a,b,c,d,e)$ is illustrated in Figure~\ref{fig:witness}.
\end{definition}

Our main result in this section is the following theorem:
\begin{theorem}[Boolean Operations, Reversal,  Number and  Complexity of Atoms, Syntactic Complexity]
\label{thm:witness6}
Let $\cD_n(a,b,c,d,e)$ be the DFA of Definition~\ref{def:D6}, and let the language it accepts be
$L_n(a,b,c,d,e)$. 
Then $L_n(a,b,c,d,e)$ meets the bounds for  boolean operations, reversal, number  and quotient complexity  of atoms, and syntactic complexity as follows:
\be
\item
For $n,m \ge 4$, $L_m(a,b,-,d,e)$ and $L_n(b,a,-,d,e)$ meet the bounds $mn - (m+n-2)$  for union and symmetric difference, $mn - 2(m+n-3)$ for intersection, and $mn - (m+2n-4)$ for difference.
\item
For $n\ge 4$, $L_n(a,-,c,-,e)$ meets the bound $2^{n-2}+1$ for reversal and number of atoms.
\item
For $n\ge 6$, $L_m(a,b,c,d,e)$ meets the bound $(n-1)^{n-2}+n-2$ for syntactic complexity, and the bounds on the quotient complexities of atoms.
\ee
\end{theorem}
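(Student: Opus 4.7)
The proof treats the three claims in sequence, with the third being the most delicate.

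For Claim~1 on boolean operations, I construct the direct product $\cP$ of $\cD_m(a,b,-,d,e)$ and $\cD_n(b,a,-,d,e)$ and analyze its reachable and distinguishable states. Because both languages are suffix-free, Lemma~\ref{lem:sf}(3) forces the initial state $(0,0)$ to be reached only by $\eps$, so no other reachable product state contains $0$ in either coordinate; the letter $e$ takes us immediately to $(1,1)$. The key asymmetry of the two dialects is that $a$ acts as an $(m-2)$-cycle in the first component and as the transposition $(1,2)$ in the second, while $b$ does the opposite. Thus powers $a^i$ rotate the first coordinate while barely perturbing the second, and powers $b^j$ do the dual job, so one can independently place the two coordinates anywhere in $\{1,\dots,m-2\}\times\{1,\dots,n-2\}$; the letter $d$ provides access to the rows and columns containing $n-1$ and $m-1$. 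This yields the $(m-1)(n-1)+1$ reachable states needed for union and symmetric difference. Distinguishability uses the odd-state final condition: for any two reachable pairs I would exhibit a word that routes one coordinate into an odd state while the other is mapped to an even state or to the sink $n-1$. The intersection and difference bounds follow by counting which of these pairs collapse when the sink rows and columns merge.

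For Claim~2 on reversal and the number of atoms, I apply the standard reverse-and-determinize construction to $\cD_n(a,-,c,-,e)$; the resulting NFA has initial set $F$ (the odd states in $\{1,\dots,n-2\}$) and unique final state $0$. Under reversal, $a$ still permutes $\{1,\dots,n-2\}$ cyclically, the reverse-image of state $1$ under $c$ acquires the element $n-2$, and reversal of $e$ sends every state of $\{1,\dots,n-2\}$ back to $\{0\}$. I would prove by induction on subset size that every subset of $\{1,\dots,n-2\}$ is reachable in the determinization, using $a$-rotations to shift the set and $c$ to grow it by one element at a time; together with the singleton $\{0\}$ this yields the $2^{n-2}+1$ reachable states. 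Distinguishability is automatic in a deterministic recognizer of distinct subsets. Since the number of atoms of $L$ equals the state complexity of $L^R$~\cite{BrTa14}, this simultaneously establishes the bound on the number of atoms.

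For Claim~3, Proposition~\ref{prop:Wsf} already tells us that $\{a,b,c,d,e\}$ generates $\Wsf(n)$, which has cardinality $(n-1)^{n-2}+(n-2)$. The DFA $\cD_n(a,b,c,d,e)$ is minimal: every state in $\{1,\dots,n-2\}$ is reached from $0$ by $e$ followed by a suitable power of $a$, the sink $n-1$ is reached by $d$, and any two distinct states are separated since odd and even states differ on $\eps$ while the sink is distinguished from every non-empty state. Hence the transition semigroup coincides with the syntactic semigroup and has the claimed size. For the atom complexities I would follow the methodology of~\cite{BrTa14}: each atom $A_S$, indexed by a subset $S$ of $Q_n$, is recognized by a DFA whose states are pairs $(P,R)$ with $P\subseteq S$ and $R\subseteq Q_n\setminus S$ (the subsets reachable under the left-congruence action), and its quotient complexity is the number of these pair-states that are reachable and pairwise distinguishable. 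I would perform a case analysis on whether $0\in S$, whether $n-1\in S$, and on $|S|$, matching in each case the known suffix-free upper bound by exhibiting explicit words that realize every pair $(P,R)$. The main obstacle of the theorem lies here: the four atom regimes require separate and lengthy reachability arguments, and it is exactly the richness of $\Wsf(n)$ -- the permutations $a,b,c$ providing independent actions on $\{1,\dots,n-2\}$ together with the constant-type letters $d,e$ controlling passage through state $0$ and into the sink -- that makes each of these arguments go through.
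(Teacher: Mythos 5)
Your overall architecture matches the paper's (direct product for the boolean operations, reverse-and-determinize for reversal, Iv\'an's pair-automaton $\cD_S$ for the atoms, and Proposition~\ref{prop:Wsf} for the syntactic complexity), but two of the three claims rest on steps that do not hold as stated. For Claim~1, the assertion that ``powers $a^i$ rotate the first coordinate while barely perturbing the second \dots\ so one can independently place the two coordinates anywhere'' is not a valid mechanism: in $\cD_n(b,a,-,d,e)$ the letter $a$ acts on the second coordinate as the transposition $(1,2)$, so $a^i$ realizes only the pairs (rotate by $i$, apply $(1,2)^i$), and short alternations $a^ib^j$ reach only a restricted set of pairs. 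That \emph{all} $(m-2)(n-2)$ states of the block $\{1,\dots,m-2\}\times\{1,\dots,n-2\}$ are reachable and pairwise distinguishable is precisely the content of~\cite[Theorem~1]{BBMR14}, which the paper invokes after checking that the two pairs of permutations generate symmetric groups and are not conjugate --- and even then the case $m=n=4$ must be verified separately by computation. Without that group-theoretic input (or an explicit construction of words realizing arbitrary pairs), the core of the reachability and distinguishability argument is missing.

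For Claim~3, the atom-complexity part --- which is where most of the work in the paper's proof lies --- is deferred entirely: you name the right case split ($S=\emp$, $S=\{0\}$, $\emp\neq S\subseteq\{1,\dots,n-2\}$) but give no reachability argument for the pairs $(X,Y)$ and no derivation of the counts of Proposition~\ref{prop:atoms}. The paper does this by exploiting Proposition~\ref{prop:sf_wit}(a), namely that the transition semigroup contains \emph{every} transformation sending $0$ and $n-1$ to $n-1$ and mapping $\{1,\dots,n-2\}$ into $\{1,\dots,n-1\}$; this is what lets one realize every admissible pair $(X,Y\cup\{n-1\})$ and then separate any two such states by collapsing a witnessing state to $1$ and the rest to $n-1$. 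A smaller but real flaw sits in Claim~2: distinct subsets arising from a subset construction are not ``automatically'' distinguishable; the paper exhibits the explicit word $a^{q-1}e$ for $q\in R\oplus S$ (alternatively one may appeal to minimality of the determinized reversal of an accessible DFA, but that justification has to be stated, not assumed).
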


The claim about syntactic complexity is known from~\cite{BrSz15}.
It was shown in~\cite{BrTa14} that 
the number of atoms of a regular language $L$ is equal to the quotient complexity of $L^R$.
In the next subsections we prove the claim about boolean operations, reversal, and atom complexity.
First we state some properties of $\cD_n$.

\begin{proposition}
\label{prop:sf_wit}
For $n\ge 4$ the DFA of Definition~\ref{def:D6} is minimal, accepts a suffix-free language, and its transition semigroup $T_n$ has cardinality $(n-1)^{n-2}+n-2$. In particular, $T_n$ contains 
(a) all $(n-1)^{n-2}$ transformations that send $0$ and $n-1$ to $n-1$ and map 
$Q\setminus \{0,n-1\}$ to $Q\setminus \{0\}$, and 
(b) all $n-2$ transformations that send $0$ to a state in $Q\setminus \{0,n-1\}$ and map all the other states to $n-1$. Also, $T_n$ is generated by $\{a,b,c,d,e\}$ and cannot be generated by a smaller set of transformations.
\end{proposition}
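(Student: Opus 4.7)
The plan is to verify the four claims of the proposition in turn.

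For the minimality of $\cD_n$: reachability follows from $0e = 1$, $1 a^{k-1} = k$ for $1 \le k \le n-2$, and $0a = n-1$; distinguishability follows because $n-1$ is the unique empty state, the word $e$ separates $0$ from every state in $\{1, \ldots, n-2\}$ (since $0e = 1 \in F$ while $qe = n-1 \notin F$ for $q \in \{1, \ldots, n-2\}$), and any two distinct states $p,q \in \{1, \ldots, n-2\}$ are separated by first rotating $p$ to $1$ via a suitable power of $a$ and then applying $d$, which kills $1$ while fixing the rotated image of $q$ in $\{2, \ldots, n-2\}$, possibly followed by one further $a$ to land in $F$.

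For suffix-freeness: no generator maps any state to $0$, so $0x \ne 0$ for every non-empty word $x$; and $e$ is the only generator with $0e \ne n-1$, while $e$ maps every state of $Q_n \setminus \{0\}$ to $n-1$. Hence any non-empty $y \in L_n$ must start with $e$, and for $x \ne \varepsilon$ the state $q = 0x \ne 0$ is sent by the leading $e$ of $y$ to $n-1$, forcing $xy \notin L_n$.

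For the transition semigroup: the generating set $\{a,b,c,d,e\}$ of Definition~\ref{def:D6} coincides with $\Hsf(n)$, so by Proposition~\ref{prop:Wsf}, $T_n = \Wsf(n)$ with cardinality $(n-1)^{n-2} + (n-2)$. The two families (a) and (b) described are precisely those defining $\Wsf(n)$: transformations with $0t = n-1$ (which, being in $\Bsf$, also have $(n-1)t = n-1$ and map $\{1, \ldots, n-2\}$ into $Q_n \setminus \{0\}$, giving $(n-1)^{n-2}$ maps) and transformations with $qt = n-1$ for every $q \in \{1, \ldots, n-2\}$ together with $0t \in \{1, \ldots, n-2\}$, giving the remaining $n-2$ maps.

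The main obstacle is showing that the generator set cannot be reduced. Generator $e$ is essential because it is the unique source of type~(b) maps (all other generators satisfy $0t = n-1$). Generators $a$ and $b$ are essential because the rank-$(n-1)$ elements of $T_n$ form a subgroup isomorphic to $S_{n-2}$; every non-permutation letter ($c$, $d$, or $e$) strictly decreases rank, so this subgroup must be generated by the permutation letters available, and since $S_{n-2}$ is non-cyclic for $n \ge 5$ both $a$ and $b$ are required (for $n = 4$ they coincide, giving a four-generator set). Generator $d$ is essential by a preimage invariant: every $w \in \{a,b,c\}^+$ satisfies $w^{-1}(n-1) = \{0, n-1\}$ by an easy induction (each letter has this property and no state maps to $0$), while any word containing $e$ sends all of $\{1, \ldots, n-2\}$ to $n-1$; neither kind of word can equal $d$, for which $|d^{-1}(n-1)| = 3$ and which fixes $\{2, \ldots, n-2\}$. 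A symmetric argument, using that any $w \in \{a,b,d\}^+$ containing $d$ has $|w^{-1}(n-1)| \ge 3$ (again by induction, since $a$ and $b$ only permute $\{1,\ldots,n-2\}$), shows that $c$ is not in $\langle a,b,d,e \rangle$.
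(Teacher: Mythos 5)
Your proof of minimality, suffix-freeness, and the semigroup cardinality is correct and close in spirit to the paper's. For minimality the paper distinguishes two middle states by parity of final states using powers of $a$ followed by $c$, whereas you rotate one of them onto $1$ and kill it with $d$; both work, and yours is arguably cleaner. Your suffix-freeness argument (every accepted word starts with $e$, and $e$ sends every state of $Q_n\setminus\{0\}$ to the sink) is the same as the paper's one-line argument made explicit. For the cardinality and the description of the families (a) and (b), the paper simply cites \cite{BLY12,BrSz15}; your identification of $T_n$ with $\Wsf(n)$ via Proposition~\ref{prop:Wsf} and your count of the two disjuncts of the definition of $\Wsf(n)$ are correct.

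The one place you should be careful is the last claim, which the paper also delegates to the cited work. What you actually prove is that $\{a,b,c,d,e\}$ is an \emph{irredundant} generating set: no proper subset of these five particular transformations generates $T_n$. The statement ``cannot be generated by a smaller set of transformations'' is naturally read (and is read this way in the analogous proposition about $\Vsf(n)$, whose proof quantifies over arbitrary generating sets) as: no four elements of $T_n$ whatsoever generate $T_n$. Your arguments for $e$ (some generator must have $0t\neq n-1$) and for $a,b$ (the rank-$(n-1)$ elements form a copy of $S_{n-2}$ closed off from lower ranks, so at least two maximal-rank generators are needed) are already generic. But your arguments for $c$ and $d$ are tied to the specific alphabet: they show $d\notin\langle a,b,c,e\rangle$ and $c\notin\langle a,b,d,e\rangle$, not that a single cleverly chosen fourth generator could not replace both. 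To close this, upgrade your two invariants to classes of transformations: the set of elements mapping \emph{no} state of $\{1,\dots,n-2\}$ to $n-1$ is closed under composition, and any product with a factor mapping \emph{all} of $\{1,\dots,n-2\}$ to $n-1$ again maps all of them there, so every generating set needs an element mapping some but not all of $\{1,\dots,n-2\}$ to $n-1$ (a ``$d$-type'' element); similarly, the elements that never merge two states of $\{1,\dots,n-2\}$ into a single state of $\{1,\dots,n-2\}$ are closed under composition, so every generating set needs a merging element (a ``$c$-type'' element). Since these two classes are disjoint, and disjoint from the permutations and from the elements with $0t\neq n-1$, any generating set has at least five elements. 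This is a genuine but easily repaired gap; everything else in your write-up is sound.
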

\begin{proof}
To prove minimality, note first that only the initial state $0$ accepts $e$, and only state $n-1$ accepts nothing.
Suppose $p,q\in Q_n\setminus\{0, n-1\}$, $p<q$, and $p$ and $q$ have the same parity; then $pa^{n-2-p}c=1$, which is a final state, but $pa^{n-2-q}c$ is an even state, which is non-final. If $p$ and $q$ have different parities, then one is final while the other is not.
Hence $\cD_n$ is minimal.

The language $L_n$ is suffix-free, because every word in $L_n$ has the form $ex$ with $x\in \{a,b,c,d\}$.
The claims about cardinality of $T_n$, transformations, and generators were proved in~\cite{BLY12,BrSz15}.
\end{proof}

\subsection{Boolean Operations}

We now show that witness DFAs for boolean operations may have transition semigroups in $\Wsf$. Thus boolean operations have witnesses with transition semigroups that are subsemigroups of either $\Vsf$ or $\Wsf$. This is sharp contrast with all the other complexity measures: any witness for star and any second witness for product can only be associated with $\Vsf$, while the size of the syntactic semigroup, reversal, and complexities of atoms must have witnesses associated with $\Wsf$.

\begin{theorem}
\label{thm:boolean2}
For $n,m \ge 4$, $L_m(a,b,-,d,e)$ and $L_n(b,a,-,d,e)$ meet the bounds $mn - (m+n-2)$ for union and symmetric difference, $mn - 2(m+n-3)$ for intersection, and $mn - (m+2n-4)$ for difference.
\end{theorem}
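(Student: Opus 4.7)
The plan is to mirror the proof of Theorem~\ref{thm:boolean}, constructing the direct product DFA $\cP$ of $\cD'_m(a,b,-,d,e)$ and $\cD_n(b,a,-,d,e)$ and analyzing reachable and distinguishable states. In $\cP$, the letter $a$ acts as the $(m-2)$-cycle on the first coordinate and as the transposition $(1,2)$ on the second; $b$ acts as the transposition $(1',2')$ on the first and as the $(n-2)$-cycle on the second; $d$ collapses $\{0',1'\}$ and $\{0,1\}$ to the respective sinks while fixing the remaining non-sink states; and $e$ sends every non-initial state to its sink while mapping the initial state to $1$. By the non-returning property (Lemma~\ref{lem:sf}(3)), no reachable state other than $(0',0)$ has a coordinate equal to $0'$ or $0$, so the pool of candidates has size at most $1+(m-1)(n-1)=mn-(m+n-2)$.

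For reachability, $(0',0)$ is initial, $(1',1)$ is reached by $e$, and the empty sink $((m-1)',n-1)$ is reached by $a$. The crux is to show that the interior $I=\{1',\ldots,(m-2)'\}\times\{1,\ldots,n-2\}$ is reachable from $(1',1)$ using only $\{a,b\}$. I would prove that $\langle a,b\rangle$ acts transitively on $I$. The projections of $a,b$ onto each coordinate generate $S_{m-2}$ and $S_{n-2}$ respectively (cycle plus transposition), so by a Goursat/sign-parity analysis $\langle a,b\rangle$ equals all of $S_{m-2}\times S_{n-2}$ unless both $m$ and $n$ are even, in which case it is the index-$2$ sign-matching subgroup; in either case the action on $I$ remains transitive whenever $m-2,n-2\ge 3$, because a transposition fixing any chosen index is available to adjust signs. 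The degenerate cases $m=4$ or $n=4$, where $a=b$ on the short coordinate, are handled by direct computation of the orbit of $(1',1)$, using that the cycle on the other coordinate breaks the parity constraint. Once $I$ is reached, the remaining states come cheaply: $(p',1)\cdot d=(p',n-1)$ for $p'\in\{2',\ldots,(m-2)'\}$; $(1',n-1)=((m-2)',n-1)\cdot a$; $(1',q)\cdot d=((m-1)',q)$ for $q\in\{2,\ldots,n-2\}$; and $((m-1)',1)=((m-1)',2)\cdot a$. This yields exactly $1+(m-2)(n-2)+(m-2)+(n-2)+1=mn-(m+n-2)$ reachable states.

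For distinguishability, I would establish two claims modeled on Claims~1 and~2 in Theorem~\ref{thm:boolean}. First, from any $(q',p)\in I$ there is a word sending $(q',p)$ to $(1',1)$; this is immediate from the transitivity of $\langle a,b\rangle$ on $I$. Second, from any $(q',p)$ with $q'\in\{1',\ldots,(m-2)'\}$ and $p\in\{1,\ldots,n-2\}$, every $(r',n-1)$ with $r'\in\{1',\ldots,(m-1)'\}$ is reachable from $(q',p)$; this is achieved by using $a,b$ to send $p$ to $1$, applying $d$ or $e$ to collapse the second coordinate to $n-1$, then permuting the first coordinate (the sink $n-1$ is fixed by every letter) to the desired $r'$. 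With these claims, distinguishing pairs proceeds exactly as in the proof of Theorem~\ref{thm:boolean}: for intersection the first claim sends one state to the final $(1',1)$ while the other misses; for union and symmetric difference the second claim reduces distinguishing to comparing first coordinates of column states whose languages $L(r')$ are pairwise distinct by minimality of $\cD'_m$; for difference both techniques are combined. The initial $(0',0)$ is distinguished as the unique non-sink state for which every letter in $\{a,b,d\}$ leads directly to the empty sink.

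The principal obstacle is the interior-transitivity step. The Goursat argument must carefully handle the case when both $m-2$ and $n-2$ are even (where $\langle a,b\rangle$ is only an index-$2$ subgroup of $S_{m-2}\times S_{n-2}$), and the degenerate cases $m=4$ and $n=4$, where $a=b$ on the short coordinate, require separate explicit orbit computation. Once interior transitivity is secured, the remaining reachability steps reduce to single applications of $d$ and $a$, and the distinguishability argument parallels that of Theorem~\ref{thm:boolean}.
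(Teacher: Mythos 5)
Your reachability plan matches the paper's in outline: the paper likewise reaches the interior $M=\{1',\dots,(m-2)'\}\times\{1,\dots,n-2\}$ from $(1',1)$ using only $\{a,b\}$ and then spreads to the border row $H$, border column $V$, and the sink using $d$ and words in $a^*$ or $b^*$. Where the paper simply invokes Theorem~1 of~\cite{BBMR14} (the two letter pairs induce generators of symmetric groups of different degrees, hence non-conjugate actions; the one exceptional case $m=n=4$ is checked by computation), you substitute a Goursat/sign analysis of $\langle a,b\rangle\le S_{m-2}\times S_{n-2}$. For \emph{reachability} that is a legitimate, if more laborious, replacement: transitivity follows since the generated group always contains $A_{m-2}\times A_{n-2}$, and your treatment of $H$, $V$ and $((m-1)',n-1)$ is correct (modulo a small slip in your Claim~2: after driving the second coordinate to $1$ you must first ensure the first coordinate is not $1'$, since $d$ also collapses $1'$ to $(m-1)'$).

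The genuine gap is in distinguishability of the interior states. In Definition~\ref{def:D6} the final sets are \emph{all odd states} among $\{1,\dots,m-2\}$ and $\{1,\dots,n-2\}$, not the singletons $\{1'\}$ and $\{1\}$ of Definition~\ref{def:D5}, so the argument of Theorem~\ref{thm:boolean} does not transfer ``exactly.'' Concretely, for intersection your step ``the first claim sends one state to the final $(1',1)$ while the other misses'' proves nothing once $m\ge 5$ or $n\ge 5$: the permutation word $w$ with $(q_1',p_1)w=(1',1)$ sends $(q_2',p_2)$ to some \emph{other} interior state, which may itself lie in $F'\times F$ (for instance $(3',3)$), so both states can accept $w$. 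The same defect undermines your union, symmetric-difference and difference cases, and the paper's own six-case analysis shows the issue is not cosmetic: even distinguishing two states of $H$, which reduces to separating two states of $\cD_n$ that are \emph{both final or both non-final}, already forces a parity case split ($a^{n-1-r}$ versus $a^{n-r}b$ according to the parity of $n$). Distinguishing all pairs in $M$ under these alternating final sets is precisely the content of~\cite{BBMR14}, where the non-conjugacy hypothesis is essential rather than decorative; your proposal must either cite that result, as the paper does, or supply a genuine substitute --- e.g.\ a $2$-transitivity argument moving the image pair into a final/non-final configuration --- neither of which is present. As written, the distinguishability half of the proof does not go through.
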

\begin{proof}
Our two DFAs $\cD_m(a,b,-,d,e)$ and $\cD_n(b,a,-,d,e)$ are illustrated in Figure~\ref{fig:boolean2}; consider their direct product.
Let $ H=\{((m-1)',q) \mid 1 \le q \le n-2\}$, 
$ V=\{(p',n-1) \mid 1 \le p \le m-2\}$, and 
$M=\{(p',q) \mid 1 \le p \le m-2, 1 \le q \le n-2\}$.

Let $\cA_n(a,b)=(Q_n\setminus \{0,n-1\},\{a,b\},\delta,1,F)$ be the DFA obtained from the DFA $\cD_n(a,b,-,d,e)$ by restricting the alphabet to $\{a,b\}$ and the set of states to $\{1,\dots,n-2\}$.
Since DFAs $\cA'_m(a,b)$ and $\cA_n(b,a)$ have ordered pairs $\{\delta'_a,\delta'_b\}$ and $\{\delta_a,\delta_b\}$ of transformations  that generate the symmetric groups of degrees $m$ and $n$ and are not conjugate, the result from~\cite[Theorem 1]{BBMR14} applies, except in our case where $m=4$ and $n=4$ (we add two states to the $m$ and $n$ in~\cite{BBMR14}).
We have verified this case by computation.
Therefore we know that all states in $M$ are reachable
from state $(1',1)$ and all pairs of such states are distinguishable.

\begin{figure}[thb]
\unitlength 10pt
\begin{center}\begin{picture}(28,15)(0,-3)
\gasset{Nh=2.,Nw=2,Nmr=1.5,ELdist=0.3,loopdiam=1.0}
\node[Nframe=n](Qm)(0,10.5){$Q'_8(a,b,-,d,e)$:}
\node(0')(2,8){$0'$}\imark(0')
\node(1')(5.5,8){$1'$}\rmark(1')
\node(2')(10,8){$2'$}
\node(3')(14,8){$3'$}\rmark(3')
\node(4')(18,8){$4'$}
\node(5')(22,8){$5'$}\rmark(5')
\node(6')(26,8){$6'$}
\drawloop(2'){$d$}
\drawloop(3'){$b,d$}
\drawloop(4'){$b,d$}
\drawloop(5'){$b,d$}
\drawloop(6'){$b,d$}
\drawedge(0',1'){$e$}
\drawedge(1',2'){$a,b$}
\drawedge(2',3'){$a$}
\drawedge(3',4'){$a$}
\drawedge(4',5'){$a$}
\drawedge(5',6'){$a$}

\drawedge[curvedepth=-2.5,ELside=r](2',1'){$b$}
\drawedge[curvedepth=2.5](6',1'){$a$}

\node[Nframe=n](Qn)(0,2.5){$Q_7(b,a,-,d,e)$:}
\node(0)(2,0){$0$}\imark(0)
\node(1)(5.5,0){$1$}\rmark(1)
\node(2)(10,0){$2$}
\node(3)(14,0){$3$}\rmark(3)
\node(4)(18,0){$4$}
\node(5)(22,0){$5$}\rmark(5)
\drawedge(0,1){$e$}
\drawedge(1,2){$a,b$}
\drawedge(2,3){$b$}
\drawedge(3,4){$b$}
\drawedge(4,5){$b$}

\drawedge[curvedepth=-2.5,ELside=r](2,1){$a$}
\drawedge[curvedepth=2.5](5,1){$b$}

\drawloop(2){$d$}
\drawloop(3){$a,d$}
\drawloop(4){$a,d$}
\drawloop(5){$a,d$}

\end{picture}\end{center}
\caption{The DFAs $\cD'_8(a,b,-,d,e)$ and $\cD_7(b,a,-,d,e)$ for boolean operations. The empty states $8'$ and $7$ and the transitions to them are omitted.}
\label{fig:boolean2}
\end{figure}
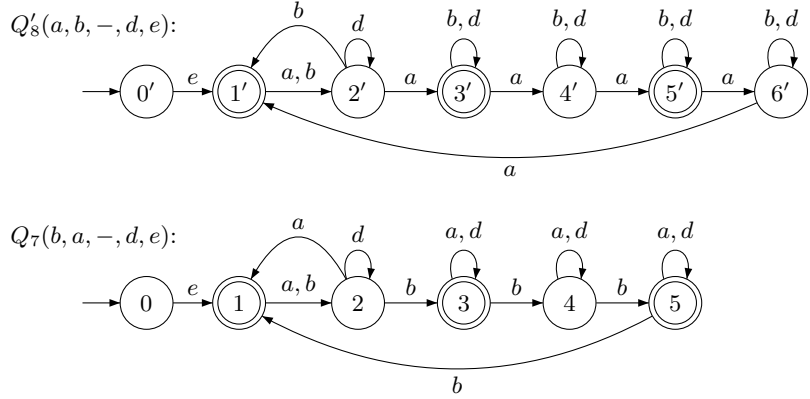

We show that $(0',0)$, $((m-1)',n-1)$ and all states in $H$, $V$ and $M$ are reachable.
State $(0',0)$ is the initial state, $(1',1)$ is reached from $(0',0)$ by $e$ and 
$((m-1)',n-1)$ by $a$.
By Theorem~1 of~\cite{BBMR14} all the states in $M$, and in particular $(1',2)$ and $(2',1)$, are reachable from $(1',1)$.
From state $(1',2)$ we reach $((m-1)',2)$ by $d$ and from there, state $((m-1)',q)$ by a word in $b^*$ for $1 \le q \le n-2$.
Symmetrically, from $(2',1)$ we reach $(2',n-1)$, and from there, state $(p',n-1)$ for $1 \le p \le m-2$ by a word in $a^*$.
Hence all $(m-1)(n-1)+1=mn-(m+n-2)$ states are reachable.

State $(0',0)$ is the only state accepting a word beginning with $e$. State $((m-1)',n-1)$ is the only empty state. 

For union we consider six possibilities for the distinguishability of two states:
\be
\item
States in $H$. If $n$ is odd, and $p',r'\in
Q'_n\setminus \{0', (n-1)'\}$  with $p < r$, where $p'$ and $r'$ are both final or both non-final, then $r'$ accepts $a^{n-1-r}$ whereas $p'$ rejects it. If $n$ is even, $r'$ accepts $a^{n-r}b$ whereas $p'$ rejects it.
Hence if $((m-1)',p)$ and $((m-1)',r)$ are in $H$, they are distinguishable.
\item
States in $V$.
The argument is symmetric to that in $H$.
\item
$H$ and $V$.
Consider $(p',n-1)\in  V$ and $((m-1)',q)\in  H$.
Note that $(1',n-1)$ accepts $a^2b$ whereas $((m-1)',1)$ rejects it.
For every $(p',q)$ there exists a permutation $t$ induced by a word $w$ in $\{a,b\}^*$ such that $(1',1)t=(p',q)$. Then $(p',q)t^{-1}=(1',1)$. Hence $(p',n-1)$ and $((m-1)',q)$ are mapped by $t^{-1}$ to $(1',n-1)$ and $((m-1)',1)$, which are distinguishable. 
\item
$H$ and $M$. If $(p',r)\in M$, then there exists a permutation $t$ induced by a word $w$ in $\{a,b\}^*$ such that $(p',r) t = (1',2)$. The same $t$ maps $((m-1)',q)\in H$ to some $((m-1)',s)$ with 
$1 \le s \le n-2$. Then $(1',2) d= ((m-1)',2)$. If $s \neq 1$, then $s d\in\{2,\dots,n-2)$, and we are done, since states $((m-1)',2)$ and $((m-1)',s)$ are in $H$ and hence are distinguishable. Otherwise, $((m-1)',s) d = ((m-1)',n-1)$ and again we are done.

\item
$V$ and $M$. The argument is symmetric to that for $H$ and $M$.
\item
States in $M$. Apply Theorem~1 of~\cite{BBMR14}.
\ee

For symmetric difference, we have as many states as for union. The arguments for distinguishability are exactly the same as for union.

For intersection, all states in $H\cup V\cup \{((m-1)',n-1)\}$ are empty, and all states in $M$ are distinguishable by~\cite[Theorem~1]{BBMR14}. Hence there are $(m-2)(n-2)+2=mn-2(m+n-3)$ states altogether.

For difference, all states in $H\cup \{((m-1)',n-1)\}$ are empty, so this leaves 
$(m-2)(n-1)+2=mn-(m+2n-4)$ states. 
The states in $V$ are all distinct by the argument used for union.
Also $(p',q) \in M$ is distinguishable from $(r',n-1)\in V$, by the argument used for union.
\end{proof}

Since $\delta_d=\delta_{c_1}\delta_{c_1}$ and $\delta_e=\delta_{c_1}\delta_{c_2}\cdots \delta_{c_{n-1}}$, where the $c_i$ are from Proposition~\ref{prop:W5}, the semigroup of $\cD_n(a,b,-,d,e)$ is in $\Vsf(n)\cap \Wsf(n)$.
 In fact, one can verify that the semigroup of $\cD_n(a,b,-,d,e)$ is $\Vsf(n)\cap \Wsf(n)$.

\subsection{Reversal}

Han and Salomaa~\cite{HaSa09} showed that to meet the bound for reversal one can use the binary DFA of Leiss~\cite{Lei81} and add a third input to get a suffix-free DFA.
Cmorik and Jir\'askov\'a~\cite{CmJi12} showed that a binary alphabet will not suffice.
We show a different ternary witness below, and prove that any witness must have its transition semigroup in $\Wsf$.

\begin{theorem}[Semigroup of Reversal Witness]
\label{thm:reversal_witness_subsemigroup}
For $n\ge 4$, the transition semigroup of a minimal DFA $\cD_n(Q_n,\Sig,\delta,0,F)$ of a suffix-free language $L_n$ that meets the bound $2^{n-2}+1$ for reversal is a subsemigroup of $\Wsf(n)$.
\end{theorem}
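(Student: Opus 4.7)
\smallskip
\noindent{\em Proof plan.}
The strategy is to inspect the standard subset construction for $L_n^R$ and read off a strong structural constraint on reachable subsets that is forced by suffix-freeness; this constraint then rules out any transformation lying outside $\Wsf(n)$. Concretely, I will determinize the reverse NFA of $\cD_n$: its reachable states are exactly the subsets $Fw^{-1}=\{q\in Q_n \mid \delta(q,w)\in F\}$ for $w\in\Sig^*$. Since $\delta(0,\cdot)$ surjects onto $Q_n$ in a minimal DFA, distinct reachable subsets are always distinguishable in the reverse DFA, so by hypothesis there are exactly $2^{n-2}+1$ of them.

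Next I will characterize the reachable subsets using suffix-freeness. The state $n-1$ is the empty sink, fixed by every letter and not in $F$, so no reachable subset contains $n-1$. One has $0\in Fw^{-1}$ iff $w\in L_n$; and in that case any $p\in\{1,\dots,n-2\}$, written as $p=\delta(0,u)$ with $u\in\Sig^+$, can satisfy $pw\in F$ only if $uw\in L_n$, and since $w\in L_n$ and $u\neq\eps$ this contradicts suffix-freeness. Hence the only reachable subset containing $0$ is $\{0\}$ itself, and every other reachable subset is a subset of $\{1,\dots,n-2\}$. This yields at most $2^{n-2}+1$ candidates, so meeting the bound forces every subset of $\{1,\dots,n-2\}$ to be reachable.

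I will then argue by contradiction: suppose some $t\in T_n$ does not lie in $\Wsf(n)$. By Proposition~\ref{prop:sf} we have $t\in\Bsf(n)$, and failing the $\Wsf$ condition means $0t\neq n-1$ and $qt\neq n-1$ for some $q\in\{1,\dots,n-2\}$. Lemma~\ref{lem:sf}(3) rules out $0t=0$ and $qt=0$, so $0t=p$ and $qt=r$ with $p,r\in\{1,\dots,n-2\}$; Lemma~\ref{lem:sf}(2) then forces $p\neq r$, since the pair $\{0,q\}$ cannot be focused to a non-empty state in a suffix-free language. Let $u$ be any word inducing $t$, and by the previous step pick $w$ with $Fw^{-1}=\{p,r\}$. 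Then the subset $\{p,r\}u^{-1}=F(uw)^{-1}$ is again reachable, yet it contains $0$ (because $0u=p$) and $q$ (because $qu=r$), with $q\neq 0$, contradicting the characterization that $\{0\}$ is the only reachable subset containing $0$. Hence $T_n\subseteq\Wsf(n)$.

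The main obstacle is the clean subset characterization ``$\{0\}$ is the only reachable subset containing $0$'', which is precisely where suffix-freeness plays its essential role through the forbidden pattern $w,uw\in L_n$ with $u\neq\eps$. Once this is in place, the $\Wsf$ inclusion follows immediately by pulling back a two-element reachable subset of $\{1,\dots,n-2\}$ through any offending transformation.
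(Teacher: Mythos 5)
Your proposal is correct and follows essentially the same route as the paper's proof: determinize the reversal NFA, use suffix-freeness to show that $n-1$ never occurs and that $\{0\}$ is the only reachable subset containing $0$, conclude that every subset of $\{1,\dots,n-2\}$ must be reachable, and then pull a two-element subset $\{p,r\}$ back through an offending transformation to produce a reachable subset containing both $0$ and some $q\neq 0$, contradicting suffix-freeness. The paper phrases the offending transformation as a ``colliding pair'' and spells out the two words $wu$ and $vwu$ explicitly, but this is only a cosmetic difference from your argument.
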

\begin{proof} 
It suffices to show that no pair of states is colliding.

We construct an NFA $\cN$ for $L_n^R$ by reversing the transitions of $\cD$, and interchanging the sets of final and initial states.
We then determinize $\cN$ using the subset construction to get a DFA $\cR$ for $L_n^R$. The states of $\cR$ are sets of states of $\cD$.

Consider a reachable subset $S \subseteq Q_n$. We can assume that $n-1 \not\in S$, since $n-1$ is not reachable from a start state in $\cN$.
Also, if $0 \in S$ then $S = \{0\}$, as otherwise in $\cD$ some transformation would map both $0$ and $q \in \{1,\ldots,n-1\}$ to  a final state so the language would not be suffix-free \cite[Lemma 6]{HaSa09}.
Hence, every subset $S \subseteq \{1,\ldots,n-2\}$ must be reachable.

Suppose that there are two distinct states $p,q \in \{1,\ldots,n-2\}$ such that the pair $\{p,q\}$ is colliding.
This means that $0 \delta_w = p$ and $q' \delta_w = q$ for some word $w$ and $q' \in \{1,\ldots,n-2\}$.
Since the subset $\{p,q\}$ is reachable in $T_{\cD^R}$, there is some $\delta^{-1}_u$ such that $F \delta^{-1}_u = \{p,q\}$.
So $p \delta^{-1}_u \in F$ and $q \delta^{-1}_u \in F$.
But then $w u \in L_n$, and also $v w u \in L_u$, where $v$ is such that $0 \delta_v = q'$, which contradicts that $L_n$ is suffix-free.
\end{proof}

\begin{theorem}[Reversal Complexity]
If $n\ge 4$, then $L_n(a,-,c,-, e)$ of Definition~\ref{def:D6} meets the bound $2^{n-2}+1$ for reversal.
\end{theorem}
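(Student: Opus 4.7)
The plan is to construct the reverse NFA of $\cD_n$, determinize it, and exhibit $2^{n-2}+1$ reachable and pairwise distinguishable subsets. Reversing every transition of $\cD_n$ and swapping the roles of initial and final state sets yields an NFA $\cN_n$ with initial set $F$ (the odd states in $\{1,\ldots,n-2\}$) and unique final state $\{0\}$. Determinizing via the subset construction gives a DFA $\cR_n$ for $L_n^R$. The argument in the proof of Theorem~\ref{thm:reversal_witness_subsemigroup} already supplies the matching upper bound: every reachable subset is either $\{0\}$ or a subset of $\{1,\ldots,n-2\}$, since $n-1$ is never in a reachable subset of $\cN_n$ and by suffix-freeness $0$ cannot appear together with any other state.

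Distinguishability is immediate. For any two distinct subsets $S_1,S_2\subseteq Q_n$, pick $p\in S_1\triangle S_2$; because $\cD_n$ is minimal there is a word $v$ with $\delta(0,v)=p$, and then reading $v^R$ in $\cR_n$ separates $S_1$ from $S_2$ (the image of $S_i$ contains $0$ iff $p\in S_i$).

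The core of the proof is reachability. The subset $\{0\}$ is reached from $F$ by the single letter $e$, since $\delta_e^{-1}(F)=\{0\}$, and the empty subset is reached from $\{0\}$ by any letter. For the remaining $2^{n-2}-1$ subsets of $\{1,\ldots,n-2\}$, I would work entirely with the letters $a$ and $c$. In $\cR_n$, reading $a$ on a subset $S\subseteq\{1,\ldots,n-2\}$ performs the cyclic left-shift $\alpha$ (because $a$ cyclically permutes these states in $\cD_n$), while reading $c$ performs the map $\gamma$ that fixes positions $1,\ldots,n-3$ and sets the indicator at position $n-2$ equal to the current indicator at position $1$ (because in $\cD_n$ every state $q\in\{2,\ldots,n-3\}$ has $c^{-1}(q)=\{q\}$, whereas $c^{-1}(1)=\{1,n-2\}$). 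The plan is to prove by induction that the $\langle\alpha,\gamma\rangle$-orbit of $F$ is all of $2^{\{1,\ldots,n-2\}}$, in two stages. First, reach the singleton $\{1\}$ from $F$ by an explicit short word whose shape depends on the parity of $n$: alternating $\alpha$ and $\gamma$ strips off elements of $F$ one at a time, since once the bit at position $1$ becomes $0$, the next $\gamma$ clears position $n-2$. Second, from $\{1\}$, build any target $T=\{t_1<\cdots<t_k\}$ by a bit-by-bit procedure that uses $\gamma$ to deposit a $1$ at position $n-2$ and $\alpha$-powers to shift it into each slot $t_i$ in turn.

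The main obstacle is this last induction. Because $\gamma$ \emph{overwrites} (rather than toggles) the bit at position $n-2$, one must schedule the $\alpha$-rotations so that at each moment $\gamma$ is applied, position $1$ carries precisely the bit that should be placed, while the previously set bits have rotated into positions where they will not be disturbed by later operations. I expect the cleanest write-up to present an explicit word of the form $\alpha^{e_0}\gamma\alpha^{e_1}\gamma\cdots\gamma\alpha^{e_k}$ whose exponents are read off from the gaps $t_{i+1}-t_i$, and to verify by a direct invariant that after the $i$-th $\gamma$ the current subset agrees with $T$ on a growing suffix of $\{1,\ldots,n-2\}$, which after the final $\alpha$-block yields $T$ itself.
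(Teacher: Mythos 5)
Your frame is sound: the reverse-NFA construction, the restriction of reachable subsets to $\{0\}$, $\emp$, and subsets of $\{1,\dots,n-2\}$, the distinguishability argument via a word $v$ with $0\delta_v=p$ (which is cleaner and more general than the paper's explicit word $a^{q-1}e$), reaching $\{0\}$ by $e$ and $\emp$ from it, and the identification of the subset dynamics of $a$ and $c$ as the cyclic shift $\alpha$ and the copy map $\gamma$ are all correct, as is your stage~1 (stripping $F$ down to $\{1\}$). The gap is precisely the step you flagged, and it cannot be repaired in the form you propose. In your stage~2 every application of $\gamma$ is a deposit, i.e., is applied when position $1$ carries a $1$; such an application produces a set containing both $1$ and $n-2$, which are adjacent in the cyclic order induced by $\alpha$, and $\alpha$ is a rotation of that cycle, so it preserves the property of containing a cyclically adjacent pair. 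Hence every set reachable from $\{1\}$ by a word $\alpha^{e_0}\gamma\alpha^{e_1}\cdots\gamma\alpha^{e_k}$ whose $\gamma$'s all deposit is either a singleton or contains two cyclically adjacent states. For $n\ge 6$ the target $T=\{1,3\}$ contains no such pair, so no scheduling of the exponents reaches it. The obstruction is not an artifact of the invariant: for $n=7$ the only transition entering the rotation class of $\{1,3\}$ from outside it is an \emph{erasing} application of $c$ at the three-element set $\{2,4,5\}$; concretely $\{1,3,5\}\xrightarrow{a}\{2,4,5\}\xrightarrow{c}\{2,4\}\xrightarrow{a}\{1,3\}$ in $\cR_n$. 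So every path to a two-element target of this shape must overshoot through a strictly larger set and then delete an element; a monotone ``growing suffix'' construction is impossible, and any correct explicit argument must interleave depositing and erasing applications of $\gamma$.

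This is also where your route diverges from the paper's. The paper does not build subsets up element by element from $\{1\}$ at all: it reaches each $S\subseteq\{1,\dots,n-2\}$ in one step from the initial subset $F$ by exhibiting a transformation $\delta_w$ with $\delta_w^{-1}(F)=S$ (one that maps $S$ into $F$ and the remaining states to non-final states), appealing to the supply of transformations guaranteed by Proposition~\ref{prop:sf_wit}(a). Until the reachability of arbitrary subsets of $\{1,\dots,n-2\}$ is established by one of these means, the heart of the theorem is missing from your write-up.
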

\begin{proof}
As before, we construct NFA $\cN$ for $L^R$ -- state $n-1$ is now not reachable -- and determinize $\cN$ to get the DFA $\cR$. We shall show that we can reach $\{0\}$ and all subsets of $Q_n\setminus \{0,n-1\}$.

The initial state of $\cN$ consists of all the odd states in $Q_n\setminus \{0, n-1\}$.
By applying $e$ we reach the set $\{0\}$.

Consider a subset $S \subseteq Q_n\setminus \{0,n-1\}$.
By Proposition~\ref{prop:sf_wit}~(a) we have all transformations mapping $Q\setminus \{0,n-1\}$ into $Q\setminus \{0\}$;
in particular, if $n$ is odd, there is a transformation $\delta_w$ mapping $S$ into the initial subset $\{1,3,\dots,n-4,n-2\}$ and $Q_n \setminus S$ to $\{n-1\}$. Thus $\delta_w^{-1}$ maps $\{1,3,\dots,n-4,n-2\}$ onto $S$.
If $n$ is even, replace $\{1,3,\dots,n-4,n-2\}$ by $\{1,3,\dots,n-5,n-3\}$.

Now we prove that these $2^{n-2}+1$ states (sets of states of $\cD$) are pairwise distinguishable. Set $\{0\}$ is the only final state. Consider two sets  $R,S\subseteq Q_n\setminus \{0, n-1\}$ and suppose that $q\in R\oplus S$.  Then $a^{q-1}e$ is accepted from the set that contains $q$ but not from other set. Therefore there are no equivalent states.
\end{proof}

Although $L_n(a,-,c,-,e)$ meets the bound for the number of atoms, it does not meet the bounds on the quotient complexity of atoms.

\subsection{Complexity of Atoms in Suffix-Free Languages}
\label{sec:left}

Let $Q_n=\{0,\dots,n-1\}$ and let $L$ be a non-empty regular language with quotient set $K = \{K_0,\dotsc,K_{n-1}\}$. 
Let $\cD = (Q_n,\Sig,\delta,0,F)$ be the minimal DFA of $L$ in which the language of state $q$ is $K_q$.

Denote the complement of a language $L$ by $\ol{L} = \Sig^* \setminus L$.
Each subset $S$ of $Q_n$ defines an \emph{atomic intersection} $A_S = \bigcap_{i \in S} K_i \cap \bigcap_{i \in \ol{S}} \ol{K_i}$, where $\ol{S} = Q_n \setminus S$.
An \emph{atom} of $L$ is a non-empty atomic intersection. 
Since atoms are pairwise disjoint,  every atom $A$ has a unique atomic intersection associated with it, and this atomic intersection has a unique subset $S$ of $K$ associated with it. This set $S$ is called the \emph{basis} of $A$.
 
Let $A_S = \bigcap_{i \in S} K_i \cap \bigcap_{i \in \ol{S}} \ol{K_i}$ be an atom. For any $w\in\Sig^*$ we have 
$$w^{-1}A_S = \bigcap_{i \in S}w^{-1} K_i \cap \bigcap_{i \in \ol{S}} \ol{w^{-1} K_i}.$$
Since a quotient of a quotient of $L$ is also a quotient of $L$, $w^{-1}A_S$ has the form:
$$w^{-1} A_S = \bigcap_{i \in X} K_i \cap \bigcap_{i \in Y} \ol{K_i},$$
where $|X| \le |S|$ and $|Y| \le n-|S|$, $X,Y \subseteq Q_n$.

\begin{proposition}
\label{prop:atoms}
Suppose $L$ is a suffix-free language with $n\ge 4$ quotients. Then $L$ has at most $2^{n-2}+1$ atoms. 
Moreover, the complexity $\kappa(A_S)$ of atom $A_S$ satisfies

\begin{equation}
\label{eq:number_of_atoms}
	\kappa(A_S) 
	\begin{cases}
		 \le 2^{n-2}+1,		& \text{if $S=\emp$;} \\

		= n, 			& \text{if $S=\{0\}$;}\\
		 \le 1 + \sum_{x=1}^{|S|}\sum_{y=0}^{n-2-|S|}\binom{n-2}{x}\binom{n-2-x}{y},
		 			& \emp\neq S\subseteq \{1,\ldots,n-2\}.
		\end{cases}
\end{equation}
\end{proposition}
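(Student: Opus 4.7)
The plan splits the proposition into four claims which I would establish in order. Throughout, let $t=\delta_w$ be the transformation induced by $w$; from Lemma~\ref{lem:sf} we have $(n{-}1)t=n{-}1$ and, for $w\neq\varepsilon$, $0\notin Qt$, while every iterate $t^j$ in the semigroup has an aperiodic $0$-path ending in $n{-}1$ and a uniquely reached $0$-image.

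The starting observation is the identity $K_i\cap L=\emptyset$ for every $i\in\{1,\ldots,n-2\}$ with $K_i\neq\emptyset$: any word in $K_i\cap L$ would be a proper suffix of a word in $L$, violating suffix-freeness. This immediately bounds the number of atoms, since $A_S=\emptyset$ whenever $S$ contains both $0$ and some $i\geq 1$, and also whenever $n-1\in S$ (as $K_{n-1}=\emptyset$); the only bases that can yield non-empty atoms are $\{0\}$ and subsets of $\{1,\ldots,n-2\}$, giving the bound $1+2^{n-2}$. The same identity makes the complements $\overline{K_i}$ in the definition of $A_{\{0\}}$ redundant (they are implied by $K_0=L$), so $A_{\{0\}}=L$ and $\kappa(A_{\{0\}})=n$. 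For $S=\emptyset$, a quotient $w^{-1}A_\emptyset=\bigcap_{k\in Qt}\overline{K_k}$ depends only on $Qt\cap\{1,\ldots,n-2\}$ (since $0\notin Qt$ for $w\neq\varepsilon$ and $\overline{K_{n-1}}=\Sigma^*$ is vacuous), yielding at most $2^{n-2}$ non-initial quotients plus the initial one.

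For the main case $\emptyset\neq S\subseteq\{1,\ldots,n-2\}$, the redundancy identity also gives $A_S=\bigcap_{i\in S}K_i\cap\bigcap_{i\in\{1,\ldots,n-2\}\setminus S}\overline{K_i}$, and $w^{-1}A_S$ has the shape $\bigcap_{i\in X}K_i\cap\bigcap_{i\in Y}\overline{K_i}$ with $X=St\cap\{1,\ldots,n-2\}$ and $Y=(\overline{S}t\setminus X)\setminus\{n-1\}$. Non-emptiness forces $St\cap\overline{S}t=\emptyset$ and $n-1\notin St$, so $X$ and $Y$ are disjoint subsets of $\{1,\ldots,n-2\}$ with $1\leq|X|\leq|S|$; the empty quotient is reached precisely when $n-1\in St$, contributing the additive $+1$ in the formula.

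The crux is the bound $|Y|\leq n-2-|S|$, which (using $n{-}1\in\overline{S}t$) is equivalent to $|\overline{S}t|\leq n-1-|S|$, \ie\ to the non-injectivity of $t$ on $\overline{S}$. If $0t=n-1$, the states $0$ and $n-1$ collide and this is immediate. Otherwise $0t\in\{1,\ldots,n-2\}$, and Lemma~\ref{lem:sf}(2) says $0t$ is uniquely reached by $0$; injectivity of $t$ on $\overline{S}$ would then force $X\subseteq\{1,\ldots,n-2\}\setminus\overline{S}t$, a set of size $|S|-1$, and hence a collision $k_1t=k_2t=m$ inside $S$. This collision propagates as $k_1t^j=k_2t^j=mt^{j-1}$, so Lemma~\ref{lem:sf}(4) applied to each iterate $t^j$ (which lies in the semigroup and therefore in $\Bsf(n)$) requires $0t^j\neq mt^{j-1}$ until $0t^j$ reaches $n-1$; combined with the fact that the $0$-path eventually enters the basin of $m$ (since every state of $S$ does), this pins the $0$-path into the $m$-orbit prematurely, contradicting aperiodicity. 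Summing $\binom{n-2}{x}\binom{n-2-x}{y}$ over $1\leq x\leq|S|$ and $0\leq y\leq n-2-|S|$ then counts all admissible pairs $(X,Y)$, with the pair $(S,\{1,\ldots,n-2\}\setminus S)$ representing the initial quotient $A_S$, and adding $1$ for the empty quotient yields the stated upper bound. The main obstacle I anticipate is exactly this iteration argument ruling out $\overline{S}t$-injectivity in the case $0t\neq n-1$, which I expect to require a careful orbit-structure analysis under successive powers of $t$ rather than a one-shot computation.
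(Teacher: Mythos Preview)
Your treatment of the atom count and of the cases $S=\emptyset$ and $S=\{0\}$ is fine and matches the paper. The trouble is in the remaining case, specifically the sub-case $0t\neq n-1$.

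There you assume, towards a contradiction, that $t$ is injective on $\overline{S}$, deduce a collision $k_1t=k_2t=m$ inside $S$, and then try to obtain a contradiction by an orbit argument: ``the $0$-path eventually enters the basin of $m$ (since every state of $S$ does)''. Neither parenthetical claim is justified. A single collision $k_1t=k_2t=m$ says nothing about the other states of $S$, and certainly nothing about the $0$-path, since $0\notin S$. The $\Bsf$-condition $0t^j\neq qt^j$ only prevents $0t^j$ from coinciding with the image of a \emph{middle} state under the same power; it does not force the $0$-path into the forward orbit of $m$, so there is no premature periodicity to contradict. As written, this step does not go through.

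The paper's fix is a one-liner you overlooked: since the $0$-path of $t$ is aperiodic and ends in $n-1$ (Lemma~\ref{lem:sf}(4)), there is a state $p\in\{1,\dots,n-2\}$ on that path with $pt=n-1$. If $p\in S$ then $n-1\in St$ and the quotient is empty (already counted by the $+1$); if $p\in\overline{S}$ then $p$ and $n-1$ are distinct elements of $\overline{S}$ with the same image, so $t$ is not injective on $\overline{S}$, giving $|Y|\le n-2-|S|$ directly.

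There is an even shorter route that you set up yourself but did not exploit. You correctly simplified
\[
A_S=\bigcap_{i\in S}K_i\cap\bigcap_{i\in\{1,\ldots,n-2\}\setminus S}\overline{K_i},
\]
but then computed $w^{-1}A_S$ using the full $\overline{S}$ again. If instead you take quotients of the simplified expression, the complemented part is governed by $(\{1,\ldots,n-2\}\setminus S)t$, a set of size at most $n-2-|S|$ \emph{trivially}; after dropping $n-1$ you get $Y\subseteq\{1,\ldots,n-2\}\setminus X$ with $|Y|\le n-2-|S|$ with no further argument needed. Either this observation or the paper's $0$-path step replaces your iteration argument entirely.
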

\begin{proof}
If ${n-1}\in S$, then $A_S$ is not an atom. Also,
$K_0\cap K_i=\emp$ for $i=1,\dots, n-1$, since $L$ is suffix-free. Hence $0\in S$ implies $S=\{0\}$.
It follows that there are at most  $2^{n-2}+1$ atoms.

For atom complexity, consider the following cases:

\be
\item
$S=\emp$. Here $\ol{S}=Q_n$, $A_\emp = \bigcap_{i \in Q_n} \ol{K_i}$
and for $w\in \Sig^+$, $w^{-1}A_S = \bigcap_{i \in Y} \ol{K_i},$ where $0\notin Y$, since $w^{-1}L =L$ only if $w=\eps$, and $n-1$ is always in $Y$ since $w^{-1}K_{n-1}=K_{n-1}$ because $K_{n-1}=\emp$.
 Thus we have the initial quotient $A_\emp$  with $0\in \ol{S}$ and at most $2^{n-2}$ choices for the other quotients of $A_S$. Hence the complexity of $A_{\emp}$ is at most $2^{n-2}+1$.
\item
$S=\{0\}$.
Since the language is suffix-free, we have that $K_0 \cap K_q = \emptyset$ for all $q \in Q \setminus \{0\}$.
Thus $A_0 = K_0 \cap \ol(K_1) \cap \ldots \cap \ol(K_{n-1}) = K_0$, and $K_0 = L_n$ has $n$ quotients.
\item
Since $n-1$ always appears in $\ol{S}$ and the $S = \{0\}$ case is done, there remain the cases where $\emp \ne S \subseteq Q_n \setminus \{0, n-1\}$.
Suppose $(X,Y \cup \{n-1\})$ represents a quotient of $A_S$ by a non-empty word $w$.
If $0 \delta_w = n-1$, then $X$ must have at least one and at most $|S|$ elements from $Q_n \setminus \{0, n-1\}$. Since $0$ appears only initially, $Y$ cannot contain $0$. So in $Y$ there must be from $0$ to $n-2-|X|$ states from $Q_n \setminus (\{0,n-1\} \cup X)$. Hence we have the formula from Equation~\ref{eq:number_of_atoms}.
Suppose that $0 \delta_w \neq n-1$. The $0$-path in $\delta_w$ is aperiodic and ends in $0$, so there exists a state $p \in Q_M$ such that $p \delta_w = n-1$. If $p \in S$, then $n-1 \in X$, and so $(X,Y)$ represents the empty quotient.
If $p \in Q \setminus S$, then again $Y \setminus \{n-2,n-1\}$ contains at most $n-2-|S|$ states.
\ee
\end{proof}

Following Iv\'an~\cite{Iva15} we define a DFA for each atom:

\begin{definition}
Suppose $\cD=(Q,\Sig,\delta, q_0, F)$ is a DFA and let $S \subseteq Q$.
Define the DFA $\cD_S = (Q_S,\Sig,\Delta,(S,\ol{S}),F_S)$, where
\bi
\item
$Q_S = \{(X,Y) \mid X,Y \subseteq Q, X \cap Y = \emp\} \cup \{\bot\}$.
\item
For all $a \in \Sig$, $(X,Y)a = (Xa,Ya)$ if $Xa \cap Ya = \emp$, and $(X,Y)a = \bot$ otherwise; and $\bot a = \bot$.
\item
$F_S = \{(X,Y) \mid X\subseteq F, Y \subseteq \ol{F}\}$. 
\ei
\end{definition}
DFA $\cD_S$ recognizes the atomic intersection $A_S$ of $L$. If $\cD_S$ recognizes a non-empty language, 
then $A_S$ is an atom.

\begin{theorem}
For $n\ge 4$, the language $L_n(\cD_n(a,b,c,d,e))$ of Definition~\ref{def:D6} meets the bounds of Proposition~\ref{prop:atoms} for the atoms. 
\end{theorem}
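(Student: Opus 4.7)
The plan is to analyze the three cases for the basis $S$ separately using the DFA $\cD_S$ defined above. Throughout, I exploit Proposition~\ref{prop:sf_wit}, which states that $T_n$ contains all $(n-1)^{n-2}$ transformations that fix $n-1$, send $0$ to $n-1$, and restrict to arbitrary functions $\{1,\dots,n-2\}\to\{1,\dots,n-1\}$ on the middle states; I call these \emph{type-(a)} transformations. The case $S=\{0\}$ is immediate: by Lemma~\ref{lem:sf}(2) applied to suffix-free $L_n$, one has $K_0\cap K_i=\emp$ for every $i\in\{1,\dots,n-1\}$, so $A_{\{0\}}=K_0=L_n$, giving $\kappa(A_{\{0\}})=n$.

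For $S=\emp$, I first observe that for any nonempty word $w$ the quotient $w^{-1}A_\emp$ has the form $\bigcap_{i\in Y}\ol{K_i}$ with $Y=Q_n\delta_w$ containing $n-1$ but not $0$, yielding at most $2^{n-2}$ non-initial quotients. The atom $A_\emp$ is nonempty because $de$ sends every state to $n-1$. To reach a prescribed $Y\supseteq\{n-1\}$ with $Y\setminus\{n-1\}\subseteq\{1,\dots,n-2\}$, I pick any surjection $\{1,\dots,n-2\}\twoheadrightarrow Y$ and take the corresponding type-(a) transformation. For distinguishability of $Y_1\neq Y_2$, pick $q\in Y_2\setminus Y_1\subseteq\{1,\dots,n-2\}$ (without loss of generality) and build a type-(a) $\delta_v$ with $\delta_v(q)=1$ and $\delta_v(r)=2$ for every $r\in Y_1\cap\{1,\dots,n-2\}$; since $1\in F$ and $2\notin F$, the word $v$ lies in $\bigcap_{i\in Y_1}\ol{K_i}$ but not in $\bigcap_{i\in Y_2}\ol{K_i}$, giving $2^{n-2}+1$ pairwise distinguishable quotients in all.

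For $\emp\neq S\subseteq\{1,\dots,n-2\}$ I would proceed analogously in $\cD_S$. Nonemptiness of $A_S$ follows from the type-(a) transformation sending $S$ to $\{1\}$ and $\{1,\dots,n-2\}\setminus S$ to $\{2\}$. For reachability of each target $(X,Y)$ with $X\subseteq\{1,\dots,n-2\}$, $n-1\in Y$, $X\cap Y=\emp$, $1\le|X|\le|S|$, and $|Y\setminus\{n-1\}|\le n-2-|S|$, I pick any surjection $S\twoheadrightarrow X$ together with a map $\{1,\dots,n-2\}\setminus S\to Y$ whose image is exactly $Y\setminus\{n-1\}$ (sending excess elements to $n-1$). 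The size constraints guarantee these choices exist, the resulting type-(a) transformation lies in $T_n$, and $(S,\ol{S})\delta_w=(X,Y)$; counting reachable pairs plus the initial state gives the bound in Proposition~\ref{prop:atoms}.

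The main obstacle is the distinguishability step in this third case. Given distinct reachable non-$\bot$ pairs $(X_1,Y_1)\neq(X_2,Y_2)$, I pick a witness element $q\in(X_1\triangle X_2)\cup(Y_1\triangle Y_2)$ and proceed by case analysis. When $q\in X_1$ and $q\in Y_2$ a single type-(a) $\delta_v$ with $\delta_v(r)=1$ for $r\in X_1$ and $\delta_v(r)=2$ for $r\in Y_1\cap\{1,\dots,n-2\}$ already yields the desired distinction. The delicate subcase is when $q\in X_1\setminus X_2$ is a ``don't care'' for $(X_2,Y_2)$: here I either exploit a second witness in $(X_2\cap Y_1)\cup(Y_2\cap X_1)$, which directly conflicts with $(X_1,Y_1)$'s requirements, or, when both these sets are empty so that $X_2\subseteq X_1$ and $Y_2\subseteq Y_1$, I set $\delta_v(q)=2$, $\delta_v(X_1\setminus\{q\})=\{1\}$, and $\delta_v(Y_1\cap\{1,\dots,n-2\})=\{2\}$, which makes $(X_2,Y_2)$ accept while $(X_1,Y_1)$ rejects. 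In every case the required $\delta_v$ is type-(a) and hence in $T_n$, completing the proof.
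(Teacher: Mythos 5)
Your overall architecture mirrors the paper's proof: the same three-case split on $S$, reachability via the transformations of Proposition~\ref{prop:sf_wit}(a), and distinguishability by driving states to final versus non-final configurations. The cases $S=\{0\}$ and the reachability arguments are fine. However, the distinguishability argument for $\emp\neq S\subseteq\{1,\dots,n-2\}$ has a genuine gap. In your final subcase you assert that $X_2\cap Y_1=\emp$ and $Y_2\cap X_1=\emp$ force $X_2\subseteq X_1$ and $Y_2\subseteq Y_1$; this implication is false. For instance, for $n\ge 8$ and $|S|=2$ the states $(\{1\},\{2,n-1\})$ and $(\{3\},\{4,n-1\})$ are both reachable, the sets $(X_2\cap Y_1)\cup(Y_2\cap X_1)$ are empty, neither containment holds in either direction, and your fallback $\delta_v$ -- specified only on $X_1\cup Y_1$ -- leaves the images of $X_2$ and $Y_2$ completely unconstrained, so nothing guarantees that $(X_2,Y_2)\delta_v$ is final. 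The paper's proof avoids this by exploiting that $n-1$ always lies in the second component: given $q\in X_2\setminus X_1$, it maps $X_1$ to $\{1\}$ and \emph{all} of $Q_n\setminus X_1$ to $n-1$, so that $(X_1,Y_1)$ reaches the final state $(\{1\},\{n-1\})$ while $(X_2,Y_2)$ acquires $n-1$ in both components and collapses to $\bot$. (Your construction can be repaired by instead pinning $\delta_v$ down on $X_2\cup Y_2$ to force acceptance of the second state and sending the don't-care witness $q\in X_1\setminus(X_2\cup Y_2)$ to a non-final state such as $2$; this assignment is consistent precisely because $q\notin X_2\cup Y_2$.)

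There is also a smaller omission in the case $S=\emp$: the initial state $(\emp,Q_n)$ must be distinguished from the reachable state $(\emp,\{1,\dots,n-2\}\cup\{n-1\})$, and these differ only in the index $0$. Every type-(a) transformation sends $0$ to $n-1$, so your scheme, which requires a witness $q\in Y_2\setminus Y_1$ lying in $\{1,\dots,n-2\}$, cannot separate this pair; without it only $2^{n-2}$ distinct quotients of $A_\emp$ are certified and the bound $2^{n-2}+1$ is not met. The paper handles this with the letter $e$, which maps $0$ to the final state $1$ and all other states to $n-1$, making $(\emp,Q_n)e$ non-final while $(\emp,Y)e$ is final for every reachable $Y$ with $0\notin Y$.
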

\begin{proof}
There are three cases to consider:
\be
\item $S=\emp$. The initial state of $\cD_S$ is $(\emp,Q_n)$. Since $(\emp,Q_n)ed=
(\emp,\{n-1\})$, which is a final state, $A_\emp$ is an atom. For any non-empty word $w$, we have a quotient of $A_\emp$ represented by $(\emp,Y)$, where $0\notin Y$, $n-1\in  Y$, and $Y\setminus \{n-1\}$ is a subset of $Q_n\setminus \{0,n-1\}$.
By Proposition~\ref{prop:sf_wit} the transition semigroup  of $\cD$ contains all transformations 
that send $0$ and $n-1$ to $n-1$ and map 
$Q_n\setminus \{0,n-1\}$ into $Q_n\setminus \{0\}$; hence
all $2^{n-2}$ quotients of $A_\emp$ by non-empty words are reachable.
Together with $A_\emp$ we have $2^{n-2}+1$ quotients of $A_\emp$.

Since $(\emp, Q_n) e = (\emp, \{1, n-1\})$, which is non-final,  and 
$(\emp,Y) e =(\emp, \{n-1\})$, which is final, $(\emp,Y)$ is distinguished from the initial state.
We claim also that all pairs $(\emp,Y)$ and $(\emp,Y')$, $Y\neq Y'$, are distinguishable. 
 Without loss of generality, suppose that $q\in Y\setminus Y'$. If we map $q$ to $1$ and $(Y\cup Y')\setminus \{1\}$ to $n-1$ by a word $w$, 
then $(\emp,Y)w =(\emp,\{1,n-1\})$, which is non-final, and 
$(\emp,Y')w=(\emp,\{n-1\})$, which is final.
Hence the bound $2^{n-2}+1$ is met.

\item  $S=\{0\}$. The initial state of $\cD_S$ is $(\{0\},Q_n\setminus \{0\})$.
We have $(\{0\},Q_n\setminus \{0\})x=\bot$ for all $x\in\Sig\setminus\{e\}$, 
and $(\{0\},Q_n\setminus \{0\})e=(\{1\},\{n-1\})$.
From $(\{1\},\{n-1\})$ we can reach any state $(\{q\},\{n-1\})$ with $q\in Q_n \setminus \{0,n-1\}$. 
Thus we can reach $n$ states altogether: the initial state, $n-2$ states $(\{q\},\{n-1\})$ with $q\in Q_n \setminus \{0,n-1\}$, and $\bot$.

Since $(\{0\},Q_n\setminus \{0\})e=(\{1\},\{n-1\})$, which is final,  the initial state is non-empty and so differs from $\bot$.
For $q \in Q_n \setminus \{0,n-1\}$,  consider the word $w$ that maps $q$ to $1$ and $Q_n\setminus \{0, q\}$ to $\{n-1\}$.
Then $(\{ q \},\{n-1\}) w = (\{1\},\{n-1\})$, which is final, while 
$(\{0\},Q_n\setminus \{0\})w=(\{n-1\},Y) $, which is non-final since $Y$ contains $n-1$.
Hence $(\{ q \},\{n-1\})$ is distinguishable from the initial state and $\bot$.
Finally,  for $p,q \in Q_n \setminus \{0,n-1\}$,  $(\{ p \},\{n-1\})$ is distinguishable from  $(\{ q \},\{n-1\})$ by the mapping that sends $p$ to $1$ and $q$ to $n-1$.
Hence the bound $n$ is met.

\item $\emp\neq S \subseteq Q_n\setminus\{0,n-1\}$.
The initial state here is $(S,\ol{S})$, where $ \emp\neq S\subseteq \{1,\ldots,n-2\}$.
Since we can map $S$ to $\{1\}$ and $\ol{S}$ to $\{n-1\}$, each such intersection is an atom.
If $M=\{1,\dots,n-2\}$, the atom has the form
$(S,\ol{S}) = (S,(M\setminus S)\cup\{0, n-1\})$.
By applying $e$, we get $\bot$.
To get a state of $A_S$ different from $\bot$, we can map $S$ to any non-empty subset $X$ of $M$ of cardinality 
$|X|$, where $1 \le |X| \le |S|$;  we can map $M\setminus S$ to any subset
$Y$ of $(M\cup \{n-1\})\setminus X$ of cardinality $|Y|$, where $0 \le |Y| \le |M|-|X|$; and we can map $0$ and $n-1$ to $n-1$.
Each such intersection $(X,Y\cup \{n-1\})$ represents a non-empty quotient of $A_S$ by a non-empty word because we can map $X$ to $\{1\}$ and $Y$ to $\{n-1\}$.
Hence we have the formula in Equation~\ref{eq:number_of_atoms} for the number of reachable states of $\cD_S$.

Now consider two states $(X,Y\cup \{n-1\})$ and $(X',Y'\cup \{n-1\})$, where one of these two states could be the initial one.
If $X\neq X'$, assume that $q\in X'\setminus X$.
Let $w$ map $X$ to $\{1\}$ and $Q_n\setminus X$ to $n-1$.
Then  $(X,Y\cup \{n-1\}) w =(\{1\},\{n-1\})$, while 
the first component of $(X',Y'\cup \{n-1\}) w $ contains $n-1$, meaning that that state is $\bot$; so these states are distinguishable. 
A~symmetric argument applies if $q\in X\setminus X'$.
 
This leaves the case where $X=X'$. Then $X$ is disjoint from both $Y$ and $Y'$.
If $q\in Y' \setminus Y$, let $w$ map $X$ to $\{1\}$,  
$Y\cup \{0\}$ to $\{n-1\}$, and $Y'\setminus Y$ to $\{1\}$. 

Then $(X,Y\cup \{n-1\}) w =(\{1\},\{n-1\})$, while $(X,Y'\cup \{n-1\}) w =(\{1\},\{1,n-1\})=\bot$, and we have distinguishablity.
A~symmetric argument applies if $q\in Y\setminus Y'$.
\ee
\end{proof}

\begin{theorem}
The transition semigroup $T_n$ of a minimal DFA $\cD_n(Q_n,\Sig,\delta,0,F)$ with $n \ge 4$ of a suffix-free language $L_n$ that meets the bounds for atom complexities from Proposition~\ref{prop:atoms} is a subsemigroup of $\Wsf(n)$ and is not a subsemigroup of $\Vsf(n)$.
\end{theorem}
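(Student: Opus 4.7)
The plan is to prove the two claims separately. The inclusion $T_n\subseteq\Wsf(n)$ is essentially free: by the Brzozowski--Tamm result cited earlier in the paper (the number of atoms of $L$ equals $\kappa(L^R)$), the hypothesis that the bound $2^{n-2}+1$ on the number of atoms is attained forces $\kappa(L_n^R)=2^{n-2}+1$, so the reversal bound is met, and Theorem~\ref{thm:reversal_witness_subsemigroup} then yields $T_n\subseteq\Wsf(n)$ immediately.

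For the non-inclusion $T_n\not\subseteq\Vsf(n)$ I would argue by contradiction, supposing $T_n\subseteq\Vsf(n)\cap\Wsf(n)$. I would first establish a small structural lemma: every $t\in\Vsf(n)\cap\Wsf(n)$ falls into one of two disjoint classes, namely (a) $0t=n-1$, in which case $t$ restricted to $\{1,\dots,n-2\}$ is injective off $n-1$ (distinct states have distinct images unless both are sent to $n-1$), or (b) $0t\in\{1,\dots,n-2\}$, in which case $t$ sends every state of $\{1,\dots,n-2\}$ to $n-1$. Fix any $S$ with $2\le|S|\le n-2$ (possible because $n\ge 4$) and consider the DFA $\cD_S$ for atom $A_S$. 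For any word $w$ whose induced transformation $t_w$ takes $(S,\ol{S})$ to a reachable non-$\bot$ state $(X,Y)$ with $X\ne\emp$, no letter of $w$ may be of type (b): after such a letter all of $S$ would be contained in $\{n-1\}$, and since $n-1\in\ol{S}$ is fixed by every letter, the resulting pair would contain $n-1$ in both coordinates and be $\bot$. So $t_w$ is of type (a), and by the same $X\cap Y=\emp$ reasoning no element of $S$ can be sent to $n-1$ under $t_w$, forcing $|X|=|S|$.

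Consequently, under the assumption $T_n\subseteq\Vsf(n)$ every reachable non-initial, non-$\bot$ state $(X,Y)$ of $\cD_S$ satisfies $|X|=|S|$, whereas the bound of Proposition~\ref{prop:atoms} counts states with $1\le x\le|S|$; for $|S|\ge 2$ the $x=1$ contribution alone accounts for at least $\binom{n-2}{1}=n-2\ge 2$ states that are unreachable under $\Vsf\cap\Wsf$ (each combined with at least $Y=\emp$, which is allowed in the bound). Hence the achievable count is strictly less than the bound, contradicting the hypothesis that $L_n$ meets all atom-complexity bounds. The main delicate step I anticipate is the structural classification of $\Vsf(n)\cap\Wsf(n)$: one must combine the ``no focusing off $n-1$'' condition of $\Vsf$ with the ``collapse or fix $0$'' dichotomy of $\Wsf$, and verify that the two cases really partition the intersection before the $\bot$-avoidance argument can be applied.
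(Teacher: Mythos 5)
Your proof is correct, but it is organized differently from the paper's, which dispatches both claims with a single observation: for each pair $p,q\in\{1,\dots,n-2\}$, meeting the bound on $\kappa(A_{\{p,q\}})$ forces every state $(X,Y\cup\{n-1\})$ with $|X|=1$ to be a realized quotient, so some $t\in T_n$ focuses $\{p,q\}$; a focusing transformation is excluded from $\Vsf(n)$ by definition, and since a focused pair cannot collide, no pair collides, which is precisely the membership condition for $\Wsf(n)$. Your argument for $T_n\not\subseteq\Vsf(n)$ is at bottom the same idea --- transformations in $\Vsf(n)$ are injective off $n-1$, so every reachable non-$\bot$ state of $\cD_S$ has $|X|=|S|$, leaving the $|X|=1$ states required by the bound unreachable --- merely recast as a counting contradiction; the structural lemma on $\Vsf(n)\cap\Wsf(n)$ is unnecessary, since membership in $\Vsf(n)$ alone gives the injectivity you use, and the fact that $\Vsf(n)$ is a semigroup gives it for $t_w$ and not just for single letters. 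Where you genuinely diverge is the inclusion $T_n\subseteq\Wsf(n)$: routing it through the identity (number of atoms) $=\kappa(L_n^R)$ and Theorem~\ref{thm:reversal_witness_subsemigroup} is a legitimate and tidier modularization, though it requires reading the hypothesis as including the bound $2^{n-2}+1$ on the \emph{number} of atoms, i.e., that all atomic intersections $A_S$ with $S\subseteq\{1,\dots,n-2\}$ are non-empty; the paper's own proof makes the analogous (slightly weaker) presupposition that each $A_{\{p,q\}}$ is an atom, so this does not constitute a gap relative to the paper's standard of rigour.
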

\begin{proof}
It is sufficient to show that every pair of states $p,q \in \{1,\ldots,n-2\}$ is focused by some transformation from $T_n$.

Let $p,q \in \{1,\ldots,n-2\}$ be two distinct states.
Consider the atom $A_{\{p,q\}}$.
From the proof of Proposition~\ref{prop:atoms}, to meet the bound for the quotient complexity of $A_{\{p,q\}}$ every pair $(X,Y \cup \{n-1\})$ must represent a quotient of the atom by a non-empty word $w$, where $X,Y \subseteq \{1,\ldots,n-2\}$ are disjoint, $1 \le |X| \le 2$, and $|Y| \le n-2-|X|$.
In particular, when $|X|=1$ we know that the transformation of $w$ maps both $p$ and $q$ to the state from $X$, and so $p$ and $q$ are focused.
\end{proof}

\begin{remark}
The complexity of atoms in left ideals~\cite{BrDa15} is 
\begin{equation}
	\kappa(A_S) 
	\begin{cases}
		= n, 			& \text{if $S=Q_n$;}\\
		\le 2^{n-1},		& \text{if $S=\emp$;}\\
		 \le 1 + \sum_{x=1}^{|S|}\sum_{y=1}^{n-|S|}\binom{n-1}{x}\binom{n-1-x}{y-1},
		 			& \text{otherwise.}
		\end{cases}
\end{equation}
The formula for $S\not\in \{\emp, Q_n\}$ evaluated for  $n-1$  and $S\subseteq \{1,\dots,n-2\}$ becomes
$$1 + \sum_{x=1}^{|S|}\sum_{y=1}^{n-2-|S|}\binom{n-2}{x}\binom{n-2-x}{y-1},$$
which is precisely the formula for suffix-free languages.\qedb
\end{remark}

Tables~\ref{tab:atom_comp1} and~\ref{tab:atom_comp2} show the quotient complexities of atoms for small $n$.

\renewcommand{\arraystretch}{1.3}
\begin{table}[ht]
\caption{Suffix-free atom complexity. The entries from left to right are suffix-free language, left ideal, regular language. The $\ast$ stands for ``not applicable''.}\label{tab:atom_comp1}
\footnotesize
\begin{center}
$
\begin{array}{|c|c|c|c|c|c|c|}
\hline
\ n\ & 1\ &\ 2\ &\ 3\ &\ 4\ &\ 5\ &\ \cdots\\
\hline
\hline
|S|=0 & \ast/{\bf 1}/{\bf 1} &\ast/{\bf 2}/{\bf 3} &\ast/4/7 & 5 / 8/15 & 9/16/31 & \cdots\\
\hline\hline
|S|=1 & \ast /{\bf 1}/{\bf 1} &\ast /{\bf 2}/{\bf 3} &\ast /{\bf 5}/{\bf 10} &{\bf 5}/13/29 & 13/33/76 & \cdots\\
\hline
|S|=2 &  & \ast /{\bf 2}/{\bf 3} &\ast /4/{\bf 10} &4/{\bf 16}/{\bf 43} &{\bf 16} /{\bf 53}/{\bf 141} &\cdots\\
\hline
|S|=3 &  &  & \ast /3/7 & \ast/8/29 & 8/ 43/{\bf 141} &\cdots\\
\hline
|S|=4 &  &  &  & \ast/4/15 & \ast/16/76 & \cdots\\
\hline
|S|=5 &  &  &  &  & \ast/5/31 & \cdots\\
\hline
\textit{max} & \ast/1/1 & \ast/2/3 & \ast/5/10 & 5/16/43 & 16/53/141 & \cdots\\
\hline
\textit{ratio} & - & \ast/2.00/3.00 & \ast/2.50/3.33 & \ast/3.20/4.30 & 3.20/3.31/3.28 & \cdots\\
\hline
\end{array}
$
\end{center}
\end{table}
\renewcommand{\arraystretch}{1.3}
\begin{table}[ht]
\caption{Suffix-free atom complexity continued.}
\label{tab:atom_comp2}
\footnotesize
\begin{center}
$
\begin{array}{|c|c|c|c|c|}
\hline
\hline
\ n\ & 6\ &\ 7\ &\ 8\ &\ 9\ \\
\hline
\hline
|S|=0 & 17/32/63 &33/ 64/127 & 65/ 128/255 & 129/ 256/511 \\
\hline\hline
|S|=1 & 33/81/187 & 81/193/442 & 193/449/1,017 & 449/1,025/2,296 \\
\hline
|S|=2 & {\bf 53}/ 156/406 & 156/ 427/1,086 & 427/ 1,114/2,773 & 1,114/ 2,809/6,859 \\
\hline
|S|=3 & 43/ {\bf 166}/{\bf 501} &{\bf 166}/ {\bf 542}/{\bf 1,548} &{\bf 542}/ 1,611/4,425 & 1,611/ 4,517/12,043 \\
\hline
|S|=4 & 16/ 106/406 & 106/ 462/{\bf 1,548} &462/ {\bf 1,646}/{\bf 5,083} &{\bf 1,646}/ {\bf 5,245}/{\bf 15,361} \\
\hline
|S|=5 & \ast/32/187 & 32/ 249/1,086 & 249/ 1,205/4,425 & 1,205/ 4,643/{\bf 15,361} \\ 
\hline
|S|=6 & \ast/6/63 & \ast /64/442 & 64/ 568/2,773 & 568/ 3,019/12,043 \\
\hline
|S|=7 &  & \ast/7/127 & \ast/128/1,017 & 128/1,271/6,859 \\
\hline
|S|=8 &  &  & \ast/8/255 & \ast /256/2,296 \\
\hline
|S|=9 &  &  &  & \ast/9/511 \\
\hline
\textit{max} & 53/166/501 & 166/542/1,548 & 542/1,646/5,083 & 1,646/5,245/15,361 \\
\hline
\textit{ratio} &  3.31/3.13/3.55 & 3.13/3.27/3.09 & 3.27/3.04/3.28 & 3.04/3.19/3.02 \\
\hline
\end{array}
$
\end{center}
\end{table}

\section{Conclusions}
It may appear that the semigroup $\Vsf(n)$ should not be of great importance, since it exceeds $\Wsf(n)$ only for $n=4$ and $n=5$, and therefore should not matter when $n$ is large. However, our results show that this is not the case. 
We conclude with our result about the non-existence of single universal suffix-free witness.

\begin{theorem}
For $n\ge 4$, there does not exist a most complex suffix-free language.
\end{theorem}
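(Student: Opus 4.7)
The plan is to derive an immediate contradiction from the mutually incompatible structural conditions that the star bound and the reversal bound each impose on the transition semigroup of a witness. By definition, a most complex suffix-free stream $(L_n)_{n\ge k}$ would, for each $n$, consist of a single language $L_n$ whose minimal DFA has a transition semigroup $T_n$ meeting simultaneously every complexity bound, in particular both the star bound $2^{n-2}+1$ and the reversal bound $2^{n-2}+1$.

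First I would invoke the argument used in the proof of Theorem~\ref{thm:star_witness_subsemigroup}: meeting the star bound forces every pair of distinct states $p,q\in\{1,\ldots,n-2\}$ to be colliding in $T_n$, i.e.\ there must exist $t\in T_n$ with $0t=p$ and $rt=q$ for some $r\in\{1,\ldots,n-2\}$. (Equivalently, $T_n\subseteq\Vsf(n)$, which was characterised earlier as the unique maximal suffix-free semigroup in which every pair of states is colliding.) Then I would invoke the argument used in the proof of Theorem~\ref{thm:reversal_witness_subsemigroup}: meeting the reversal bound forces exactly the opposite condition, that no pair $\{p,q\}$ with $p,q\in\{1,\ldots,n-2\}$ is colliding in $T_n$, because otherwise the subset $\{p,q\}$ cannot be reached in the reverse subset construction without contradicting the suffix-freeness of $L_n$.

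For $n\ge 4$ the set $\{1,\ldots,n-2\}$ contains at least the two distinct states $1$ and $2$, so the two requirements on $T_n$ are flatly contradictory already on the pair $\{1,2\}$. Hence no single suffix-free language $L_n$ can meet both bounds simultaneously, and therefore no most complex stream of suffix-free languages can exist. There is no real obstacle in the argument itself; the entire difficulty was absorbed into the two subsemigroup theorems of Sections~3 and~4, which pin the star witness and the reversal witness to disjoint structural classes of transition semigroups.
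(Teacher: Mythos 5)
Your proposal is correct and follows essentially the same route as the paper: both derive the contradiction by combining Theorem~\ref{thm:star_witness_subsemigroup} (the star bound forces all pairs in $\{1,\ldots,n-2\}$ to be colliding, so $T_n\subseteq\Vsf(n)$ and $T_n\not\subseteq\Wsf(n)$) with Theorem~\ref{thm:reversal_witness_subsemigroup} (the reversal bound forces $T_n\subseteq\Wsf(n)$, i.e.\ no colliding pairs). Your explicit remark that $n\ge 4$ guarantees at least one pair $\{1,2\}$ on which the two requirements clash is a correct unpacking of why the two subsemigroup conditions are incompatible.
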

\begin{proof}
If there exists a most complex DFA $\cD(Q_n,\Sig,\delta,0,F)$, then in particular it would have to meet the bound for reversal and star.
From Theorem~\ref{thm:reversal_witness_subsemigroup} the transition semigroup of $\cD$ must be a subsemigroup of $\Wsf(n)$, but from  Theorem~\ref{thm:star_witness_subsemigroup} we know that it cannot be a subsemigroup of $\Wsf(n)$.
Consequently no most complex suffix-free language exists for $n \ge 4$.
\end{proof}

The first four studies of most complex languages were done for the classes of regular languages~\cite{Brz13}, right ideals~\cite{BrDa14,BrDa15,BrSiJALC}, left ideals~\cite{BrDa15,BDL17,BrSiJALC}, and two-sided ideals~\cite{BrDa15,BDL17,BrSiJALC}. In those cases there exists a single witness over a minimal alphabet which, together with its permutational dialects, covers all the complexity measures. In the case of suffix-free languages such a witness does not exist. Our study is an example of a general problem: Given a class of regular languages, find the smallest set of witnesses over minimal alphabets that together cover all the measures. 
The witness of Definition~\ref{def:D5} is conjectured to be over a minimal alphabet, unless the  bound for product can be met by binary DFAs for every $n,m > c$, for some $c$; this is an open problem.
The witness of Definition~\ref{def:D6} is over a minimal alphabet, since five letters are required to meet the bound for syntactic complexity.
\medskip

\noin
{\bf Acknowledgments}

\noin This work was supported by the Natural Sciences and Engineering Research Council of Canada (NSERC)
grant No.~OGP000087, and by the National Science Centre, Poland under grant number 2014/15/B/ST6/00615.

\noin We are grateful to Sylvie Davies for careful proofreading.

\bibliographystyle{plain}
\providecommand{\noopsort}[1]{}

\end{document}